%% file: main.tex
\documentclass[acmsmall,nonacm]{acmart}

\allowdisplaybreaks

\renewcommand\footnotetextcopyrightpermission[1]{} % no permission block
\setcopyright{none}

\makeatletter
\renewcommand\authorsaddresses{}
\gdef\@authorsaddresses{}
\makeatother

\AtBeginDocument{%
  }

\input{admin/preamble}

\begin{document}

%%
%% The "title" command has an optional parameter,
%% allowing the author to define a "short title" to be used in page headers.
\title{Selectivity Estimation for Linear Queries via Online Learning}

%%
%% The "author" command and its associated commands are used to define
%% the authors and their affiliations.
%% Of note is the shared affiliation of the first two authors, and the
%% "authornote" and "authornotemark" commands
%% used to denote shared contribution to the research.

\author{Fangzhu Shen}
\email{fangzhu.shen@duke.edu}
\orcid{0009-0007-3820-5991}
\affiliation{%
 \institution{Duke University}
 \city{Durham}
 \state{North Carolina}
 \country{USA}}

\author{Debmalya Panigrahi}
\email{debmalya@cs.duke.edu}
\orcid{0000-0003-1799-6660}
\affiliation{%
 \institution{Duke University}
 \city{Durham}
 \state{North Carolina}
 \country{USA}}

\author{Sudeepa Roy}
\email{sudeepa@cs.duke.edu}
\orcid{0009-0002-8300-7891}
\affiliation{%
 \institution{Duke University}
 \city{Durham}
 \state{North Carolina}
 \country{USA}}

%\renewcommand{\shortauthors}{}

%%
%% The code below is generated by the tool at http://dl.acm.org/ccs.cfm.
%% Please copy and paste the code instead of the example below.
%%
\cut{
\begin{CCSXML}
<ccs2012>
   <concept>
       <concept_id>10002951.10002952.10002953.10002955</concept_id>
       <concept_desc>Information systems~Relational database model</concept_desc>
       <concept_significance>500</concept_significance>
       </concept>
   <concept>
       <concept_id>10003752.10010070.10010111.10011711</concept_id>
       <concept_desc>Theory of computation~Database query processing and optimization (theory)</concept_desc>
       <concept_significance>500</concept_significance>
       </concept>
   <concept>

 </ccs2012>
\end{CCSXML}

       \cut{
       <concept_id>10003752.10010070.10010111.10003623</concept_id>
       <concept_desc>Theory of computation~Data provenance</concept_desc>
       <concept_significance>500</concept_significance>
       </concept>
       }

\ccsdesc[500]{Information systems~Relational database model}
\ccsdesc[500]{Theory of computation~Database query processing and optimization (theory)}
}

\keywords{Selectivity estimation; Online learning; Regret analysis; Loss function; Dynamic data; Linear queries}

% \received{\red{20 February 2007}}
% \received[revised]{\red{12 March 2009}}
% \received[accepted]{\red{5 June 2009}}

\begin{abstract}
Learning-based approaches for selectivity estimation in databases have gained significant traction in recent years. However, theoretical studies of these learning-based approaches are essentially limited to fixed query distributions on static databases. 
In practice, both the underlying database and the query workload can dynamically change over time. 
In this work, we propose an algorithmic framework for learning selectivity of queries in this more general dynamic setup. Inspired by online learning, we measure the performance of the learning algorithm in this setting by its {\em regret}, which compares the cumulative loss incurred by the learning algorithm to that of the best fixed strategy. 
We establish upper and lower bounds on regret for histogram-based linear queries, such as point, range, and subset selection queries, under standard loss functions, in both static and dynamic database settings.
\end{abstract}

\maketitle

\input{sections_arxiv/s1-intro}

\input{sections_arxiv/static_squared_loss}

\input{sections_arxiv/dynamic_squared_loss}

\input{sections_arxiv/related_work}

\input{sections_arxiv/s6-conclusion}

\begin{acks}
%\section{Acknowledgments}
D. Panigrahi and F. Shen were supported in part by NSF grants CCF-1955703 and CCF-2329230.
 %This work was supported by the [...] Research Fund of [...] (Number [...]). Additional funding was provided by [...] and [...]. We also thank [...] for contributing [...].
\end{acks}

\bibliographystyle{ACM-Reference-Format}
\bibliography{admin/reference}

\clearpage
\appendix
\input{sections_arxiv/appedix/app_static_point}
\input{sections_arxiv/appedix/app_static_squared}
\input{sections_arxiv/appedix/app_dynamic_squared}

\input{sections_arxiv/appedix/app_dynamic_absolute}

\end{document}

%% file: admin/preamble.tex
\usepackage{algorithm}
\usepackage[noend]{algpseudocode}
\usepackage{amsmath}
\usepackage{amsthm}
\usepackage{bm}
\usepackage{amsfonts}
\usepackage{graphicx}
\usepackage{hyperref}
\usepackage{mathtools}
\usepackage{multicol}
\usepackage{tikz}
    \usetikzlibrary{arrows,shapes,positioning, automata}
    \usetikzlibrary{arrows,automata,positioning}
    \usetikzlibrary{decorations.markings}
    \usetikzlibrary{arrows.meta}
\usepackage[utf8]{inputenc}
\usepackage{xcolor}
\usepackage{xspace}
\usepackage[colorinlistoftodos]{todonotes}
\usepackage[shortlabels]{enumitem}
\usepackage{subcaption}
\usepackage[table]{xcolor}
\usepackage{hhline}
\usepackage{aliascnt}  % Required for cleveref to work with shared theorem counters
\usepackage{cleveref}
\usepackage{thmtools,thm-restate}

\usepackage{booktabs} % For nice tables
 \usepackage{multirow}

\newtheorem{theorem}{Theorem}[section]

\theoremstyle{plain}
\newtheorem{observation}{Observation}

\newcommand{\eat}[1]{}

\newtheoremstyle{cited}%
{.5\baselineskip\@plus.2\baselineskip
    \@minus.2\baselineskip}% (space above)
{.5\baselineskip\@plus.2\baselineskip
    \@minus.2\baselineskip}% (space below)
{\itshape}% (body font)
{\parindent}% (indent amount)
{}% {theorem head font}
{.}% {punctuation after theorem head}
{.5em}% {space after theorem head}
{\textsc{\thmname{#1}} \thmnote{\normalfont#3}}% {theorem head spec}

\theoremstyle{cited}

\newcommand{\paratitle}[1]{\smallskip\noindent\textbf{#1}~}
\newcommand{\cut}[1]{}
\newcommand{\red}[1]{{\color{red}#1}}

\newcommand{\R}{{\ensuremath{\mathbb R}}}

\newcommand{\Prob}{\mathbb{P}}

\newcommand{\commentout}[1]{{}}

\newcommand{\E}{\mathbb{E}}
\newcommand{\ones}{\mathbf{1}}
\newcommand{\zeros}{\mathbf{0}}
\newcommand{\evec}{\mathbf{e}}

\newcommand{\hvec}{\mathbf{h}}

\renewcommand{\k}{k} %number of bins
\renewcommand{\t}{t} %time step
\newcommand{\T}{T} %total time steps
\newcommand{\queryvec}{\mathbf{v}} %query indicator vector
\newcommand{\trueselect}{\sigma} %true selectivity
\newcommand{\predselect}{\hat{\sigma}} %predicted selectivity
\newcommand{\w}{\mathbf{w}} %weight vector
\newcommand{\predw}{\mathbf{\hat{w}}}
\newcommand{\wstar}{\mathbf{w}^*} %true weight vector
\newcommand{\prior}{\mu} %selectivity parameter
\newcommand{\regret}{\text{Regret}}
\newcommand{\loss}{\text{Loss}}
\newcommand{\inner}[2]{\langle #1, #2 \rangle}
\newcommand{\Hmat}{\mathbf{H}}
\newcommand{\maxent}{\textsc{SeqMaxEnt}}
\newcommand{\coordmem}{\textsc{CoordMem}}
\newcommand{\Yvec}{\mathbf{Y}}
\newcommand{\tildew}{\tilde{\mathbf{w}}} %unconstrained weight vector
\newcommand{\history}{\mathcal{H}} %history of query-selectivity pairs
\makeatletter
\@ifundefined{c@theorem}{}{}
\@ifundefined{c@lemma}{}{}
\@ifundefined{c@corollary}{}{}
\@ifundefined{c@proposition}{}{}
\@ifundefined{c@definition}{}{}
\@ifundefined{c@example}{}{}
\makeatother

%% file: sections_arxiv/s1-intro.tex
\section{Introduction}
\label{sec:intro}
Selectivity estimation is a core component of query optimization, enabling the query optimizer to choose efficient execution plans~\cite{ioannidis1991propagation,ioannidis2003history}. Traditional selectivity estimators often make strong assumptions such as independence among attributes on precomputed statistics (e.g., histograms, sketches, and samples), which may not hold in practice~\cite{poosala1997selectivity,cormode2012synopses,lipton1990practical,matias1998wavelet,poosala1996improved}. 
As a powerful alternative, a large number of learning-based approaches to selectivity estimation
have emerged in recent years and have exhibited good empirical performance~\cite{marcus2020bao,marcus2019neo,dutt2019selectivity,yang2020neurocard,park2020quicksel,sun_learned_2021,kim2022learned,hu2022selectivity,negi2023robust}. By leveraging machine learning models to capture complex data distributions and query patterns, these methods have potential for significantly improving accuracy compared to traditional methods.

Despite these strong empirical results, theoretical understanding of learning-based approaches for selectivity estimation remains limited.
Prior theoretical studies, such as the work by \citet{hu2022selectivity}, established that selectivity functions for range queries are learnable under the agnostic learning model~\cite{Haussler1992}, a generalization of the classical Probably Approximately Correct (PAC) framework~\cite{Valiant1984}. These results guarantee that, given enough training samples, a model's expected error will be small, if both training and future queries are drawn from the same fixed probability distribution.
Recent works have extended these guarantees to handle out-of-distribution queries and sequential data insertions~\cite{wu2024practical,zeighami2024theoretical,zeighami2024towards}.

These existing results are limited to static or mildly evolving datasets, and heavily rely on stochastic assumptions. %and largely focus on static or mildly evolving datasets.
Real-world database environments, however, are frequently dynamic: data is continuously inserted, updated, and deleted, and query patterns may shift unpredictably without following any stationary distribution. Under such dynamics, static models trained on historical data can become stale, leading to degraded accuracy or necessitating expensive retraining~\cite{wang2021ready}. While practical systems attempt to handle these dynamics through periodic retraining or lightweight adaptation~\cite{li2022warper,negi2023robust}, a rigorous theoretical framework that accommodates these fully dynamic, distribution-free settings would enable principled solutions. % for these dynamic settings.
%theoretical understanding of learning selectivity %estimation in such dynamic settings is % currently largely missing.

%\subsection*{Our Framework: Online Learning for Selectivity Estimation}
In this work, we present a theoretical study of selectivity estimation in the \emph{online learning} framework~\cite{shalev2012online,cesa2006prediction,hazan2016introduction}. This framework allows both data and incoming queries to change arbitrarily over time, and does not impose any distributional assumptions on either.
The online learning framework naturally mirrors the sequential nature of query processing, where the learner operates in sequential rounds and the prediction must be made before the true selectivity is revealed.
Concretely, in each round: (1) a query arrives; (2) the learner predicts its selectivity; (3) the true selectivity is revealed; then (4) the learner incurs a loss based on its prediction error.
The goal of the online learning algorithm is to minimize the cumulative loss across all rounds of this sequential process.

To formally evaluate a learning algorithm in our framework, we need to compare the learner's loss with some suitably defined benchmark. The standard benchmark used in online learning theory is the best fixed solution in hindsight. We adopt this principle and define our benchmark as the minimal cumulative loss incurred by the selectivity estimation corresponding to the best fixed database. Following standard terminology, we call the difference between the algorithm's loss and this benchmark the \textbf{regret} of the algorithm. Our choice of benchmark captures the intuition that a meaningful learning guarantee requires the problem instances in the sequential rounds to share some underlying structure. In contrast, if we allow our benchmark to correspond to a dynamic database, then the optimal strategy would constitute unrelated, separate solutions for the queries in different rounds. In this situation, it is clear that no algorithm can make meaningful predictions for the next query based on its knowledge of previous queries. Indeed, we formally show in Section~\ref{sec:regret} that no online algorithm can achieve non-trivial regret guarantees against a dynamic benchmark in this case. Comparing against the best fixed database isolates the learner's ability to discover and exploit persistent structure in the workload; a sublinear regret bound guarantees that the learner's average loss approaches that of the optimal static strategy, without any distributional assumptions required on queries or data.

\subsection{Database and Query Models}
\label{sec:database-query-models}
We consider the standard histogram representation of a database, in which the data domain has a discrete support of size $\k$ or is partitioned into $\k$ bins. (We refer to $k$ as the support size of the domain.) The state of the database is represented by a vector $\w \in \R^\k_{\geq 0}$, where the $i$-th coordinate $w_i$ represents the relative frequency of the $i$-th element (or $i$-th bin). The set of all valid database vectors forms the $\k$-dimensional simplex:
%Let $\dataset$ be a database with a discrete and finite support $\{x_1,\dots,x_\k\}$, where $x_i$ is the $i$-th element of the support.  
%\begin{equation}\label{eq:database-vector}
$\Delta_\k := \left\{ \w \in \R^\k \;\middle|\; \sum_{i=1}^\k w_i = 1, \; w_i \ge 0 \; \forall i \in [\k] \right\}$.
%\end{equation}

We study \emph{linear queries}, represented by $k$-dimensional vectors $\queryvec \in [0,1]^\k$, whose {\em selectivity} $\sigma(\queryvec, \w)$ for a database $\w$ is given by the weighted sum $\sigma(\queryvec, \w) = \langle \queryvec, \w \rangle = \sum_{i=1}^\k v_i w_i$.
We note that linear queries include commonly studied query classes:
\begin{itemize}
    \item[-] \emph{Subset queries} $\mathcal{Q}_{\text{subset}}:= \{\{0,1\}^\k\}$, whose selectivity captures the cumulative relative frequency of a set of elements.
%A subset query is a special case of a linear query where the coverage is binary. It is represented by a vector $\queryvec \in \{0,1\}^\k$, where $v_i = 1$ if $x_i$ belongs to the subset and $v_i = 0$ otherwise. The class of subset queries is given by: $\mathcal{Q}_{\text{subset}} = \{0, 1\}^{\k}$.
    \item[-] \emph{Range queries} 
    $\mathcal{Q}_{\text{range}}:= \left\{\queryvec \in \{0,1\}^\k \;\middle|\; v_i = 1 \text{ if } a \le i \le b \text{ for some } a \le b\right\}$, 
    which is a special case of subset queries where the set of indices corresponds to a range of (ordered) elements.
    %When the support set of a database is ordered (e.g., partitioning a 1-dimensional continuous domain or when the discrete domain has a natural order), the range query asks for the selectivity of a contiguous interval of the support.
%Given an interval $[a,b]$, its query vector has $v_i = 1$ for $i \in [a, b]$ and $v_i = 0$ otherwise. 
    \item[-] \emph{Point queries}
     $\mathcal{Q}_{\text{point}}:= \left\{ \queryvec \in \{0,1\}^\k \;\middle|\; v_i = 1 \text{ if } i = a \text{ for some } a \right\}$, which further refines range queries and isolates the relative frequency of a specific element.
\end{itemize}
As described above, these query classes are nested as $\mathcal{Q}_{\text{point}} \subset \mathcal{Q}_{\text{range}} \subset \mathcal{Q}_{\text{subset}} \subset \mathcal{Q}_{\text{linear}}$. %\red{The upper bounds hold for broad class also applies to the restricted subclasses, and the lower bounds hold for the restricted subclasses also applies to the broad class. }

\subsection{The Online Learning Framework: Loss Functions and Regret}\label{sec:regret}
The selectivity estimation problem asks a learner to predict the selectivities of a sequence of $T$ queries from a given class $\mathcal{Q}$ based only on prior history. We denote the $\t$-th query by $\queryvec_{\t}$ and its true selectivity by $\trueselect_{\t}$. In round $\t = 1, \dots, \T$, we have the following steps:
\begin{enumerate}
    \item A query vector $\queryvec_{\t} \in \mathcal{Q}$ is presented to the learner (algorithm $\mathcal{A}$). 
    \item Note that at this stage, the information available to algorithm $\mathcal{A}$ is given by $\history_{\t}:= (\queryvec_1, \trueselect_1, \queryvec_2, \trueselect_2, \ldots, \queryvec_{\t-1}, \trueselect_{\t-1}, \queryvec_{\t})$, i.e., all the previous queries and their respective selectivities as well as the new query (but not its selectivity). Based on this information, algorithm $\mathcal{A}$ outputs a selectivity prediction $\predselect_{\t} \in [0,1]$.
    \item Next, the true selectivity $\trueselect_{\t}$ of query $\queryvec_{\t}$ is revealed.
    Note that $\trueselect_{\t}$ corresponds to a (possibly time-varying) unknown database vector $\w_{\t} \in \Delta_\k$, i.e., $\trueselect_{\t} = \sigma(\queryvec_{\t}, \w_{\t}) = \langle \queryvec_{\t}, \w_{\t} \rangle$.
    %We assume there exists a (possibly time-varying) unknown true database state $\w_{\t} \in \Delta_\k$ such that $\trueselect_{\t} = \sigma(\queryvec_{\t}, \w_{\t}) = \langle \queryvec_{\t}, \w_{\t} \rangle$.
    The learner observes only $\trueselect_{\t}$, but not the underlying $\w_{\t}$.
    \item The learner incurs a loss based on the difference between the true and predicted selectivities: $\loss_{\t} := \loss(\trueselect_{\t},\predselect_{\t})$.
We consider two natural loss functions: 
\begin{itemize}
\itemsep0em
    \item[-] \emph{Squared loss}: \ \ \ \ \ \ $\loss_{\t} = (\trueselect_{\t} - \predselect_{\t})^2$
    \item[-] \emph{Absolute loss}:  \ \ \ \ \  $\loss_{\t} = |\trueselect_{\t} - \predselect_{\t}|$. 
\end{itemize}
%They capture different practical database requirements.
Squared loss, which heavily penalizes large estimation errors, is useful in query optimization applications where a single large error may lead to a poorly optimized execution plan, whereas
absolute loss captures average error, which is robust to isolated outliers. %This is useful when occasional anomalous queries are expected and should not dominate the learned model. 
%As our results show, these two loss functions lead to different regrets.
\end{enumerate}

\smallskip
\noindent
{\bf Regret.~}
We measure the performance of a learning algorithm $\mathcal{A}$ using the notion of \emph{regret}~\cite{hazan2016introduction,shalev2012online}, denoted $\regret_{\T}(\mathcal{A})$. The regret of the algorithm quantifies the difference between the cumulative loss of the algorithm and that of the best fixed comparator (i.e., the best fixed database state) chosen in hindsight. 
For a fixed comparator $\w \in \Delta_\k$, it predicts the selectivity at rounds $\t$ as: $\sigma(\queryvec_{\t}, \w) = \langle \queryvec_{\t}, \w \rangle$, thus we denote its round-$t$ loss as $\loss_\t (\w) := \loss(\trueselect_{\t}, \langle \queryvec_{\t}, \w \rangle)$. Therefore, the regret is defined as:
\begin{align}\label{eq:regret}
    \regret_{\T}(\mathcal{A}) = \sum_{\t=1}^{\T} \loss_{\t} - \min_{\w \in \Delta_\k} \sum_{\t=1}^{\T} \loss_\t (\w).
\end{align}
Equivalently,
\begin{align}\label{eq:regret2}
    \regret_{\T}(\mathcal{A}) = \sum_{\t=1}^{\T} \loss(\trueselect_{\t}, \predselect_{\t}) - \min_{\w \in \Delta_\k} \sum_{\t=1}^{\T} \loss(\trueselect_{\t}, \langle \queryvec_{\t}, \w \rangle).
\end{align}

Our goal is to design algorithms that minimize regret.

\smallskip
\noindent
{\bf Why compare with a fixed solution for regret?}~
It is standard practice in online learning to compare the algorithm's loss to that of the best {\em fixed} solution in hindsight~\cite{hazan2016introduction,shalev2012online,cesa2006prediction} as given above in \eqref{eq:regret} and \eqref{eq:regret2}. In our setting, this fixed solution is a single database vector that incurs minimal loss after seeing the entire sequence, i.e., with $\min_{\w \in \Delta_\k} \sum_{\t=1}^{\T} \loss_\t(\w)$. This is in contrast to comparing with a stronger benchmark that would allow a different solution in every round, i.e., $\sum_{\t=1}^{\T} \min_{\w_t \in \Delta_\k} \loss_\t(\w_t)$. In general, no algorithm, even a randomized one, can obtain sub-linear (in $T$) regret against this dynamic benchmark as illustrated in the following example. 
\begin{example}
    Consider an instance where the database vector $\w_\t$ in each round is either $(1, 0)$ or $(0, 1)$, and the query is always $\queryvec_\t=(1, 0)$. For any (possibly randomized) algorithm, the adversary chooses the database $\w_{\t}$ to be $(1,0)$ whenever the algorithm outputs a selectivity prediction $\predselect_\t < 1/2$, and $(0, 1)$ otherwise. That is, the true selectivity $\trueselect_\t = 1$ if $\predselect_\t < 1/2$ and $\trueselect_\t = 0$ if $\predselect_\t \geq 1/2$. Therefore, the cumulative squared and absolute loss are respectively $\T/4$ and $\T/2$. In contrast, the dynamic benchmark can choose the realized database vector in each round, so its cumulative loss is $0$. Therefore, the expected regret against this dynamic benchmark is linear in $\T$, even for randomized learners.    
    % that outputs a selectivity prediction $\predselect_\t \in [0,1]$, its expected loss at round $\t$ is:
    % \begin{equation*}
    %     \mathbb{E}\left[ \loss_{\t} \right] = 
    %     \begin{cases}
    %         \mathbb{E}\!\left[|\trueselect_\t-\predselect_\t|\right] = \frac{1}{2} (|\predselect_\t| + |1 - \predselect_\t|) = \frac{1}{2}, & \text{under absolute loss}, \\
    %         \mathbb{E}\!\left[(\trueselect_\t-\predselect_\t)^2\right] = \frac{1}{2}\predselect_\t^2+\frac{1}{2}(1-\predselect_\t)^2 \geq \frac{1}{4}, & \text{under squared loss}, \\
    %     \end{cases}
    % \end{equation*}
\end{example}

\subsection{Static vs.\ Dynamic Databases}\label{subsec:static-dynamic-compare}

We analyze the framework in two database settings that exhibit qualitatively different behavior:
\begin{itemize}
    \item[-] {\em Static databases.}
In this setting, there is a fixed but unknown database vector $\w^* \in \Delta_\k$ such that $\w_\t = \w^*$ in every round, while queries may still be chosen arbitrarily. This models environments such as analytical warehouses, where the underlying data is updated infrequently in batches, yet the query workload may shift over time as analysts explore different aspects of the data.
    \item[-] {\em Dynamic databases.}
The database state $\w_\t$ may change arbitrarily across rounds, in addition to the queries. This models transactional or production systems with continuous insertions, updates, and deletions, where the learner must simultaneously adapt to a moving database and an evolving query workload.
\end{itemize}
%We studythe regret in two database settings that exhibit qualitatively different regret behavior. In the \emph{static database} setting, there exists a fixed but unknown database vector $\w^* \in \Delta_\k$ suct that $\w_\t = \w^*$ in every round, while queries may still be chosen arbitrarily; this models analytical warehouses where the data changes slowly relative to the query workload. In the \emph{dynamic database} setting, the database $\w_\t$ may also change arbitrarily across rounds; this models transactional systems with continuous insertions, updates, deletions, and shifting query patterns occurring simultaneously. The following observation captures the qualitative gap between the two:

The following observation captures the difference and provides a baseline for the static setting:
\begin{observation}\label{obs:static-baseline}
    For static databases, the regret of any algorithm $\mathcal{A}$ equals its cumulative loss,
    \begin{equation}\label{eq:regret-static}
        \regret_{\T}(\mathcal{A}) = \sum_{\t=1}^{\T} \loss_\t,
    \end{equation}
    and depends only on the support size $\k$, not on the time horizon $\T$. In particular, for any linear query and either squared or absolute loss, there exists an online algorithm with regret at most $\k-1$.
    % (both upper bound for any reasonable algorithm and the lower bound) 
\end{observation}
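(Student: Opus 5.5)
The plan has two parts: first establish the identity \eqref{eq:regret-static}, then give an explicit algorithm whose cumulative loss is at most $\k-1$ for both loss functions. The $T$-independence claim then follows from that bound.

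\textbf{The identity.} In the static setting every revealed selectivity satisfies $\trueselect_\t = \inner{\queryvec_\t}{\w^*}$ with $\w^* \in \Delta_\k$. The comparator $\w = \w^*$ therefore has $\loss_\t(\w^*) = \loss(\trueselect_\t,\trueselect_\t)=0$ for every $\t$, under both squared and absolute loss. Since both losses are nonnegative, the minimum over $\Delta_\k$ in \eqref{eq:regret} equals $0$, and the regret reduces to $\sum_\t \loss_\t$.

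\textbf{The algorithm.} I would use a linear-algebra memorization scheme, presumably the paper's \coordmem{} idea. The algorithm maintains the affine set
\[
S_\t=\{\w\in\R^\k : \textstyle\sum_i w_i=1,\ \inner{\queryvec_s}{\w}=\trueselect_s \ \forall s<\t\}.
\]
This set always contains $\w^*$. On query $\queryvec_\t$ there are two cases.
\begin{itemize}
\item[-] If $\queryvec_\t$ lies in the span $V_\t$ of $\ones$ and the previous queries, then $\inner{\queryvec_\t}{\w}$ is constant on $S_\t$. Writing $\queryvec_\t=c_0\ones+\sum_{s<\t}c_s\queryvec_s$, the algorithm predicts $c_0+\sum_s c_s\trueselect_s$, which equals $\trueselect_\t$ exactly. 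The loss is zero.
\item[-] Otherwise, it predicts $\inner{\queryvec_\t}{\hat\w}$ for any feasible $\hat\w\in S_\t\cap\Delta_\k$; this set is nonempty because it contains $\w^*$. Both the prediction and the truth lie in $[0,1]$, since $\queryvec_\t\in[0,1]^\k$ and both vectors are in the simplex. Hence the loss is at most $1$ under either loss function.
\end{itemize}

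\textbf{Counting the bad rounds.} Each round of the second kind strictly increases $\dim V_\t$. This dimension starts at $1$ (because of $\ones$) and can never exceed $\k$, so there are at most $\k-1$ such rounds. This gives cumulative loss, and hence regret, at most $\k-1$ regardless of $\T$.

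The only delicate point is including the constraint $\sum_i w_i=1$, i.e. the vector $\ones$, in the span from the start. That is what gives $\k-1$ rather than $\k$. Everything else is routine. The statement that regret ``depends only on $\k$'' I read as exactly this $T$-independent upper bound; any matching lower bound would presumably be proved separately later in the paper and is not needed here.
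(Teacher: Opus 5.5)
Your proposal is correct and follows essentially the same route as the paper: the comparator $\w^*$ incurs zero loss, and any predictor consistent with past observations errs only on rounds whose query is linearly independent of what came before, each costing at most $1$. Your explicit inclusion of $\ones$ in the span, so that the dimension count starts at $1$, makes rigorous the paper's brief ``at most $\k-1$ linearly independent queries'' count. One small correction: the scheme you describe is the generic consistent-predictor argument, which \maxent\ also satisfies, rather than \coordmem, which is specific to point queries.
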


\begin{proof}
Since $\w^*$ is fixed, the best fixed comparator in hindsight chooses $\w^*$ and incurs zero cumulative loss; thus by \eqref{eq:regret}, the regret of any algorithm $\mathcal{A}$ equals its cumulative loss. Moreover, each revealed selectivity $\trueselect_{\t} = \langle \queryvec_{\t}, \w^* \rangle$ imposes a linear constraint on $\w^*$. Since $\w^*$ has $k$ unknown components and is normalized, after at most $\k - 1$ linearly independent queries $\w^*$ can be uniquely determined; subsequent queries incur zero loss. For each round before $\w^*$ is identified, the query is either linearly dependent on the previously observed queries, in which case the prediction is exact and the loss is zero, or is linearly independent, in which case it incurs loss at most $1$. Hence the 
total regret is at most $\k - 1$.
\end{proof}

For static databases, our results (\Cref{tab:static_results}) demonstrate regret bounds much sharper than the $\k-1$ baseline of \Cref{obs:static-baseline}, and remain independent of the time horizon $\T$. For dynamic databases, since the best fixed comparator may itself incur loss that grows with $\T$, a dependence on $\T$ becomes unavoidable (\Cref{tab:dynamic_results}). We therefore treat $\T$, rather than the support size $\k$, as the dominant parameter in the dynamic setting and analyze how regret scales in $\T$.

\subsection{Overview of Results}
\label{sec:our-results}
We provide a systematic regret analysis for selectivity estimation of linear queries across different database settings and loss functions, which we summarize in Table~\ref{tab:static_results} and Table~\ref{tab:dynamic_results}. 
%When multiple query classes appear in the same row, the upper and lower bounds apply to all of them. %\blue{Our results show that the achievable regret depends on the database setting, the loss function, and the query class. For some narrow classes such as point queries, we obtain sharper bounds than those obtained by viewing them as general linear queries, reflecting the fact that more structured queries reveal information about the database more efficiently.}

\begin{table}[t]
    \caption{Regret bounds for selectivity estimation on \textbf{static} databases. $\k$ is the support size of the domain. Each row reports upper and lower regret bounds holding for the indicated query classes.\cut{When multiple query classes are listed in the same row, both the upper and lower bound apply to all of them.}}
    \label{tab:static_results}
    \begin{center}
    \vspace{-0.4cm}
    \renewcommand{\arraystretch}{1.4}
    \footnotesize
    \begin{tabular}{|c|c|c|c|c|}
        \hline
        \textbf{Loss} & \textbf{Query Class} & \textbf{Lower Bound} & \textbf{Upper Bound} & \textbf{Algorithm} \\
        \hline
        \multirow{4}{*}{Squared}
            & \multirow{2}{*}{Point} & $1/4$ & $1$ & \coordmem \\
        \cline{3-4}
            & & \multicolumn{2}{c|}{(\Cref{prop:static-point})} & (\Cref{alg:coord-mem}) \\
        \hhline{|~|-|-|-|-|}
            & Range, Subset, Linear & $\Omega(\log k)$ & \cellcolor{gray!15}$O(\log k)$ & \cellcolor{gray!15}\maxent \\
            & & (\Cref{thm:static_lower}) & \cellcolor{gray!15}(\Cref{thm:static_upper}) & \cellcolor{gray!15}(\Cref{alg:seq_max_ent}) \\
        \hline
        \multirow{4}{*}{Absolute}
            & \multirow{2}{*}{Point} & $1/2$ & $1$ & \coordmem \\
        \cline{3-4}
            & & \multicolumn{2}{c|}{(\Cref{prop:static-point})} & (\Cref{alg:coord-mem}) \\
        \hhline{|~|-|-|-|-|}
            & Subset, Linear
            & \cellcolor{gray!15}$\Omega\!\left(\sqrt{k/\log k}\right)$
            & $O\!\left(\sqrt{k \log k}\right)$
            & \maxent\ \\
            & & \cellcolor{gray!15}(\Cref{thm:static_absolute_lower_bound}) & (\Cref{thm:static_absolute_upper}) & (\Cref{alg:seq_max_ent}) \\
        \hline
    \end{tabular}
    \end{center}
\end{table}

\begin{table}[t]
    \caption{Regret bounds for selectivity estimation on \textbf{dynamic} databases. $\T$ is the number of rounds and $\k$ is the support size of the domain. Each row reports upper and lower regret bounds holding for the indicated query classes.\cut{When multiple query classes are listed in the same row, both the upper and lower bound apply to all of them.}}
    \label{tab:dynamic_results}
    \begin{center}
    \vspace{-0.4cm}
    \renewcommand{\arraystretch}{1.4}
    \footnotesize
    \begin{tabular}{|c|c|c|c|c|}
        \hline
        \textbf{Loss} & \textbf{Query Class} & \textbf{Lower Bound} & \textbf{Upper Bound} & \textbf{Algorithm} \\
        \hline
        \multirow{2}{*}{Squared}
            & Point, Range, Subset, Linear
            & \cellcolor{gray!15} $\Omega(\log T)$
            & $O(k \log T)$
            & EWOO~\cite{hazan2007logarithmic} \\
            & & \cellcolor{gray!15}(\Cref{thm:dynamic_lower}) & (\Cref{cor:ewoo_log_regret}) & (\Cref{alg:ewoo}) \\
        \hline
        \multirow{4}{*}{Absolute}
        & Point
        & $\Omega(\sqrt{T})$
        & $O(\sqrt{T})$
        & FTRL with $\ell_2$ regularizer~\cite{shalev2012online} \\
        & & (\Cref{thm:dynamic_lower_absolute}) & (\Cref{cor:point-query-upper}) & (\Cref{alg:ftrl-abs})\\
        \cline{2-5}
            & Subset, Linear
            & $\Omega\!\left(\sqrt{T \log k}\right)$
            & $O\!\left(\sqrt{T \log k}\right)$
            & FTRL with neg-entropy~\cite{shalev2012online} \\
            & & (\Cref{thm:dynamic_lower_k_bins}) & (\Cref{cor:dynamic-abs-upper}) & (\Cref{alg:ftrl-abs}) \\
        \hline
    \end{tabular}
    \end{center}
\end{table}

Across all settings, our regret bounds are (nearly) tight in the dominant parameter. For static databases, we establish $\Theta(\log \k)$ regret under squared loss for range, subset, and linear queries, and $\tilde{\Theta}(\sqrt{\k})$\footnote{$\tilde{\Theta}(\cdot)$ is an asymptotic 
notation that ignores poly-logarithmic factors.} regret under absolute loss for subset and linear queries. For dynamic databases, where the dependence on $\T$ is dominant, we obtain $\Theta(\sqrt{\T})$ and $\Theta(\sqrt{\T \log \k})$ regret under absolute loss for point and subset/linear queries respectively, and $\Theta(\log \T)$ regret in $\T$ under squared loss. Thus the two loss functions exhibit a polynomial separation in the dominant parameter under both database settings.
%The choice of loss function thus exhibits a polynomial separation in the dominant parameter under both database settings.

\subsection{Overview of Techniques}

Of the results summarized above, we highlight the techniques used in obtaining the three results in shaded boxes in \Cref{tab:dynamic_results,tab:static_results}. We also present these results in detail in the main body of the paper, and defer the details of the remaining results to the appendix because of space constraints.
\begin{enumerate}[leftmargin=*]
    \sloppy
    \item \textbf{The Sequential Maximum Entropy (\maxent) algorithm for static databases (\Cref{alg:seq_max_ent} and \Cref{thm:static_upper}).} We design the \maxent\ algorithm that achieves $O(\log k)$ regret under squared loss and $O(\sqrt{k \log k})$ regret under absolute loss for general linear queries. This algorithm maintains a feasible space of candidate databases that are consistent with the known selectivities of all previous queries. As outlined in the proof of \Cref{obs:static-baseline}, selecting a database arbitrarily in the feasible space in each step and outputting its selectivity as the prediction for a given query already gives a cumulative loss of $k-1$. 
    
    To improve on this strategy, the \maxent\ algorithm uses a more deliberate choice of the candidate database. In particular, it outputs the selectivity of the {\em maximum entropy} vector in the feasible space as its prediction for a given query. Intuitively, this ensures that if the algorithm suffers a large loss for a given query, then the correct database vector must be ``far'' from the centrally located maximum entropy vector in the feasible space. This, in turn, ensures that the algorithm makes significant progress toward identifying the true underlying database vector.
    %Because the database is fixed, each query-selectivity observation adds a linear constraint and the feasible region shrinks monotonically across rounds. Therefore, we use this structural property that generic online convex optimization on the simplex cannot exploit, to achieve $T$-independent regret. 
    This is formalized for squared loss (\Cref{thm:static_upper}) by showing that each round's loss is bounded by an incremental KL divergence between successive predictions, which telescopes across rounds to at most $O(\log k)$. This bound is then used to derive a bound of $O(\sqrt{k \log k})$ for absolute loss (\Cref{thm:static_absolute_upper}) using a standard relationship between the squared and absolute loss functions (\Cref{obs:squared-to-absolute}).

    \item \textbf{A $\Omega(\sqrt{k/\log k})$ lower bound for static databases under absolute loss (\Cref{thm:static_absolute_lower_bound}).} %We prove that any algorithm incurs $\Omega(\sqrt{k/\log k})$ regret on static databases under absolute loss, even for subset queries.
    A natural strategy for proving a lower bound is to make the learner face a fresh independent $\mathrm{Bernoulli}(1/2)$ variable in each round. Since the algorithm has no information about the current round's random variable from previous observations, it can do no better than random guessing and must incur expected loss of $1/2$ per round.
    In the dynamic setting, this is straightforward since the database can change across rounds and introduce new independent randomness directly (see, e.g., \Cref{thm:dynamic_lower_absolute}). However, the static setting only allows a single fixed database over all rounds. The challenge is therefore to encode many independent random variables into one fixed database vector such that the query sequence reveals the random outcomes piecemeal over time. 
    %More importantly, the resulting database vector must be valid, i.e., lie in the simplex.

    We achieve this encoding via a normalized Hadamard matrix~\cite{horadam2012hadamard} of order $\Theta(\k)$, whose rows are mutually orthogonal and whose first row consists entirely of $+1$s. Starting from the uniform vector, we perturb it along the direction of each Hadamard row by an independent random perturbation of some fixed magnitude $\delta$. Each round's subset query is built from one row of the same Hadamard matrix so that the true selectivity per round depends only on one random perturbation.
    %, which is achieved by the orthogonality and normalization of the Hadamard matrix. Therefore, 
    This construction forces the learner to predict a perturbation in each round and incur $\Omega(\delta)$ expected absolute loss. 
    %$\delta$ is a magnitude of the perturbation, which is a constant to be determined later.

    But the above construction is not always valid: adding random signs may cause some coordinate of the database vector to be negative in some realizations and fall outside the simplex. This exposes a tradeoff between two competing demands: a larger $\delta$ amplifies the per-round loss, while a smaller $\delta$ keeps the perturbed vector inside the simplex with high probability. By choosing $\delta = \Theta(1/\sqrt{k\log k})$ to balance the two and summing over $\Theta(\k)$ rounds, we obtain the lower bound of $\Omega(\sqrt{k/\log k})$.

    \item \textbf{A $\Omega(\log T)$ lower bound for dynamic databases under squared loss (\Cref{thm:dynamic_lower}).}
%    Even for point queries on a $2$-bin database (i.e., $\k=2$), any algorithm incurs $\Omega(\sqrt{T})$ regret.
%    Following the independent-randomness strategy above, one may consider generating 
%   
    Suppose the true selectivity in every round is given by an independent $\mathrm{Bernoulli}(1/2)$ variable.
    This can be achieved in the dynamic setting even for $k=2$ with a fixed query by randomizing the database independently in each round. In this case, by standard variance bounds, we get a regret lower bound of
    $\Omega(\sqrt{T})$ for absolute loss, as desired. However, this construction yields only a constant bound for squared loss: every algorithm incurs expected cumulative loss $\T/2$, while the best fixed comparator now chooses the empirical mean in hindsight, improving by a constant $1/4$ compared to the algorithm. 
    %of true selectivities after all rounds,
    %Thus, by the regret definition \eqref{eq:regret}, we can only get a constant regret bound. 
    
    %The idea gives the right lower bound for absolute loss in the dynamic setting: every learner incurs expected loss $\T/2$, while the best fixed comparator can choose the majority value in hindsight and improves by the deviation of number of ones from $\T/2$, giving $\Omega(\sqrt{T})$ lower bound on the regret.
    %However, the same construction is too weak under squared loss: The best fixed comparator now chooses the empirical mean $H/T$, where $H$ is the number of ones, and its improvement is only $(H-T/2)^2/T$, whose expectation is constant. Thus, a squared-loss lower bound cannot rely on the sample imbalance of independent fair coin.
    Instead of making the algorithm guess a $\mathrm{Bernoulli}(1/2)$ variable in each round, we draw a parameter $\prior \sim \mathrm{Uniform}[0,1]$ once at the start, and randomize the database to generate each round's selectivity as an i.i.d.\ $\mathrm{Bernoulli}(\prior)$ sample. In this case, the expected regret in each round $\t$ is lower bounded by the posterior variance of $\prior$,
    which follows a Beta distribution with expected variance $\Theta(1/\t)$. Thus, the regret lower bound becomes the Harmonic sum over the $\T$ rounds, i.e., $\Omega(\log \T)$.
\end{enumerate}

\smallskip
\noindent\textbf{Summary of Contributions.~}
Overall, our work provides distribution-free, worst-case theoretical guarantees for learned selectivity estimation, and connects selectivity estimation for linear query classes with online learning theory. The main contributions are: (i) an online-learning formulation of selectivity estimation that imposes no distributional assumptions on data or queries, and (ii) (nearly) tight regret bounds for %broad 
common query classes, including point, range, subset, and linear queries, across static and dynamic database settings, under squared and absolute loss.

%% file: sections_arxiv/static_squared_loss.tex
\section{Static Databases}
\label{sec:static}
%equals its cumulative loss (Eq.~\ref{eq:regret-static}) and 
As established in \Cref{obs:static-baseline}, the regret of any algorithm for static databases depends only on the support size $\k$, with a trivial baseline of $\k-1$. In this section, we show sharper regret bounds under both squared and absolute loss functions. For squared loss, we propose the Sequential Maximum Entropy algorithm (\maxent) in \Cref{subsec:static_squared_upper_bound}, achieving $O(\log \k)$ regret bound. The matching lower bound of $\Omega(\log \k)$ for range queries in \Cref{subsec:static_squared_lower_range} immediately extends to subset and linear queries. For absolute loss, the regret is tight at $\tilde{\Theta}(\sqrt{k})$: we prove a lower bound of $\Omega(\sqrt{k/\log k})$ for subset queries in \Cref{subsec:static_absolute_lower_bound}, and an upper bound of $O(\sqrt{k \log k})$ for general linear queries in \Cref{subsec:static_absolute_upper_bound}, achieved by using the same \maxent\ algorithm. For the special case of point queries, the regret is constant under both squared and absolute loss. We defer the detailed analysis to Appendix~\ref{app:static-point}.
\subsection{Upper Bound under Squared Loss: The Sequential Maximum Entropy Algorithm}\label{subsec:static_squared_upper_bound}
As presented in \Cref{alg:seq_max_ent}, the {\em Sequential Maximum Entropy (\maxent)}
algorithm maintains a feasible region $\mathcal C_\t$ and a
database estimate $\predw_\t$. $\mathcal{C}_\t$ is the set of database vectors consistent with all query-selectivity equalities observed before round $\t$, and $\predw_{\t}$ is the maximum-entropy distribution in $\mathcal{C}_\t$ (intuitively, the ``most uniform'' distribution consistent with the observations).
\cut{We now present the {\em Sequential Maximum Entropy (\maxent)} algorithm that achieves an $O(\log \k)$ regret bound for linear queries, matching the lower bound up to a constant factor. The \maxent\ algorithm (\Cref{alg:seq_max_ent}) maintains a feasible region $\mathcal{C}_\t$ and a database estimate $\predw_{\t}$: $\mathcal{C}_\t$ is the set of database vectors consistent with all query-selectivity equalities observed before round $\t$, and $\predw_{\t}$ is the 
maximum-entropy distribution in $\mathcal{C}_\t$ (intuitively, the ``most uniform'' distribution consistent with the observations).}
% a database estimate $\predw_{\t}$, which is ``as uniform as possible'' (maximizing entropy) while satisfying all observed query-selectivity constraints.
In Line~\ref{line:initialize}, the algorithm initializes $\mathcal{C}_1 = \Delta_\k$ and $\predw_1 = (1/\k, \dots, 1/\k)$, the maximum-entropy distribution over $\mathcal{C}_1$. In each round $\t$, upon receiving query $\queryvec_{\t}$, it predicts $\predselect_{\t} = \langle \queryvec_\t, \predw_\t\rangle$. After observing $\trueselect_\t$, it refines the feasible region to $\mathcal{C}_{\t+1} = \mathcal{C}_\t \cap \{\w \in \Delta_\k : \langle \queryvec_\t, \w\rangle = \trueselect_\t\}$ and updates $\predw_{\t+1}$ to the maximum-entropy distribution over $\mathcal{C}_{\t+1}$.

The update in Line~\ref{line:update_weight} is a standard convex program, for which an $\varepsilon$-accurate solution can be computed in time polynomial in $\k$, $\t$, and $\log(1/\varepsilon)$ by standard convex-optimization methods (e.g., interior-point methods)~\cite{boyd2004convex}. The regret analysis below is stated for the exact optimizer.

\begin{algorithm}[!htbp]
    \caption{Sequential Maximum Entropy (\maxent)}
    \label{alg:seq_max_ent}
    \begin{algorithmic}[1]
        %\label{line:require}$\Require$ $k$
        \State \textbf{Initialize:} Feasible region $\mathcal{C}_1 \gets \Delta_k$, $\predw_1 \gets (1/k, \dots, 1/k)$ \label{line:initialize}
        \For{$t = 1, 2, \dots, T$} \label{line:forloop}
            \State \textbf{Predict:} Receive query $\queryvec_{\t}$ and make a prediction on the selectivity $\predselect_{\t} \gets \langle \queryvec_{\t}, \predw_{\t} \rangle$ \label{line:predict}
            \State \textbf{Observe:} Receive true selectivity $\sigma_t$ \label{line:observe}
            \State \textbf{Update Feasible Region:} $\mathcal{C}_{\t+1} \gets \mathcal{C}_{\t} \cap \left\{ \predw \in \Delta_k \mid \langle \mathbf{v}_t, \predw \rangle = \sigma_t \right\} \label{line:constraint_set_eq}$ \label{line:update_constraint}

            \State \textbf{Update Database Estimate:} Compute $\predw_{\t+1}$ by maximizing entropy over $\mathcal{C}_{\t+1}$:\label{line:update_weight}
            \begin{align*}
                \predw_{\t+1} \gets \arg\max_{\w \in \mathcal{C}_{\t+1}} \left( -\sum_{i=1}^k w_i \ln (w_i) \right) \label{line:maxent}
            \end{align*}
            %\hfill \textit{(Equivalently: $\arg\min_{\mathbf{w} \in \mathcal{C}_{\t+1}} D_{\mathrm{KL}}(\mathbf{w} \,\|\, \predw_1)$)}
        \EndFor \label{line:endfor}
    \end{algorithmic}
\end{algorithm}

We now analyze the regret of the \maxent\ algorithm.

\begin{theorem}
    \label{thm:static_upper}
    Under squared loss in the static database setting with general linear queries, the \maxent\ algorithm (\Cref{alg:seq_max_ent}) achieves the following regret bound:
      %The \maxent\ algorithm (\Cref{alg:seq_max_ent}) guarantees the following regret bound for squared loss under the static database setting:
      \begin{align*}
        \regret_{\T} \leq 2 \ln \k = O(\log \k).
      \end{align*}
    \end{theorem}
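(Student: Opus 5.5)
Since the comparator $\w^*$ incurs zero loss, the regret equals $\sum_{\t}(\trueselect_\t-\predselect_\t)^2$ by \Cref{obs:static-baseline}. The plan is a potential argument with potential $\Phi_\t := D_{\mathrm{KL}}(\predw_{\t}\,\|\,\predw_1) = \ln\k - H(\predw_\t)$. This quantity starts at $0$ and is always at most $\ln \k$. The key claim is the per-round inequality
\[
(\trueselect_\t-\predselect_\t)^2 \le 2\, D_{\mathrm{KL}}(\predw_{\t+1}\,\|\,\predw_\t) \le 2(\Phi_{\t+1}-\Phi_\t).
\]
Summing over $\t$ telescopes to $2(\Phi_{\T+1}-\Phi_1)\le 2\ln\k$, which is the statement.

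For the second inequality I would use the Pythagorean property of information projections. Because $\predw_1$ is uniform, the maximum-entropy point $\predw_\t$ is exactly the I-projection of $\predw_1$ onto the convex set $\mathcal C_\t$. Since $\predw_{\t+1}\in\mathcal C_{\t+1}\subseteq \mathcal C_\t$, the generalized Pythagorean inequality gives
\[
D(\predw_{\t+1}\|\predw_1)\ge D(\predw_{\t+1}\|\predw_\t)+D(\predw_\t\|\predw_1).
\]
In fact $\mathcal C_\t$ is an affine slice of the simplex, so equality holds, but the inequality is enough. Rearranging gives $D(\predw_{\t+1}\|\predw_\t)\le \Phi_{\t+1}-\Phi_\t$. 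One technicality is that the projection may lie on a face of the simplex. I would handle this by noting that the I-projection inequality holds for any closed convex set whenever the divergences are finite. Finiteness holds because $\predw_{\t+1}$ is supported within the support of $\predw_\t$: the max-entropy point has maximal support in $\mathcal C_\t$.

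For the first inequality I would combine Pinsker's inequality with the fact that queries lie in $[0,1]^\k$. Since $\predw_{\t+1}\in\mathcal C_{\t+1}$, we have $\langle\queryvec_\t,\predw_{\t+1}\rangle=\trueselect_\t$. Therefore
\[
|\trueselect_\t-\predselect_\t| = |\langle \queryvec_\t,\predw_{\t+1}-\predw_\t\rangle| \le \max_{\queryvec\in[0,1]^\k}\langle\queryvec,\predw_{\t+1}-\predw_\t\rangle = \mathrm{TV}(\predw_{\t+1},\predw_\t).
\]
Both vectors are probability vectors, so the extremal $\queryvec$ is the indicator of the positive part of the difference, and the bound is exactly the total variation distance. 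Pinsker gives $\mathrm{TV}^2\le \tfrac12 D_{\mathrm{KL}}$, so the loss is at most $\tfrac12 D(\predw_{\t+1}\|\predw_\t)$. This is even better than the factor $2$ needed, so the constant in the theorem is safe with slack.

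The main obstacle is justifying the Pythagorean step rigorously, in particular that $\predw_\t$ is the I-projection of the uniform vector with the correct support. I would verify it through the first-order optimality condition for $\predw_\t = \arg\min_{\mathcal C_\t} D(\cdot\|\predw_1)$: for every $\w\in\mathcal C_\t$, the directional derivative $\sum_i (w_i-\hat w_{\t,i})\ln \hat w_{\t,i} \ge 0$. Expanding $D(\w\|\predw_1)-D(\w\|\predw_\t)-D(\predw_\t\|\predw_1)$ gives exactly this quantity, which yields the inequality. This requires $\mathrm{supp}(\w)\subseteq\mathrm{supp}(\predw_\t)$, and that inclusion holds because the entropy maximizer over a convex set attains the maximal support.
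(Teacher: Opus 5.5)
Your proposal is correct and follows the paper's route. You bound each round's loss by a constant times $D_{\mathrm{KL}}(\predw_{\t+1}\|\predw_\t)$ via Pinsker, then use the Pythagorean inequality for the I-projection of the uniform vector onto the nested sets $\mathcal{C}_\t$ and telescope to $\ln \k$. The differences are refinements: bounding $|\langle\queryvec_\t,\predw_{\t+1}-\predw_\t\rangle|$ by total variation, using $\sum_i(\hat w_{\t+1,i}-\hat w_{\t,i})=0$, gives $\loss_\t\le\tfrac12 D_{\mathrm{KL}}(\predw_{\t+1}\|\predw_\t)$ rather than the paper's factor $2$, hence $\regret_\T\le\tfrac12\ln\k$. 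Your first-order-optimality argument with the maximal-support observation also proves the Pythagorean step directly, where the paper only cites it.
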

% \sum_{\t=1}^{\T} \loss_{\t} 

Our analysis uses the \emph{Kullback--Leibler (KL) divergence}: for $\mathbf{p}, \mathbf{q} \in \Delta_\k$,
\begin{align*}
    D_{\text{KL}}(\mathbf{p} \| \mathbf{q}) = \sum_{i=1}^{\k} p_i \ln \frac{p_i}{q_i},
\end{align*}
with the convention $0\ln 0 = 0$. It is non-negative and satisfies $D_{\text{KL}}(\mathbf{p} \| \mathbf{q}) = 0$ iff $\mathbf{p} = \mathbf{q}$~\cite{cover1999elements}. Since $\predw_1$ is the uniform distribution $(1/\k, \dots, 1/\k)$, for any $\w \in \Delta_\k$ we have:
\begin{equation}\label{eq:KL0}
    D_{\text{KL}}(\w \| \predw_1) = \sum_{i=1}^{\k} w_i \ln(\k w_i) = \sum_{i=1}^{\k} w_i \ln w_i + \ln \k \le \ln \k,
\end{equation}
where the second equality uses $\sum_i w_i = 1$, and the inequality uses $w_i \ln w_i \le 0$ for $w_i \in [0,1]$.

We first establish a squared loss bound for each round.
\begin{lemma}
    \label{lemma:single_step_loss}
    For any round $\t \in \{1, 2, \ldots, \T\}$, the squared loss incurred by the \maxent\ algorithm satisfies
    \begin{align*}
        \loss_{\t} \le 2\, D_{\text{KL}}(\predw_{\t+1} \| \predw_{\t}).
    \end{align*}
\end{lemma}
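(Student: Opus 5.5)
Since $\w^* \in \mathcal{C}_{\t+1}\subseteq\mathcal{C}_\t$, the true selectivity is $\trueselect_\t = \langle \queryvec_\t, \w^*\rangle = \langle \queryvec_\t, \predw_{\t+1}\rangle$, because every point of $\mathcal{C}_{\t+1}$ satisfies the new constraint. Hence
\[
\loss_\t = \bigl(\langle \queryvec_\t, \predw_{\t+1} - \predw_\t\rangle\bigr)^2 .
\]
The plan is to bound this by a norm of $\predw_{\t+1}-\predw_\t$ and then apply Pinsker's inequality. Since $\queryvec_\t\in[0,1]^\k$ and both vectors lie in the simplex, the difference $\mathbf{d}=\predw_{\t+1}-\predw_\t$ sums to zero. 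Therefore $\langle \queryvec_\t,\mathbf{d}\rangle$ equals the positive part of $\mathbf{d}$ weighted by entries in $[0,1]$, minus the negative part weighted likewise. This quantity has absolute value at most $\tfrac12\|\mathbf{d}\|_1$, the total variation distance $\mathrm{TV}(\predw_{\t+1},\predw_\t)$. Pinsker's inequality then gives
\[
\loss_\t \le \mathrm{TV}^2 \le \tfrac12 D_{\text{KL}}(\predw_{\t+1}\|\predw_\t),
\]
which is even stronger than the stated factor $2$. Even the crude bound $\|\mathbf{d}\|_1^2 \le 2D_{\text{KL}}$ (Pinsker in $\ell_1$ form) already gives the lemma, so I would present that version to be safe.

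One subtlety is that $D_{\text{KL}}(\predw_{\t+1}\|\predw_\t)$ must be finite, i.e.\ the support of $\predw_{\t+1}$ must be contained in that of $\predw_\t$. I would argue this from the max-entropy structure: the maximum-entropy point of a polytope $\mathcal{C}_\t$ has maximal support among points of $\mathcal{C}_\t$. Concretely, if some $\w\in\mathcal{C}_\t$ had $w_i>0$ where $(\predw_\t)_i=0$, then moving slightly from $\predw_\t$ toward $\w$ stays in the convex set $\mathcal{C}_\t$. It also increases entropy strictly, because the derivative of $-x\ln x$ at $0^+$ is $+\infty$ while the other terms change only boundedly. This contradicts optimality. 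Since $\predw_{\t+1}\in\mathcal{C}_{\t+1}\subseteq\mathcal{C}_\t$, its support lies in that of $\predw_\t$. In any case, Pinsker holds trivially if the divergence is $+\infty$, so this step matters only for the later telescoping, not for the lemma itself.

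The main point requiring care is the claim $\trueselect_\t=\langle\queryvec_\t,\predw_{\t+1}\rangle$, which uses that the database is static so $\w^*$ stays feasible, together with the $[0,1]$-coordinate bound converting the inner product into total variation. Everything else is the standard Pinsker inequality. The stronger Pythagorean-type identity for I-projections ($D(\w^*\|\predw_1)=D(\w^*\|\predw_{\t+1})+\dots$) is not needed here and will be reserved for telescoping in Theorem~\ref{thm:static_upper}.
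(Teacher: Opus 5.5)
Your proof is correct and follows the paper's argument. You use $\langle \queryvec_\t, \predw_{\t+1}\rangle = \trueselect_\t$ to write the loss as $\langle \queryvec_\t, \predw_{\t+1}-\predw_\t\rangle^2$, bound this by $\|\predw_{\t+1}-\predw_\t\|_1^2$ via H\"older with $\|\queryvec_\t\|_\infty\le 1$, and finish with Pinsker, exactly as the paper does.

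Your additional observation is valid and goes beyond the paper. Because the difference sums to zero, $\queryvec_\t$ may be replaced by $\queryvec_\t-\tfrac12\ones$, which sharpens the constant from $2$ to $\tfrac12$. That would improve \Cref{thm:static_upper} to $\tfrac12\ln\k$.

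The support-containment argument is also correct but not needed here. Finiteness of $D_{\text{KL}}(\predw_{\t+1}\|\predw_\t)$ in the telescoping step already follows from the Pythagorean inequality, whose left side is at most $\ln\k$.
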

\begin{proof}
    The squared loss in round $\t$ is $\loss_{\t} = (\trueselect_\t - \predselect_{\t})^2 = (\trueselect_\t - \langle \queryvec_{\t}, \predw_\t \rangle)^2$.
    Since the updated database estimate $\predw_{\t+1}$ is in $\mathcal{C}_{\t+1}$, it must satisfy the most recent query-selectivity equality $\langle \queryvec_{\t}, \predw_{\t+1} \rangle = \trueselect_{\t}$. Substituting this into the squared loss:
    \begin{align*}
        \loss_{\t} = \big(\langle \queryvec_{\t}, \predw_{\t+1} \rangle - \langle \queryvec_{\t}, \predw_\t \rangle\big)^2 = \langle \queryvec_{\t}, \predw_{\t+1} - \predw_\t \rangle^2.
    \end{align*}
    Applying H\"older's inequality (Lemma~\ref{lemma:holder_inequality} in Appendix~\ref{app:static_squared_loss}), and noting that $\queryvec_\t \in [0,1]^\k$ implies $\|\queryvec_\t\|_\infty \le 1$,
    \begin{align*}
        \langle \queryvec_{\t}, \predw_{\t+1} - \predw_\t \rangle^2 \le \|\queryvec_{\t}\|_{\infty}^2 \cdot \|\predw_{\t+1} - \predw_\t\|_1^2 \le \|\predw_{\t+1} - \predw_\t\|_1^2.
    \end{align*}
    Finally, Pinsker's inequality (Lemma~\ref{lemma:pinsker_inequality} in Appendix~\ref{app:static_squared_loss}) gives:
    \[
        \loss_{\t} \le \|\predw_{\t+1} - \predw_\t\|_1^2 \le 2\, D_{\text{KL}}(\predw_{\t+1} \| \predw_\t). \qedhere
    \]
\end{proof}

Last, we sum up the one-step loss to complete the proof of \Cref{thm:static_upper}. %Using the above \Cref{lemma:single_step_loss}, we now prove \Cref{thm:static_upper} by telescoping over KL divergences.

\begin{proof}[Proof of \Cref{thm:static_upper}]
    The \maxent\ update in Line~\ref{line:update_weight} is equivalent to the KL projection of the uniform distribution $\predw_1$ onto the current feasible region:
    \begin{align*}
        \predw_{\t}
        = \arg\max_{\w \in \mathcal{C}_{\t}} \left(-\sum_{i=1}^{\k} w_i \ln w_i\right)
        = \arg\min_{\w \in \mathcal{C}_{\t}} D_{\text{KL}}(\w \| \predw_1).
    \end{align*}
    By the Pythagorean theorem for KL divergence (Lemma~\ref{lemma:pythagorean_theorem} in Appendix~\ref{app:static_squared_loss}), for any $\w$ in $\mathcal{C}_{\t}$ we have:
    \begin{align*}
        D_{\text{KL}}(\w \| \predw_1)
        \ge D_{\text{KL}}(\w \| \predw_\t) + D_{\text{KL}}(\predw_\t \| \predw_1).
    \end{align*}

    The feasible regions are nested, i.e., $\mathcal{C}_{\t+1} \subseteq \mathcal{C}_\t$ for all $\t$. Thus $\predw_{\t+1}$ is also in $\mathcal{C}_\t$. Taking $\w = \predw_{\t+1}$ in the inequality above and rearranging yields:
    \begin{align*}
        D_{\text{KL}}(\predw_{\t+1} \| \predw_\t)
        \;\le\; D_{\text{KL}}(\predw_{\t+1} \| \predw_1) \;-\; D_{\text{KL}}(\predw_\t \| \predw_1).
    \end{align*}
    Summing over $\t = 1, \ldots, \T$ telescopes to
    \begin{align*}
        \sum_{\t=1}^{\T} D_{\text{KL}}(\predw_{\t+1} \,\|\, \predw_\t)
        \le D_{\text{KL}}(\predw_{\T+1} \| \predw_1) - D_{\text{KL}}(\predw_1 \| \predw_1)
        \le \ln \k,
    \end{align*}
    where the last inequality uses (\ref{eq:KL0}) and $D_{\text{KL}}(\predw_1 \,\|\, \predw_1) = 0$.
    Combining this with \Cref{lemma:single_step_loss},
    \[
        \regret_{\T} \;=\; \sum_{\t=1}^{\T} \loss_\t
        \;\le\; 2 \sum_{\t=1}^{\T} D_{\text{KL}}(\predw_{\t+1} \,\|\, \predw_\t)
        \;\le\; 2\ln \k. \qedhere
    \]
\end{proof}

\subsection{Lower Bound under Squared Loss: Range Queries}\label{subsec:static_squared_lower_range}

\begin{restatable}{proposition}{binarylowerbound}\label{thm:static_lower}
      For static databases under the squared loss function, for range queries, any (possibly randomized) algorithm $\mathcal{A}$ incurs expected regret of $\Omega(\log k)$ in the worst case, where $k$ is the support size of the domain.
\end{restatable}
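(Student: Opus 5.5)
We use Yao's principle: we construct a random static database and an adaptive but deterministic range-query schedule, and show that every deterministic learner incurs expected cumulative squared loss $\Omega(\log \k)$. By \Cref{obs:static-baseline}, regret equals cumulative loss for static databases, so the bound transfers to randomized algorithms. Assume $\k = 2^m$. The database is generated by a random binary tree split. Place mass $1$ at the root interval $[1,\k]$. At each internal node, split the node's mass between its left and right halves with fractions $(\tfrac12 + \epsilon,\tfrac12-\epsilon)$ or $(\tfrac12-\epsilon,\tfrac12+\epsilon)$, choosing the sign independently and uniformly at random. Every leaf then gets positive mass, so $\w^* \in \Delta_\k$.

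The query schedule descends the tree level by level. At level $\ell = 0,1,\dots,m-1$, for each node $u$ at that level we issue the range query corresponding to the left child interval of $u$. Each such query is a contiguous range, as required.

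The key step is a per-query loss bound. Before the query for node $u$ is asked, all previous queries concern ancestors of $u$, nodes at the same level, or nodes in disjoint subtrees. So the history determines the mass $M_u$ of $u$ exactly but is independent of $u$'s own split sign. The true selectivity is $M_u(\tfrac12 \pm \epsilon)$ with each sign having conditional probability $1/2$. Any prediction therefore suffers expected squared loss at least $(\epsilon M_u)^2$, the squared half-gap between the two outcomes. Since $M_u \ge ((1-2\epsilon)/2)^{\ell}$, the level-$\ell$ contribution is
\[
2^\ell \epsilon^2 \Big(\tfrac{1-2\epsilon}{2}\Big)^{2\ell} = \epsilon^2 \Big(\tfrac{(1-2\epsilon)^2}{2}\Big)^{\ell},
\]
which decays geometrically. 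That only gives $O(\epsilon^2)$ in total, so the naive count fails.

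The main obstacle is fixing this decay. Instead of querying all nodes at each level, we follow a single root-to-leaf path: at depth $\ell$, query only the left half of the current node. The mass $M_u$ still shrinks like $2^{-\ell}$, so the loss at depth $\ell$ is about $\epsilon^2 4^{-\ell}$, which is again summable. Hence uniform splits cannot give $\log \k$, and we must make the stakes scale-invariant. We do this with a randomized Beta/Bernoulli-type construction analogous to \Cref{thm:dynamic_lower}. Take nested prefix ranges $[1,j]$ and queries whose unknown fraction has posterior variance $\Theta(1/j)$ relative to mass $\Theta(1)$. Concretely, let $\w^*$ put mass $\prior$-dependent weights on the leaves, for example a random $\mathrm{Beta}$-distributed split whose observations on prefixes $[1,1],[1,2],\dots$ act like sequential samples. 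Then the expected loss on query $j$ equals the posterior variance, $\Theta(1/j)$, and summing over $j \le \k$ gives the harmonic sum $\Omega(\log \k)$.

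The plan is therefore to encode a Pólya-urn or Bernoulli($\prior$) sequence into the prefix masses. Set $w_i \propto X_i$ with $X_i \sim \mathrm{Bernoulli}(\prior)$ plus a small smoothing term and normalization that keep $\w^*$ in the simplex. The prefix query $[1,j]$ then reveals $X_j$, and the posterior-variance argument of \Cref{thm:dynamic_lower} carries over. The delicate point, which I expect to be hardest, is the normalization: it couples all the coordinates, so predicting prefix sums must not leak information about the future $X_i$. We handle it by querying differences that cancel the normalizer, for instance pairs of bins with a fixed total mass $2/\k$ split according to $\prior$. This keeps the loss scale at $\Theta(1/\k^2)$ per query, which would then have to be re-amplified by grouping bins into blocks of geometrically growing size.
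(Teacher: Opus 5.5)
\textbf{There is a genuine gap: the proposal never reaches a proof.} The route you finally commit to is only a plan. That route encodes a $\mathrm{Bernoulli}(\mu)$ sequence into prefix masses, uses bin pairs of mass $2/k$, and then ``re-amplifies'' by blocks of geometrically growing size. By your own accounting it gives per-query loss $\Theta(1/k^2)$, so total loss $O(1/k)$ before the re-amplification step, and that step is never specified. The normalization coupling you flag as the hardest point is also left unresolved. So nothing in the proposal actually establishes $\Omega(\log k)$.

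\textbf{Your first construction already works with $\epsilon = 1/2$.} You abandoned it because of the worst-case bound $M_u \ge ((1-2\epsilon)/2)^{\ell}$, which forces small $\epsilon$. But your own per-query bound shows the level-$\ell$ contribution is $\epsilon^2 \sum_u M_u^2$, summed over level-$\ell$ nodes. This quantity is maximized by maximally unbalanced splits, not by near-even ones.

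With $\epsilon = 1/2$, each internal node sends all its mass to one child chosen uniformly at random. Then exactly one node per level carries $M_u = 1$, so every level contributes at least $1/4$. The total over $\log_2 k$ levels is $\tfrac14 \log_2 k$. Your requirement that every leaf get positive mass is unnecessary, since $\Delta_k$ allows zero coordinates. Your conditional-independence argument and the half-gap bound $(\epsilon M_u)^2$ are correct and carry over unchanged. The level-by-level schedule even makes the query sequence non-adaptive.

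\textbf{This is essentially the paper's proof.} The paper takes $\wstar = \evec_{i^*}$ with $i^*$ uniform on the first $2^{\lfloor \log_2 k \rfloor}$ coordinates. The adversary binary-searches with range queries over the first half of the current candidate interval, which is your single root-to-leaf path with $\epsilon = 1/2$. The posterior on $i^*$ stays uniform on the candidate interval, so each revealed selectivity is a fair $\{0,1\}$ coin given the history. Each query therefore costs expected squared loss at least $1/4$ for $\lfloor \log_2 k \rfloor$ rounds. \Cref{obs:static-baseline} then equates regret with cumulative loss, and Yao's principle extends the bound to randomized algorithms. The stakes do not shrink because the queried interval always carries conditional mass $1$; no Beta/P\'olya-urn encoding is needed.
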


\begin{proof}[Proof sketch]
By Yao's minimax principle~\cite{yao1977probabilistic}, it suffices to exhibit a distribution over input sequences on which every \emph{deterministic} algorithm incurs expected regret $\Omega(\log \k)$.
We construct a randomized adversary that puts all the database mass on a uniformly random coordinate $i^*$. The adversary maintains a \emph{candidate set} of indices still consistent with the prior selectivity observations, initialized to $[\k]$. In round $\t$, it issues a range query over one half of the candidate set, and the revealed selectivity restricts the candidate set of the next round to whichever half contains $i^*$. By induction, the posterior on $i^*$ given the algorithm's history remains uniform on the candidate set, so the true selectivity is a fair coin flip for any deterministic algorithm. This forces expected squared loss at least $1/4$ for $\lfloor \log_2 \k \rfloor$ rounds before $i^*$ is identified, yielding $\Omega(\log \k)$ expected regret for any deterministic algorithm by \Cref{obs:static-baseline}. (See Appendix~\ref{app:static_lower} for the detailed proof.)
%and by Yao's minimax principle~\cite{yao1977probabilistic} the same bound holds for the worst-case expected regret of every randomized algorithm. See \Cref{proof:static_lower} for the full argument.
\end{proof}

This matches the $O(\log \k)$ upper bound of \Cref{thm:static_upper}, giving $\Theta(\log \k)$ regret for range, subset, and linear queries.

\subsection{Lower Bound under Absolute Loss: Subset Queries}\label{subsec:static_absolute_lower_bound}

\begin{theorem} \label{thm:static_absolute_lower_bound}
For static databases under the absolute loss function, any (possibly randomized) online learning algorithm incurs $\Omega\left(\sqrt{\frac{\k}{\log \k}}\right)$ expected regret in the worst case, even when queries are restricted to subset queries.
\end{theorem}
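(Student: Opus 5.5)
We follow the strategy sketched in the overview: apply Yao's minimax principle together with \Cref{obs:static-baseline}. It then suffices to exhibit a distribution over valid static instances (a database $\w^*\in\Delta_\k$ plus a query sequence) on which every deterministic algorithm has expected cumulative absolute loss $\Omega(\sqrt{\k/\log \k})$.

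\textbf{Construction.} Let $m$ be the largest power of two with $m\le \k$, so $m\ge \k/2$. Let $\Hmat$ be the $m\times m$ Sylvester Hadamard matrix, with rows $\hvec_1=\ones,\hvec_2,\dots,\hvec_m$ that are $\pm1$ vectors and mutually orthogonal. Draw independent Rademacher signs $\epsilon_2,\dots,\epsilon_m$ and set
\[
w_i=\tfrac1m\Big(1+\delta\sum_{j=2}^m \epsilon_j H_{ji}\Big)\ \ (i\le m),\qquad w_i=0\ \ (i>m).
\]
Then $\sum_i w_i=1$ automatically, since $\langle \hvec_j,\ones\rangle=0$ for $j\ge2$. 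In round $\t=j-1$, for $j=2,\dots,m$, issue the subset query $\queryvec_\t$ that is the indicator of $\{i\le m: H_{ji}=+1\}$, i.e.\ $\queryvec_\t=(\ones+\hvec_j)/2$ on the first $m$ coordinates. By orthogonality, $\langle \hvec_j,\w\rangle=\delta\epsilon_j$, so $\trueselect_\t=\tfrac12+\tfrac{\delta}{2}\epsilon_j$. The revealed history before round $\t$ is a function of $\epsilon_2,\dots,\epsilon_{j-1}$ only. Hence $\epsilon_j$ is independent of the prediction $\predselect_\t$, and $\E|\trueselect_\t-\predselect_\t|\ge \delta/2$, because for any fixed $x$ we have $\tfrac12(|a+\delta/2-x|+|a-\delta/2-x|)\ge\delta/2$. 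Summed over $m-1$ rounds, this gives $(m-1)\delta/2$, provided the instance is valid.

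\textbf{Validity (main obstacle).} The vector $\w$ lies in $\Delta_\k$ iff $\bigl|\sum_{j\ge2}\epsilon_jH_{ji}\bigr|\le 1/\delta$ for every $i\le m$. Each such sum is a sum of $m-1$ Rademachers. Hoeffding's inequality plus a union bound over the $m$ coordinates bound the failure probability by $2m\exp(-1/(2m\delta^2))$. Taking $\delta=1/\sqrt{c\,m\ln m}$ for a suitable constant $c$ makes this at most $\delta/4$; call the good event $E$.

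The difficulty is that conditioning on $E$ destroys the exact independence of $\epsilon_j$ from the past. I would avoid conditioning round by round. Instead, I would note that the formulas for $\trueselect_\t$ still make sense off $E$, with values in $[0,1]$, so each round's loss is at most $1$. This gives
\[
\E[\loss_\t\ones_E]\ \ge\ \E[\loss_\t]-\Pr[E^c]\ \ge\ \delta/2-\delta/4=\delta/4 .
\]
Now take the input distribution to be the law of $(\w,\queryvec_{1..m-1})$ conditioned on $E$. Every instance in its support is a valid static database. Under it, the expected cumulative loss is $\E[\sum_\t \loss_\t\ones_E]/\Pr[E]\ge (m-1)\delta/4$. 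Since the queries are fixed in advance, conditioning does not affect the algorithm's behavior on a given realization, so this computation is legitimate.

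\textbf{Conclusion.} With $m=\Theta(\k)$ and $\delta=\Theta(1/\sqrt{\k\log \k})$, every deterministic algorithm has expected regret $(m-1)\delta/4=\Omega(\sqrt{\k/\log\k})$ on this distribution, since regret equals cumulative loss by \Cref{obs:static-baseline}. Yao's principle then extends the bound to randomized algorithms. All queries used are subset queries, as the theorem requires.
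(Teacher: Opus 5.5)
Your proof is correct and is essentially the paper's argument. It uses the same Hadamard-perturbed uniform vector $\frac1d\Hmat^\top\Yvec$ (your $m$ and $\delta\epsilon_j$ are the paper's $d$ and $Y_j$), the same subset queries $\frac12(\ones+\hvec_j)$ with selectivity $\frac12(1+Y_j)$, the same Hoeffding-plus-union-bound validity estimate with $\delta=\Theta(1/\sqrt{k\log k})$, and the same key inequality $\E[\loss_t\,\mathbf{1}_E]\ge\delta/2-\Pr[E^c]\ge\delta/4$. The differences are cosmetic: you condition the input distribution on the good event where the paper replaces an invalid $\tildew$ by the uniform fallback $\frac1k\ones$ (same bound), you skip the paper's uninformative first query $\ones$, and your ``iff'' for validity should be ``if'' since nonnegativity needs only $\sum_{j\ge2}\epsilon_jH_{ji}\ge-1/\delta$ --- harmless, because the two-sided event still implies $\w\in\Delta_\k$.
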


To prove this theorem, we start by defining Hadamard matrices.
%\paratitle{Hadamard Matrices.}
\begin{definition}[Normalized Hadamard Matrix~\cite{horadam2012hadamard}]\label{def:hadamard}
  A square matrix $\Hmat \in \{-1, +1\}^{d \times d}$ is a \emph{normalized Hadamard matrix} 
  of order $d$ if (i) the rows of $\Hmat$ are mutually orthogonal, and 
  (ii) the first row and first column of $\Hmat$ consist entirely of $+1$s.
  \end{definition}
Denote the $i$-th row of $\Hmat$ by $\hvec_i^\top$. Condition~(i) gives that $\langle \hvec_i, \hvec_j \rangle = 0$ for all $j \neq i$, and $\langle \hvec_i, \hvec_i \rangle = d$; together with~(ii), this implies $\langle \ones, \hvec_i \rangle = 0$ for every $i \geq 2$. It is well known~\cite{sylvester1867lx} that a normalized Hadamard matrix of order $d$ always exists when $d$ is a power of $2$.

\smallskip
\paratitle{Construction.}
Let $d$ be the largest power of $2$ at most $k$. The construction uses only the first $d$ coordinates, and the last $k-d$ coordinates carry zero weight.

First, we construct the candidate database vector using a normalized Hadamard matrix $\Hmat$ of order $d$, and a random vector $\Yvec = (Y_1, Y_2, \dots, Y_d)^\top$ with $Y_1 = 1$ fixed and $Y_2, \dots, Y_d$ independently and uniformly drawn from $\{-\delta, +\delta\}$. $\delta > 0$ is a parameter to be chosen later. Define the candidate database vector $\tildew \in \mathbb{R}^k$ as
\begin{equation}\label{eq:unconstrained_weight_vector}
\tildew = \begin{pmatrix} \frac{1}{d}\, \Hmat^\top \Yvec \\ \mathbf{0}\end{pmatrix},
\end{equation}
where $\mathbf{0}$ denotes the all-zero vector of dimension $k - d$.

Next, we define the query sequence. For $t = 1$, we use an all-ones vector: $\queryvec_1 = \ones \in \{0, 1\}^k$. Since the absolute loss is non-negative, round $1$'s contribution to the regret is non-negative, so we obtain a valid lower bound by considering rounds $t = 2, \ldots, d$ only.
For $t = 2, \ldots, d$, the queries are constructed from the rows of $\Hmat$:
\begin{equation}\label{eq:lb_query_construction}
  \queryvec_t = \begin{pmatrix} \frac{1}{2}(\ones + \hvec_t) \\ \mathbf{0} \end{pmatrix} \in \{0, 1\}^k.
\end{equation}
Since $\hvec_t \in \{-1, +1\}^d$, the first $d$ coordinates of $\queryvec_t$ are either $0$ or $1$, 
and the remaining $k - d$ coordinates are also $0$; hence $\queryvec_t$ is a valid subset query. 
The properties of this construction are captured by the three lemmas below.

\smallskip
\paratitle{Proof of \Cref{thm:static_absolute_lower_bound}.}
We first establish three lemmas.
\begin{lemma}\label{lem:hadamard_extraction}
For each round $t \in \{2, 3, \ldots, d\}$, the true selectivity of query $\queryvec_t$ defined in~\eqref{eq:lb_query_construction} for $\tildew$ defined in~\eqref{eq:unconstrained_weight_vector} is $\trueselect_t = \frac{1}{2}(1 + Y_t)$.
\end{lemma}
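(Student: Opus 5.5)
The plan is a direct computation of $\trueselect_t = \langle \queryvec_t, \tildew\rangle$ using only the orthogonality relations of $\Hmat$ stated after \Cref{def:hadamard}. Since the last $k-d$ coordinates of both $\queryvec_t$ and $\tildew$ vanish, the inner product reduces to the first $d$ coordinates:
\begin{align*}
\trueselect_t = \Big\langle \tfrac{1}{2}(\ones + \hvec_t),\ \tfrac{1}{d}\Hmat^\top \Yvec \Big\rangle .
\end{align*}
I will expand $\Hmat^\top \Yvec = \sum_{j=1}^d Y_j \hvec_j$, which writes $\tildew$ as a combination of the Hadamard rows. Substituting this gives
\begin{align*}
\trueselect_t = \frac{1}{2d}\sum_{j=1}^d Y_j \big(\langle \ones, \hvec_j\rangle + \langle \hvec_t, \hvec_j\rangle\big).
\end{align*}

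Next I evaluate the two inner products using condition (ii), which gives $\hvec_1 = \ones$. For the first term, $\langle \ones, \hvec_j\rangle = \langle \hvec_1, \hvec_j\rangle$ equals $d$ when $j=1$ and $0$ otherwise. For the second term, $\langle \hvec_t, \hvec_j\rangle$ equals $d$ when $j=t$ and $0$ otherwise. Because $t \ge 2$, the two nonzero contributions come from different indices, $j=1$ and $j=t$, and every other term vanishes. Hence
\begin{align*}
\trueselect_t = \frac{1}{2d}\,(d\,Y_1 + d\,Y_t) = \frac{1}{2}(1 + Y_t),
\end{align*}
where the last step uses $Y_1 = 1$.

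There is no real obstacle here. The only points needing care are that the expansion of $\Hmat^\top\Yvec$ is in terms of rows, which matches the notation $\hvec_i^\top$, and that $t\ge 2$ keeps the $j=1$ and $j=t$ terms distinct. As a sanity check, the same computation with $\queryvec_1 = \ones$ gives $\langle \ones, \tildew\rangle = Y_1 = 1$. This confirms the normalization of $\tildew$, although the lemma does not require it.
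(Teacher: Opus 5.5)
Your proposal is correct and is the same direct computation as the paper. The paper writes the orthogonality relations in matrix form as $\ones^\top \Hmat^\top = d\,\evec_1^\top$ and $\hvec_t^\top \Hmat^\top = d\,\evec_t^\top$; this is exactly your expansion $\Hmat^\top\Yvec = \sum_j Y_j \hvec_j$ paired with $\ones = \hvec_1$ and with $\hvec_t$.
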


\begin{proof}
The true selectivity is $\trueselect_{\t} = \langle \queryvec_{\t}, \tildew \rangle$. Since both $\queryvec_t$ and $\tildew$ have zero in their last $k - d$ coordinates, the inner product reduces to the first $d$ coordinates:
\begin{equation}\label{eq:hadamard_extraction}
\begin{aligned}
    \trueselect_t
    &= \left\langle \frac{1}{2}(\ones+\hvec_t),\; \frac{1}{d}\Hmat^\top \Yvec \right\rangle
    = \frac{1}{2d} (\ones + \hvec_t)^\top \Hmat^\top \Yvec
    = \frac{1}{2d} \left( \ones^\top \Hmat^\top + \hvec_t^\top \Hmat^\top \right) \Yvec.
\end{aligned}
\end{equation}
    %Recall that $\ones^\top = \hvec_1^\top$.
  By the orthogonality of normalized Hadamard matrices, we have $\ones^\top \Hmat^\top = \hvec_1^\top \Hmat^\top = d \evec_1^\top$ and $\hvec_t^\top \Hmat^\top = d \evec_t^\top$, where $\evec_1, \evec_t \in \{0,1\}^d$ denote the first and $t$-th standard basis vectors.
  Substituting these results back into~\eqref{eq:hadamard_extraction}, and recalling that $Y_1 = 1$:
    \[
      \trueselect_t
      = \frac{1}{2d} (d \evec_1^\top + d \evec_t^\top)\Yvec
      = \frac{1}{2}(Y_1 + Y_t)
      = \frac{1}{2}(1 + Y_t). \qedhere
    \]
    %Therefore, the true selectivity at each round depends only on the single random variable $Y_t$.
    \end{proof}
Therefore, the true selectivity in each round depends only on a single random variable. The following lemma bounds the expected loss in each round.

\begin{lemma}\label{lem:randomness_loss}
  For each round $t \in \{2, \ldots, d\}$, let $\predselect_t \in [0, 1]$ be the deterministic algorithm's prediction given $Y_1, \ldots, Y_{t-1}$. Let $\loss_t(\tildew) = |\predselect_t - \langle \queryvec_t, \tildew \rangle|$ be the absolute loss at round $t$ for $\tildew$. Then the expected loss over $Y_t$ for $\tildew$ is given by:
  \begin{align*}
    \E_{Y_t}[\loss_t (\tildew) \mid Y_1, \ldots, Y_{t-1}] \ge \frac{\delta}{2}.
  \end{align*}
\end{lemma}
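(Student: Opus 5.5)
The plan is to use \Cref{lem:hadamard_extraction} to turn the round-$t$ loss into a one-dimensional guessing problem about $Y_t$, and then lower bound the error of any guess against a symmetric two-point distribution.

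\textbf{Step 1: reduce to a guess about $Y_t$.} By \Cref{lem:hadamard_extraction}, the true selectivity in round $t$ is $\trueselect_t = \frac{1}{2}(1+Y_t)$. So $\trueselect_t$ takes the value $\frac{1}{2}+\frac{\delta}{2}$ or $\frac{1}{2}-\frac{\delta}{2}$.

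\textbf{Step 2: independence of $Y_t$ from the prediction.} The prediction $\predselect_t$ is a deterministic function of the history $\history_t$. The queries $\queryvec_1,\dots,\queryvec_t$ are fixed in advance, and by \Cref{lem:hadamard_extraction} the revealed selectivities are functions of $Y_1,\dots,Y_{t-1}$; round $1$ gives $\langle \ones,\tildew\rangle = \frac{1}{d}\ones^\top\Hmat^\top\Yvec = Y_1 = 1$. Hence $\predselect_t$ is a fixed number once $Y_1,\dots,Y_{t-1}$ are conditioned on. Since $Y_t$ is independent of $Y_1,\dots,Y_{t-1}$, its conditional law is still uniform on $\{-\delta,+\delta\}$.

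\textbf{Step 3: the two-point bound.} The conditional expectation is
\[
\tfrac{1}{2}\left|\predselect_t-\tfrac{1}{2}-\tfrac{\delta}{2}\right| + \tfrac{1}{2}\left|\predselect_t-\tfrac{1}{2}+\tfrac{\delta}{2}\right|.
\]
By the triangle inequality, $|a-x|+|x-b|\ge |a-b|$, this is at least $\frac{1}{2}\cdot\delta = \frac{\delta}{2}$. This bound holds for every value of $\predselect_t$.

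There is no real obstacle here. The only point needing care is Step 2: we must confirm that the algorithm's information before round $t$ depends on $\Yvec$ only through $Y_1,\dots,Y_{t-1}$. This follows from \Cref{lem:hadamard_extraction} together with the fact that the query sequence is non-adaptive.
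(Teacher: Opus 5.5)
Your proof is correct and follows the paper's argument: \Cref{lem:hadamard_extraction} reduces the round to guessing $Y_t$, and the triangle inequality on the two symmetric values gives the bound (the paper writes this as $\E|\hat{y}_t - Y_t| \ge \delta$ with $\hat{y}_t = 2\predselect_t - 1$, then halves it). Your Step 2 adds a check the paper does not spell out, since the paper builds it into the lemma's hypothesis: the history generated by $\tildew$, including round $1$'s selectivity $Y_1 = 1$, depends only on $Y_1,\dots,Y_{t-1}$.
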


\begin{proof}
Fix $(Y_1, \ldots, Y_{t-1})$.
By \Cref{lem:hadamard_extraction}, $\trueselect_t = \frac{1}{2}(1 + Y_t)$, thus the absolute loss is:
\begin{align*}
    \loss_t (\tildew) = |\predselect_t - \trueselect_t| = \left| \predselect_t - \frac{1}{2}(Y_t + 1) \right| 
    = \frac{1}{2} |(2\predselect_t - 1) - Y_t|.
\end{align*}
Define $\hat{y}_t \coloneqq 2\predselect_t - 1 \in [-1, 1]$, so that 
$\loss_t (\tildew) = \frac{1}{2}|\hat{y}_t - Y_t|$.

Since the algorithm is deterministic, $\hat{y}_t$ is a deterministic function of $(Y_1, \ldots, Y_{t-1})$. $Y_t = \pm \delta$ with equal probability and is independent of $(Y_1, \ldots, Y_{t-1})$. Therefore,
\begin{equation}\label{eq:expect_single_loss_lower_bound}
    \E_{Y_t}[|\hat{y}_t - Y_t| \mid Y_1, \ldots, Y_{t-1}] 
    = \frac{1}{2}| \delta - \hat{y}_t| + \frac{1}{2}|\hat{y}_t + \delta| 
    \ge \delta,
\end{equation}
where the inequality follows from the triangle inequality 
$|a| + |b| \ge |a + b|$ applied with $a = \delta - \hat{y}_t$ and 
$b = \hat{y}_t + \delta$. Therefore, 
$\E_{Y_t}[\loss_t(\tildew) \mid Y_1, \ldots, Y_{t-1}] = \frac{1}{2}\E_{Y_t}[|\hat{y}_t - Y_t| \mid Y_1, \ldots, Y_{t-1}] \ge \delta/2$.
\end{proof}

%The constructed $\tildew$ may violate the non-negativity constraint of $\Delta_k$ for extreme realizations of $Y_2, \dots, Y_k$.

So far, we have shown that our construction of $\tildew$ and $\queryvec_t$ forces any deterministic algorithm to incur at least $\delta/2$ expected absolute loss in each $\t \in \{2, \ldots, d\}$. The next lemma shows that the probability of $\tildew$ being a valid database state depends on $\delta$ and $d$. This implies that by choosing $\delta$ sufficiently small, the probability can be made arbitrarily close to $1$.

\begin{lemma}\label{lem:database_validity}
  Let $E = \{\tildew \in \Delta_k\}$ denote the event that the candidate weight vector $\tildew$ defined in~\eqref{eq:unconstrained_weight_vector} lies in the simplex $\Delta_k$. The probability that $E$ holds is $\mathbb{P}(E) \ge 1 - 2d\exp\!\left(-\frac{1}{2d\delta^2}\right)$.
  %Particularly, by setting $\delta = \frac{1}{\sqrt{10 k \ln \k}}$, the probability of $E$ is at least $1- \frac{1}{4} \delta$.
  %By setting $\delta = 1/\sqrt{2k \ln(8k)}$, the probability of $E$ is at least $3/4$.
  % In particular, for $\delta = 1/\sqrt{Ck\ln k}$ with any sufficiently large constant 
  % $C > 0$, we have $\mathbb{P}_P(\mathbf{w} \in \Delta_k) \geq 1 - 1/k^3$.
\end{lemma}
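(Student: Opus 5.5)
We have to show that, with the stated probability, $\tildew$ sums to one and has nonnegative coordinates. The last $k-d$ coordinates are zero, so only the first $d$ coordinates $\tilde w_j = \frac{1}{d}(\Hmat^\top \Yvec)_j$ for $j\in[d]$ need attention.

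\emph{Normalization (deterministic).} We have
\[
\sum_{j=1}^d \tilde w_j = \tfrac{1}{d}\ones^\top \Hmat^\top \Yvec = \tfrac1d \hvec_1^\top \Hmat^\top \Yvec = \evec_1^\top \Yvec = Y_1 = 1.
\]
This uses $\ones=\hvec_1$ and row orthogonality, exactly as in the proof of \Cref{lem:hadamard_extraction}. Hence $E$ reduces to the event that $\tilde w_j\ge 0$ for all $j\in[d]$.

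\emph{Single coordinate.} Write
\[
\tilde w_j = \tfrac1d\sum_{i=1}^d H_{ij}Y_i = \tfrac1d\Bigl(1 + \sum_{i=2}^d H_{ij}Y_i\Bigr),
\]
using $H_{1j}=1$ and $Y_1=1$. Thus $\tilde w_j<0$ forces $S_j:=\sum_{i=2}^d H_{ij}Y_i < -1$. Since $H_{ij}\in\{\pm1\}$ and the $Y_i$ are independent uniform on $\{\pm\delta\}$, the terms $H_{ij}Y_i$ are independent, mean zero, and take values in $[-\delta,\delta]$. There are $d-1\le d$ of them.

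Hoeffding's inequality, applied to $d-1$ terms each with range $2\delta$, gives
\[
\Prob(S_j<-1)\le \exp\!\Bigl(-\frac{2}{(d-1)(2\delta)^2}\Bigr)\le \exp\!\Bigl(-\frac{1}{2d\delta^2}\Bigr).
\]
Equivalently, a Rademacher sub-Gaussian bound gives $\Prob(S_j\le -1)\le \exp(-1/(2(d-1)\delta^2))$. One could use the two-sided tail bound $2\exp(\cdot)$ to match the factor $2$ in the statement, which only loosens the estimate.

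\emph{Union bound.} Taking a union bound over the $d$ coordinates gives
\[
\Prob(E^c)\le \sum_{j=1}^d \Prob(S_j<-1)\le 2d\exp\!\Bigl(-\frac{1}{2d\delta^2}\Bigr),
\]
which is the claimed bound.

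The only delicate point is matching the constants in the exponent. The variance proxy is $(d-1)\delta^2$, not $d\delta^2$, which works in our favour. I would state Hoeffding's inequality in the form $\Prob(|\sum X_i|\ge s)\le 2\exp(-s^2/(2\sum c_i^2))$ for $|X_i|\le c_i$, set $c_i=\delta$ and $s=1$, and bound $d-1$ by $d$. This yields exactly $2d\exp(-1/(2d\delta^2))$ after the union bound.
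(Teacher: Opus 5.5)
Your proof is correct and follows the paper's argument. It derives normalization deterministically from Hadamard orthogonality, writes each nonzero coordinate as $\frac{1}{d}(1+S_j)$, applies Hoeffding to the $d-1$ independent $\pm\delta$ terms (bounding $d-1$ by $d$), and takes a union bound over the $d$ coordinates. The only cosmetic difference is that you bound the one-sided event $\{S_j<-1\}$, which gives the slightly sharper $d\exp\!\left(-\frac{1}{2d\delta^2}\right)$, whereas the paper bounds the two-sided event $\{|S_j|>1\}$ and so picks up the factor $2$ in the stated bound.
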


\begin{proof}
For $\tildew$ to be in the simplex $\Delta_k$, it must satisfy: (1) $\sum_{i=1}^k \tilde{w}_i = 1$ and (2) $\tilde{w}_i \geq 0$ for all $i \in [k]$. We first show that the first constraint holds for every realization of $\Yvec$: %because $\hvec_1^\top \Hmat^\top= d \evec_1^\top$ and $Y_1 = 1$:
\begin{align*}
  \sum_{i=1}^k \tilde{w}_i = \sum_{i=1}^d \tilde{w}_i = \frac{1}{d} \ones^\top \Hmat^\top \Yvec = \frac{1}{d} \cdot d\, \evec_1^\top \Yvec = Y_1 = 1.
\end{align*}

For the second constraint, the last $k - d$ coordinates of $\tildew$ are zero, so we focus on the first $d$ coordinates. Each $\tilde{w}_i$ for $i \in \{1, \ldots, d\}$ is of the form $\tilde{w}_i = \frac{1}{d} + \frac{1}{d} \sum_{s=2}^d Y_s h_{s,i}$, where $h_{s,i}$ denotes the $(s,i)$-entry of $\Hmat$. Thus the non-negativity condition $\tilde{w}_i \geq 0$ is equivalent to $\sum_{s=2}^{d} Y_s h_{s,i} \geq -1$. Let $\tilde{E}$ denote the event that $|\sum_{s=2}^d Y_{s} h_{s,i}| \leq 1$ for all $i \in \{1, \ldots, d\}$, i.e.,
\begin{align*}
\tilde{E} = \bigcap_{i=1}^d \left\{ \left|\sum_{s=2}^d Y_{s} h_{s,i}\right| \le 1 \right\}.
\end{align*}

If $\tilde{E}$ holds, then $\sum_{s=2}^d Y_s h_{s,i} \geq -1$ for all $i \in \{1, \ldots, d\}$. Hence $\tilde{E} \subseteq
E$, and it suffices to lower bound $\mathbb{P}(\tilde{E})$.
We first bound the probability of the complement event, denoted by $\tilde{E}^c$:
\begin{align*}
\tilde{E}^c = \bigcup_{i=1}^d \left\{ \left|\sum_{s=2}^d Y_{s} h_{s,i}\right| > 1 \right\}.
\end{align*}

%Under the unconditioned distribution $P$,
Fix a coordinate $i$. The terms $Y_s h_{s,i}$ for $s \in \{2, \ldots, d\}$ are independent random variables taking values in $\{-\delta, +\delta\}$ with equal probability. Thus the expectation of the sum is zero. %: $\mathbb{E}[\sum_{s=2}^d Y_{s} h_{s,i}] = 0$. 
Applying Hoeffding's inequality\footnote{We use the standard two-sided Hoeffding inequality~\cite{hoeffding1963}: for independent random variables $X_1, \ldots, X_n$ with $X_i \in [a_i, b_i]$ almost surely, $\mathbb{P}\!\left(\left|\sum_{i=1}^n X_i - \E\!\left[\sum_{i=1}^n X_i\right]\right| \geq \epsilon\right) \leq 2\exp\!\left(-\frac{2\epsilon^2}{\sum_{i=1}^n (b_i - a_i)^2}\right)$ for any $\epsilon > 0$. The same bound applies to $\mathbb{P}(|\cdot| > \epsilon)$, since $\{|\cdot| > \epsilon\} \subseteq \{|\cdot| \geq \epsilon\}$.} gives:
\begin{align*}
\mathbb{P}\left( \left| \sum_{s=2}^d Y_{s} h_{s,i} \right| > 1 \right) \le 2 \exp\left( - \frac{2 \cdot 1^2}{(d-1) (2\delta)^2} \right) \le 2 \exp\left(- \frac{1}{2 d \delta^2} \right).
\end{align*}
A union bound over the $d$ coordinates gives $\mathbb{P}(\tilde{E}^c) \leq 2d\exp\!\left(-\frac{1}{2d\delta^2}\right)$, which implies that:
\[
  \mathbb{P}(E) \geq \mathbb{P}(\tilde{E}) = 1 - \mathbb{P}(\tilde{E}^c) \ge 1 - 2d\exp\left(-\frac{1}{2d\delta^2}\right). \qedhere
\]

  \end{proof}
  Combining above results, we can now complete the proof.

\begin{proof}[Proof of \Cref{thm:static_absolute_lower_bound}]\label{proof:static_absolute_lower_bound}
%By Yao's minimax principle, it suffices to show that the randomized adversary described above forces every deterministic algorithm to incur $\Omega(\sqrt{k/\log k})$ expected cumulative loss.

The adversary constructs the true database vector $\wstar \in \Delta_k$ as follows: it draws $\Yvec$ and computes the candidate database vector $\tildew$ via~\eqref{eq:unconstrained_weight_vector}. If $\tildew \in \Delta_k$, i.e., $\tildew$ is a valid database vector, it sets $\wstar = \tildew$; otherwise, it sets $\wstar = \frac{1}{k}\ones$. This guarantees that $\wstar$ is always valid. Recall that we denote the event $E=\{\tildew \in \Delta_k\}$, and denote its complement by $E^c$.

The query sequence is given by~\eqref{eq:lb_query_construction}. Let $\loss_t(\wstar) = |\predselect_t - \langle \queryvec_t, \wstar \rangle|$ denote the algorithm's absolute loss at round $t$ for $\wstar$. Fix a deterministic algorithm and a round $t \in \{2, \ldots, d\}$. The law of total expectation gives that the expected loss for $\wstar$ over $\Yvec$ is:
\begin{align*} %\label{eq:lb_total_exp}
  \E_{\Yvec}[\loss_t(\wstar)] = \mathbb{P}(E) \cdot \E_{\Yvec}[\loss_t(\wstar) \mid E] + \mathbb{P}(E^c) \cdot \E_{\Yvec}[\loss_t(\wstar) \mid E^c].
\end{align*}

The first term corresponds to the case where $\wstar=\tildew$, and the second corresponds to the case where the adversary outputs the uniform fallback ($\wstar= \frac{1}{k} \ones$). Since the absolute loss is non-negative, we can drop the second term. This yields:
%Furthermore, under event $E$, the true database is exactly the unconstrained vector ($\wstar = \w$).
\begin{equation}\label{eq:lb_event_e_bound}
  \E_\Yvec[\loss_t(\wstar)] \ge \mathbb{P}(E) \cdot \E_{\Yvec}[\loss_t(\tildew) \mid E].
\end{equation}

To lower-bound the right-hand side, we apply the law of total expectation to the loss for the candidate database vector $\tildew$:
\begin{align*}
  \E_\Yvec[\loss_t(\tildew)] = \mathbb{P}(E) \cdot \E_{\Yvec}[\loss_t(\tildew) \mid E] + \mathbb{P}(E^c) \cdot \E_{\Yvec}[\loss_t(\tildew) \mid E^c].
\end{align*}
Rearranging this equation isolates the exact term we need:
\begin{equation}\label{eq:lb_rearranged}
  \mathbb{P}(E) \cdot \E_{\Yvec}[\loss_t(\tildew) \mid E] = \E_\Yvec[\loss_t(\tildew)] - \mathbb{P}(E^c) \cdot \E_{\Yvec}[\loss_t(\tildew) \mid E^c].
\end{equation}

By \Cref{lem:randomness_loss} and the tower property, we have: 
\begin{align*}
  \E_\Yvec[\loss_t(\tildew)] = \E_{Y_1, \ldots, Y_{t-1}} \left[\E_{Y_t}[\loss_t(\tildew) \mid Y_1, \ldots, Y_{t-1}]\right] \ge \delta/2.
\end{align*}
By \Cref{lem:hadamard_extraction}, $\langle \queryvec_t, \tildew \rangle = \frac{1}{2}(1+Y_t) \in [0,1]$, and $\predselect_t \in [0,1]$, so the absolute loss satisfies $\loss_t(\tildew) \le 1$ deterministically; thus $\E_{\Yvec}[\loss_t(\tildew) \mid E^c] \le 1$. %Substituting into \eqref{eq:lb_rearranged}:
% \begin{equation}\label{eq:lb_substituted}
%   \mathbb{P}(E)\,\E[\loss_t(\tildew) \mid E] \ge \frac{\delta}{2} - \mathbb{P}(E^c).
% \end{equation}
By \Cref{lem:database_validity}, we know that $\mathbb{P}(E^c) \le 2d \exp(-\frac{1}{2d\delta^2})$. Setting $\delta = \frac{1}{4\sqrt{d \ln d}}$ yields:
\begin{align*}
  \mathbb{P}(E^c) \le 2d \exp\left(- \frac{16 d \ln d}{2 d} \right) = 2d^{-7}.
  \end{align*}

For any integer $d \ge 2$, it is straightforward to verify that $2d^{-7} \le \frac{1}{16\sqrt{d \ln d}} = \frac{\delta}{4}$.
Substituting these bounds into~\eqref{eq:lb_rearranged} gives:
\begin{align*}
\mathbb{P}(E) \cdot \E_\Yvec[\loss_t(\tildew) \mid E] \ge \frac{\delta}{2} - \left(\frac{\delta}{4} \cdot 1\right) = \frac{\delta}{4}.
\end{align*}
Combining with~\eqref{eq:lb_event_e_bound}, we conclude that the expected loss for the true database state at every single round is at least $\delta/4$, i.e., $\E_\Yvec[\loss_t(\wstar)] \ge \delta/4$.

By \Cref{obs:static-baseline}, the algorithm's expected regret is exactly its expected cumulative loss. Summing over the $d - 1$ query rounds indexed $t = 2, \ldots, d$ we can lower bound the regret as:
\[
\E[\regret_T] 
= \sum_{t=2}^{d} \E_\Yvec[\loss_t(\wstar)] \ge (d-1) \cdot \frac{\delta}{4} = \frac{d-1}{16\sqrt{d \ln d}}
%& = \Omega\!\left(\sqrt{d/\log d}\right) 
= \Omega\!\left(\sqrt{k/\log k}\right).
\]
The last step follows from the fact that $d$ is the largest power of $2$ at most $k$, thus $d > k/2$. 

Finally, by Yao's minimax principle, we conclude that any randomized algorithm also incurs $\Omega(\sqrt{k/\log k})$ expected regret against worst-case inputs.
\end{proof}

\subsection{Upper Bound under Absolute Loss}\label{subsec:static_absolute_upper_bound}
The following is a standard relationship between the squared and absolute loss functions:
\begin{observation}
  \label{obs:squared-to-absolute}
  Let $\{\trueselect_t\}_{t=1}^T$ and $\{\predselect_t\}_{t=1}^T$ be arbitrary sequences in $[0,1]$. If $\sum_{t=1}^T (\trueselect_t - \predselect_t)^2  \le \alpha$,
  then $\sum_{t=1}^T |\trueselect_t - \predselect_t| \le \sqrt{\alpha T}$.
  \end{observation}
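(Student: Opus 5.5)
The plan is to derive this directly from the Cauchy--Schwarz inequality applied to the vector of per-round absolute errors paired with the all-ones vector in $\R^T$. Define $a_t := |\trueselect_t - \predselect_t|$ for $t = 1, \ldots, T$. The hypothesis then reads $\sum_{t=1}^T a_t^2 \le \alpha$, and the goal is to bound $\sum_{t=1}^T a_t$.

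The single key step is
\begin{align*}
  \sum_{t=1}^T |\trueselect_t - \predselect_t| = \sum_{t=1}^T a_t \cdot 1 \le \Big(\sum_{t=1}^T a_t^2\Big)^{1/2} \Big(\sum_{t=1}^T 1^2\Big)^{1/2} \le \sqrt{\alpha}\cdot\sqrt{T} = \sqrt{\alpha T}.
\end{align*}
This gives the claim immediately.

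An equivalent route is Jensen's inequality for the convex map $x \mapsto x^2$: it gives $\big(\frac{1}{T}\sum_t a_t\big)^2 \le \frac{1}{T}\sum_t a_t^2 \le \alpha/T$, and taking square roots and multiplying by $T$ yields the same bound. I would present the Cauchy--Schwarz version since it is a single line.

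There is no real obstacle. The only points to check are that $a_t \ge 0$, so that taking square roots is monotone, and that $\alpha \ge 0$, which is automatic from the hypothesis because $\alpha$ upper-bounds a sum of squares. The assumption that all values lie in $[0,1]$ is not actually needed for this inequality; it matters only later, when the observation is combined with \Cref{thm:static_upper}.

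In that application one takes $\alpha = 2\ln k$. The effective horizon must also be made independent of $T$, for example by counting only rounds with nonzero loss, of which there are at most $k-1$ by the argument in \Cref{obs:static-baseline}. This gives $\sqrt{2(k-1)\ln k}$.
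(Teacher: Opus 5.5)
Your Cauchy--Schwarz argument is correct and is the standard one-line justification the paper has in mind; the paper calls this observation standard and gives no written proof. Your remark about applying it only to the at most $k-1$ nonzero-loss rounds with $\alpha = 2\ln k$ also matches how the paper uses it to obtain \Cref{thm:static_absolute_upper}.
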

Then, the absolute-loss upper bound follows from the squared-loss bound (\Cref{thm:static_upper}) by observing that \maxent\ incurs non-zero loss on at most $\k-1$ rounds. A key property of \maxent\ is that its estimate $\predw_\t$ is consistent with all previously query-selectivity 
observations: $\langle \queryvec_s, \predw_\t \rangle = \trueselect_s$ for all $s < 
\t$. Consequently, whenever a query is a linear combination of previous queries, 
\maxent\ predicts the selectivity exactly and incurs zero loss. Combined with the simplex normalization $\sum_i \w_i = 1$, at most $\k-1$ linearly independent queries are needed to determine $\w^*$ uniquely, so \maxent\ incurs non-zero loss on at most $\k-1$ rounds under either loss function.

Restricting \Cref{obs:squared-to-absolute} to the (at most) $\k-1$ rounds with non-zero loss and using the cumulative squared-loss bound $O(\log \k)$ from \Cref{thm:static_upper}, we obtain a cumulative absolute loss of $O(\sqrt{(\k-1) \log \k}) = O(\sqrt{\k \log \k})$, independent of the time horizon $\T$. Combined with \Cref{obs:static-baseline}, this yields:

\begin{corollary}\label{thm:static_absolute_upper}
The \maxent\ algorithm (\Cref{alg:seq_max_ent}) achieves $O(\sqrt{\k \log \k})$ regret for static databases under absolute loss, with general linear queries.
\end{corollary}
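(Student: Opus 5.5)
The plan is to combine three ingredients: the squared-loss bound of \Cref{thm:static_upper}, the conversion in \Cref{obs:squared-to-absolute}, and a count of the rounds on which \maxent\ can incur non-zero loss. \Cref{obs:static-baseline} says that in the static setting the regret equals the cumulative loss. So it suffices to bound $\sum_{\t} |\trueselect_\t - \predselect_\t|$ for the predictions produced by \Cref{alg:seq_max_ent}.

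\textbf{Step 1 (consistency).} I first record that \maxent\ is consistent with all past observations. For every $\t$, $\predw_\t \in \mathcal{C}_\t$, and hence $\langle \queryvec_s, \predw_\t\rangle = \trueselect_s$ for all $s<\t$. Also $\langle \ones, \predw_\t\rangle = 1 = \langle \ones, \w^*\rangle$.

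Now call round $\t$ \emph{informative} if $\queryvec_\t \notin \mathrm{span}\{\ones, \queryvec_1,\dots,\queryvec_{\t-1}\}$. If round $\t$ is not informative, write $\queryvec_\t = c_0\ones + \sum_{s<\t} c_s \queryvec_s$. By linearity,
$\langle \queryvec_\t,\predw_\t\rangle = c_0 + \sum_s c_s \trueselect_s = \langle \queryvec_\t, \w^*\rangle$,
so the loss in that round is zero. Each informative round increases the dimension of a span that starts at dimension $1$ and lives in $\R^\k$. Therefore there are at most $\k-1$ informative rounds. Let $S$ denote this set, so $|S|\le \k-1$.

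\textbf{Step 2 (squared loss on $S$).} By \Cref{thm:static_upper}, $\sum_{\t\in S} (\trueselect_\t-\predselect_\t)^2 \le \sum_{\t=1}^{\T}(\trueselect_\t-\predselect_\t)^2 \le 2\ln \k$.

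\textbf{Step 3 (conversion to absolute loss).} I apply \Cref{obs:squared-to-absolute} to the subsequence indexed by $S$, with $\alpha = 2\ln\k$. This is just Cauchy--Schwarz on $|S|$ terms. It gives
$\sum_{\t} |\trueselect_\t-\predselect_\t| = \sum_{\t\in S}|\trueselect_\t-\predselect_\t| \le \sqrt{2(\k-1)\ln \k} = O(\sqrt{\k\log\k})$.
Finally, \Cref{obs:static-baseline} turns this cumulative-loss bound into the stated regret bound.

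\textbf{Main obstacle.} The only delicate point is Step 1: checking that zero loss really holds on non-informative rounds. The argument must use $\ones$ in the span, which is justified by the simplex normalization. It must also use the fact that the exact maximum-entropy optimizer lies in $\mathcal{C}_\t$; this holds because $\mathcal{C}_\t$ is nonempty (it contains $\w^*$), closed and convex. Once this is in place, the remaining steps are direct substitutions.
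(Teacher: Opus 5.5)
Your proposal is correct and follows essentially the same route as the paper. Consistency of \maxent\ gives zero loss on any query in the span of earlier queries and $\ones$, so at most $\k-1$ rounds contribute; Cauchy--Schwarz over those rounds, combined with the $2\ln\k$ squared-loss bound of \Cref{thm:static_upper}, then yields $O(\sqrt{\k\log\k})$. Putting $\ones$ explicitly into the span is a slightly cleaner statement of the paper's ``simplex normalization'' step.
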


%% file: sections_arxiv/dynamic_squared_loss.tex
\section{Dynamic Databases}\label{sec:dynamic_squared_loss}
We now turn to dynamic databases, where the underlying database state may change from round to round.
As noted in \Cref{subsec:static-dynamic-compare}, the regret in this setting depends inherently on the time horizon $\T$. We therefore treat $\T$ as the dominant parameter and establish nearly tight bounds for both loss functions.

For squared loss, the regret is tight at $\Theta(\log T)$. We give details of the lower bound of $\Omega(\log T)$ in the following \Cref{thm:dynamic_lower} (\Cref{subsec:dynamic_squared_loss_lower_bound}). The upper bound of $O(\k \log T)$ is a direct application of the \textit{Exponentially Weighted Online Optimization} (EWOO) algorithm, originally introduced by ~\cite{hazan2007logarithmic}. We defer the EWOO algorithm and its regret analysis to Appendix~\ref{app:dynamic_squared_loss_upper_bound}.

For absolute loss, we show that the regret grows as $\sqrt{T}$, and further establish a quantitative gap between different query classes: the tight regret bound is $\Theta(\sqrt{T})$ for point queries, while for the more general subset and linear queries the tight bound is $\Theta(\sqrt{T\log k})$. Both regret bounds are achieved by the \textit{Follow-the-Regularized-Leader} (FTRL) algorithm with different regularization functions~\cite{shalev2012online}. We defer the detailed analysis of lower and upper bounds to Appendix~\ref{app:dynamic_absolute_loss}.

\subsection{Lower Bound under Squared Loss}\label{subsec:dynamic_squared_loss_lower_bound}
To establish the lower bound under the squared loss function, we construct a stochastic instance with support size $\k=2$, in which the adversary fixes the query vector to be $(1, 0)$ but randomizes the database weights. Specifically, the adversary generates the true selectivity as a sequence of independent Bernoulli trials with a parameter $\prior$ drawn uniformly from $[0, 1]$. As we will show in the proof, this construction reduces the selectivity estimation problem to that of estimating $\prior$ from sequential observations. We show in the proof below that any deterministic algorithm will incur a regret of $\Omega(\log T)$ for this instance. By Yao's minimax principle, this then implies a lower bound of $\Omega(\log T)$ on the regret of any randomized algorithm.

\begin{theorem}
    \label{thm:dynamic_lower}
    For dynamic databases under the squared loss function, any (randomized) online algorithm $\mathcal{A}$ incurs $\Omega(\log T)$ (expected) regret, even for point queries. 
  \end{theorem}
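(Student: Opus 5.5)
We will prove the bound via Yao's minimax principle. Concretely, we exhibit a distribution over inputs on which every \emph{deterministic} algorithm has expected regret $\Omega(\log \T)$.

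\paratitle{Construction.} Take $\k=2$ and the fixed point query $\queryvec_\t=(1,0)$ in every round. Draw $\prior\sim\mathrm{Uniform}[0,1]$ once. In each round independently, set $\w_\t=(1,0)$ with probability $\prior$ and $\w_\t=(0,1)$ otherwise. The selectivities $\trueselect_\t\in\{0,1\}$ are then i.i.d.\ $\mathrm{Bernoulli}(\prior)$ given $\prior$.

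For any comparator $\w=(p,1-p)\in\Delta_2$, the prediction is the constant $p$. The benchmark $\min_{\w}\sum_\t(\trueselect_\t-p)^2$ is therefore attained at the empirical mean $\bar\sigma=\frac{1}{\T}\sum_\t\trueselect_\t$. In particular it is at most $\sum_\t(\trueselect_\t-\prior)^2$, whose conditional expectation given $\prior$ is $\T\prior(1-\prior)$. It thus suffices to lower bound
\[
\E\Big[\sum_\t(\trueselect_\t-\predselect_\t)^2\Big]-\E\Big[\sum_\t(\trueselect_\t-\prior)^2\Big].
\]

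\paratitle{Per-round decomposition.} Let $\mathcal F_{\t-1}$ denote the history $\trueselect_1,\dots,\trueselect_{\t-1}$. The deterministic prediction $\predselect_\t$ is $\mathcal F_{\t-1}$-measurable. Conditioning on $(\mathcal F_{\t-1},\prior)$, we have $\E[\trueselect_\t]=\prior$, so
\[
\E[(\trueselect_\t-\predselect_\t)^2\mid\mathcal F_{\t-1},\prior]=\prior(1-\prior)+(\prior-\predselect_\t)^2,
\]
and the comparator term contributes only $\prior(1-\prior)$. The per-round regret contribution is therefore $\E[(\prior-\predselect_\t)^2]$. This quantity is at least $\E[\mathrm{Var}(\prior\mid\mathcal F_{\t-1})]$, because the posterior mean minimizes mean squared error.

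\paratitle{Posterior variance.} With a uniform prior, after $n=\t-1$ observations containing $s$ ones the posterior is $\mathrm{Beta}(s+1,n-s+1)$. Its variance is
\[
\frac{(s+1)(n-s+1)}{(n+2)^2(n+3)}.
\]
Under the Beta--Bernoulli model, $s$ is uniform on $\{0,\dots,n\}$. A direct sum, $\sum_{s=0}^n(s+1)(n-s+1)=\binom{n+3}{3}$, gives an expected posterior variance of exactly $\frac{1}{6(n+2)}$. Summing over $\t=1,\dots,\T$ yields $\sum_{\t}\frac{1}{6(\t+1)}=\Omega(\log \T)$. By Yao's principle, the same bound then holds for randomized algorithms.

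\paratitle{Main obstacle.} The delicate step is the comparator. We must make sure that replacing the true minimizer by the fixed $\prior$ only helps the lower bound: the benchmark is $\le$ the loss of $\prior$, so regret is $\ge$ the difference computed above. We must also check that $\prior$ is a legitimate comparator in $\Delta_2$, via $\w=(\prior,1-\prior)$. This avoids having to analyze the empirical-mean benchmark directly, which would cost only an $O(1)$ correction anyway.
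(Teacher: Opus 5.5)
Your proposal is correct and follows the paper's proof essentially step for step. It uses the same $k=2$ Beta--Bernoulli adversary with $\mu\sim\mathrm{Uniform}[0,1]$, bounds the benchmark by the fixed comparator $(\mu,1-\mu)$, reduces each round to $\E[(\hat{\sigma}_t-\mu)^2]\ge\E[\mathrm{Var}(\mu\mid\text{history})]$, and sums the same exact value $\frac{1}{6(t+1)}$ into a harmonic series; your closed form $\sum_{s=0}^{n}(s+1)(n-s+1)=\binom{n+3}{3}$ is just a tidier way to do the paper's final sum.
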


\begin{proof}    
Let the support size $\k = 2$. The adversary fixes the query vector to the point query $\queryvec_{\t} = (1, 0)$ for all rounds $\t = 1, \ldots, \T$, i.e., the query asks for the weight of the first element. At the beginning (before $t=1$), the adversary draws a parameter $\prior \sim \mathrm{Uniform}[0,1]$. Then, in each round $t$, the adversary generates the database vector $\w_{\t}$ for the database as follows:
\begin{equation}\label{eq:weight_construction}
  \w_{\t} = 
  \begin{cases}
    (1,0), & \text{with prob. } \prior \\
    (0,1), & \text{with prob. } 1-\prior,
  \end{cases}
\end{equation}
Consequently, the true selectivity $\trueselect_{\t} = \inner{\queryvec_{\t}}{\w_{\t}} \in \{0, 1\}$ satisfies:
\begin{equation}\label{eq:bernoulli_selectivity}
\trueselect_{\t} \mid \prior \stackrel{\text{i.i.d.}}{\sim} \mathrm{Bernoulli}(\prior), \quad \text{and thus } \E[\trueselect_{\t} \mid \prior] = \prior.
\end{equation}

Since the query is fixed to $\queryvec_{\t} = (1,0)$ for all $t$, any fixed database vector $\mathbf{u} = (u_1, u_2) \in \Delta_2$ yields a constant prediction $\inner{\queryvec_{\t}}{\mathbf{u}} = u_1$ in every round. Thus, comparing against the best fixed database vector (recall the definition of regret in \eqref{eq:regret-static}) is equivalent to comparing against the best constant prediction. Furthermore, since $\prior$ is a valid prediction, the loss of the best fixed solution is upper bounded by $\sum_{t=1}^T (\trueselect_t - \prior)^2$. 

Let $\predselect_{\t}$ denote the selectivity returned by the algorithm in round $\t$. Then, the cumulative loss of the algorithm is given by $\sum_{t=1}^{T} (\predselect_\t - \trueselect_\t)^2$. Therefore, the expected regret of the algorithm is at least:
\begin{align*}
  \E \left[ \sum_{t=1}^{T} (\predselect_t - \trueselect_t)^2 - \min_{x \in [0,1]} \sum_{t=1}^{T} (\trueselect_t - x)^2 \right]
  \ge \sum_{t=1}^{T} \E \left[ (\predselect_t - \trueselect_t)^2  - (\trueselect_t - \prior)^2 \right].
\end{align*}

\sloppy
Then, let us fix $t$. Since the algorithm's prediction $\predselect_{\t}$ depends only on $\history_{\t} = \{\queryvec_1, \trueselect_1, \queryvec_2, \trueselect_2, \ldots, \queryvec_{\t-1}, \trueselect_{\t-1}, \queryvec_{\t}\}$,
we have $\predselect_{\t} \mid \history_{\t}, \prior = \predselect_{\t}$. Similarly, $\trueselect_t$ depends only on $\prior$. Hence, 
$\E[\trueselect_t \mid \history_{\t}, \prior] = \prior$. We condition on $\prior$ and $\history_{\t}$, and write:
\begin{align*}
&\E \left[ (\predselect_t - \trueselect_t)^2  - (\trueselect_t - \prior)^2 \mid \history_{\t}, \prior \right]
= (\predselect_t - \prior) \cdot \E \left[ \predselect_t + \prior - 2 \trueselect_t \mid \history_{\t}, \prior \right] \\
=& (\predselect_t - \prior) \cdot (\predselect_t + \prior - 2\cdot \E \left[ \trueselect_t \mid \history_{\t}, \prior \right]) = (\predselect_t - \prior)^2.
\end{align*}
Taking expectation over $\prior$ and $\history_{\t}$, the expected regret for each round $\t$ becomes:
\begin{align*}
  \E\left[ (\trueselect_{\t}-\predselect_{\t})^2 - (\trueselect_{\t} - \prior)^2\right] = \E\left[(\predselect_{\t} - \prior)^2\right].
\end{align*}

We now find the prediction $\predselect_{\t}$ that minimizes $\E\left[(\predselect_{\t} - \prior)^2 \mid \history_{\t}\right]$.
Note that:
\begin{align*} %\label{eq:mse_expand}
  \E\left[(\predselect_{\t} - \prior)^2 \mid \history_{\t}\right] = \predselect_{\t}^2 - 2\predselect_{\t}\E[\prior \mid \history_{\t}] + \E[\prior^2 \mid \history_{\t}].
\end{align*}
Taking the derivative with respect to $\predselect_{\t}$ and setting it to zero:
\begin{align*} %\label{eq:optimal_prediction_deriv}
  \frac{d}{d\predselect_{\t}} \E\left[(\predselect_{\t} - \prior)^2 \mid \history_{\t}\right] = 2\predselect_{\t} - 2\cdot \E[\prior \mid \history_{\t}] = 0,
\end{align*}
and the second derivative is positive.

Thus, the prediction that minimizes the expected regret for each $\t$ is $\predselect_{\t}^* = \E[\prior \mid \history_{\t}]$, and the minimum value of the expected regret is the variance of $\prior \mid \history_{\t}$, denoted $\mathrm{Var}(\prior \mid \history_{\t})$.

Note that we have $\prior \sim \mathrm{Uniform}[0,1]$. By standard properties of the Beta distribution~\cite{johnson2022bayes,berger2013statistical}, the uniform distribution on $[0,1]$ is the $\mathrm{Beta}(1,1)$ distribution. Moreover, after observing $\t-1$ independent outcomes $\trueselect_1,\ldots,\trueselect_{\t-1}$ from $\mathrm{Bernoulli}(\prior)$, the posterior distribution of $\prior$ given the observations follows the Beta distribution again:
\begin{align*} %\label{eq:posterior_beta}
  \prior \mid \history_{\t} \sim \mathrm{Beta}\bigl(1+H_{\t-1},\, \t - H_{\t-1}\bigr),
\end{align*}
where $H_{\t-1} = \sum_{s=1}^{t-1} \sigma_s$, i.e., the number of rounds where the database is $(1, 0)$ among the first $t-1$ rounds.

The variance of $\mathrm{Beta}(\alpha, \beta)$ is $\frac{\alpha\beta}{(\alpha+\beta)^2(\alpha+\beta+1)}$. 
With $\alpha = 1 + H_{\t-1}$ and $\beta = \t - H_{\t-1}$, we obtain:
\begin{align*}
  \mathrm{Var}(\prior \mid \history_{\t}) = \frac{(1 + H_{\t-1})(\t - H_{\t-1})}{(\t+1)^2(\t+2)}.
\end{align*}
Note that $H_{t-1} \sim \mathrm{Binomial}(t-1, \prior)$, where $\prior \sim \mathrm{Uniform}[0, 1]$. Therefore,
\[
    \Pr(H_{\t-1} = \zeta) 
    = \int_{\prior = 0}^1 {t-1\choose \zeta}\cdot \prior^\zeta \cdot (1-\prior)^{t-1-\zeta} d \prior 
    = \frac{1}{\t} \quad \text{ for } \zeta \in \{0, 1, \ldots, \t-1\}.
\]
Therefore,
\[
  \E[\mathrm{Var}(\prior \mid \history_{\t})]
= \frac{1}{(\t+1)^2(\t+2)} \cdot \sum_{\zeta=0}^{\t-1} \frac{(1+\zeta)(\t-\zeta)}{\t} 
= \frac{1}{\t(\t+1)^2(\t+2)} \sum_{\zeta=0}^{\t-1} \left(\t + (\t-1)\zeta - \zeta^2\right).
\]
Since $\sum_{\zeta=0}^{\t-1} \zeta = \frac{(\t-1)\t}{2}$ and $\sum_{\zeta=0}^{\t-1} \zeta^2 = \frac{(\t-1)\t(2\t-1)}{6}$, we get:
\[
  \E[\mathrm{Var}(\prior \mid \history_{\t})] = \frac{1}{6(\t+1)}.
\]
Summing over $t$, we get that the expected regret of any algorithm is at least $\sum_{t=1}^T \frac{1}{6(t+1)} = \Omega(\log T)$.
\end{proof}

\vspace{-2em}

%% file: sections_arxiv/related_work.tex
\section{Related Work}\label{sec:related}
\paragraph{Learning-Based Selectivity Estimation.}
Machine learning for selectivity (cardinality) estimation has been extensively studied~\cite{dutt2019selectivity,yang2020neurocard,li2023alece,park2020quicksel,wang2023ceda,sun_learned_2021,kim2022learned,wang2021ready}, including systems for handling data and query workload drift~\cite{li2022warper,negi2023robust,wu2024modeling,zeng2024price}.
On the theoretical side, \citet{hu2022selectivity} established PAC sample-complexity bounds for range queries under i.i.d.\ assumptions, and \cite{zeighami2024towards,zeighami2024theoretical,wu2024practical} develop generalization guarantees for learned database operations. However, these results are distribution-dependent, whereas our work provides an \emph{online} framework where data and queries may evolve adversarially, with regret bounds that hold uniformly without distributional assumptions.

\paragraph{Online Learning and Online Convex Optimization.}
Online learning is a well-established framework for sequential decision-making under uncertainty (e.g.,~\cite{cesa2006prediction,shalev2012online,hazan2016introduction,hazan2007logarithmic,zinkevich2003online,kivinen1997exponentiated}). Our selectivity estimation problem fits the framework of online convex optimization (OCO): each query induces a convex loss over the probability simplex, even though the underlying database state is never observed by the learner. Unlike generic OCO analyses, our results exploit the specific query structure and simplex geometry to derive problem-dependent upper and lower bounds.

\paragraph{Online Convex Body Chasing.}
Our static database setting is also related to the convex body chasing problem from online algorithms~\cite{FriedmanL93}. In nested convex body chasing, an online algorithm observes nested convex sets, chooses a point from each newly revealed set, and pays movement cost equal to the distance between consecutive chosen points. Our static database setting has a similar structure: each observed query-selectivity pair shrinks the set of candidate database vectors consistent with the history, from the entire $k$-dimensional simplex to the true database. However, the performance measure is different: we measure scalar prediction error based on projections along the query vector, rather than movement cost, so the algorithms and lower bounds from convex body chasing do not directly apply in our setting. Nevertheless, our use of Hadamard matrices in the lower bound construction was inspired by similar techniques for online convex body chasing~\cite{bubeck2020chasing}.

\paragraph{Online Learning in Databases.}
Recent work~\cite{chesetti2026adapt} applies online-algorithmic ideas to design adaptive filters, addressing a different database problem from selectivity estimation. 
From a more abstract learning-theoretic perspective,
\citet{anderson2025logical} study how learnability transfers from base function classes to derived statistical function classes in both PAC and online learning settings, which is motivated partly by query processing~\cite{hu2022selectivity}. Their agnostic and realizable online cases relate to our dynamic and static database settings, respectively; in particular, their preservation results yield a generic, non-constructive regret upper bound for the agnostic case. Our contribution is problem-specific: for both static and dynamic databases, we give upper and matching lower bounds on regret with concrete algorithms across different query classes and loss functions.

%% file: sections_arxiv/s6-conclusion.tex
\section{Conclusions and Future Work}\label{sec:conclusions}
In this paper, we modeled the selectivity estimation problem for linear queries as an online learning problem, where both queries and data can change arbitrarily over time without distributional assumptions. We established regret bounds for both static and dynamic databases, under both squared and absolute loss functions. %, with explicit dependence on the query class, the support size $\k$, and the time horizon $\T$.
%for static databases under squared loss, we proved tight $\Theta(\log k)$ bounds; for dynamic databases, we established $\Theta(\log T)$ bounds under squared loss and $\Theta(\sqrt{T \log k})$ bounds under absolute loss. 
Several interesting directions remain open for future work:
(i) closing the gap between the $\Omega(\log \T)$ lower bound and $O(\k \log \T)$ upper bound for dynamic databases under squared loss; %, where the optimal dependence on $\k$ remains unknown;
(ii) determining the optimal regret for range queries under absolute loss for both static and dynamic databases;
(iii) extending to richer query classes such as join queries or queries with multiple predicates;
and (iv) empirical evaluation on real-world queries and databases to understand the practical implications of these results.

%% file: sections_arxiv/appedix/app_static_point.tex
\section{Static Databases: Point Queries}\label{app:static-point}
%\subsection{\blue{Warm Up: Point Queries}}\label{sec:static-squared-warmup}
Since each point query asks for exactly one coordinate of the fixed database vector, once a coordinate has been observed, its value is revealed. So all later queries to the same coordinate can be predicted with zero loss. This motivates the following Coordinate Memorization (\coordmem) algorithm in \Cref{alg:coord-mem}. The algorithm maintains a database estimate vector $\hat\w$ and initializes it to $\mathbf 0$. In each round $\t$, it predicts the value of the queried coordinate $i$ from the estimate $\hat w_i$, which is $0$ if the coordinate is not yet queried, or the true weight $w_i^*$ if it has been queried before. After observing the true selectivity $\trueselect_\t = w_i^*$, it memorizes the weight of $i$ in the estimate.

\begin{algorithm}[!htbp]
\caption{Coordinate Memorization (\coordmem)}
\label{alg:coord-mem}
\begin{algorithmic}[1]
    \State \textbf{Initialize:} $\hat\w \gets \mathbf{0}$
            \Comment{Initialize the database estimate vector to the zero vector}
    \For{$\t = 1, 2, \ldots, \T$}
    \State \textbf{Receive} point query $\queryvec_\t = \evec_i$ \Comment{the $i$-th standard basis vector}
    \State \textbf{Predict} $\predselect_\t \gets \hat w_i$
    \State \textbf{Observe} true selectivity $\trueselect_\t = w_i^*$ \Comment{$\wstar$ is the true database vector}
    \State \textbf{Update} $\hat w_i \gets \trueselect_\t$ \Comment{memorize coordinate $i$ with its true value}
    \EndFor
\end{algorithmic}
\end{algorithm}

The following proposition shows that \coordmem\ yields the tight constant-regret bounds under both absolute and squared loss.

\begin{proposition}[Point queries in static databases]\label{prop:static-point}
In the static database setting, for point queries, \coordmem\ achieves regret at most $1$ under both absolute and squared loss. Every online algorithm incurs regret at least $\tfrac{1}{2}$ under absolute loss and at least $\tfrac{1}{4}$ under squared loss in the worst case.
\end{proposition}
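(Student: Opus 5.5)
For the upper bound, I will use \Cref{obs:static-baseline}: in the static setting the regret equals the cumulative loss. I then split the rounds according to whether the queried coordinate $i$ has appeared before.

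If coordinate $i$ was queried at some earlier round, \coordmem\ has stored $\hat w_i = w_i^*$ and predicts exactly, so the loss is zero. Otherwise $\hat w_i = 0$, and the loss is $w_i^*$ under absolute loss and $(w_i^*)^2$ under squared loss. Each coordinate is "new" in at most one round, so the cumulative absolute loss is at most $\sum_{i} w_i^* = 1$.

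Because $w_i^* \in [0,1]$, we have $(w_i^*)^2 \le w_i^*$, so the cumulative squared loss is also at most $\sum_i w_i^* = 1$. This gives regret at most $1$ under both losses.

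For the lower bounds, I will follow the template of \Cref{thm:static_lower} and use Yao's minimax principle. It suffices to exhibit a distribution over instances on which every deterministic algorithm has expected loss at least $1/2$ (absolute) or $1/4$ (squared).

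The construction takes $k = 2$ (or puts zero weight on the extra coordinates). The true database $\wstar$ is $(1,0)$ or $(0,1)$ with probability $1/2$ each, and the single query is $\queryvec_1 = \evec_1$. Then $\trueselect_1 \in \{0,1\}$ is a fair coin, independent of the deterministic prediction $\predselect_1$. The expected absolute loss is $\tfrac12(|\predselect_1| + |1-\predselect_1|) = \tfrac12$. The expected squared loss is $\tfrac12\predselect_1^2 + \tfrac12(1-\predselect_1)^2 \ge \tfrac14$, with the minimum at $\predselect_1 = 1/2$. Since the loss on later rounds is nonnegative, these bounds hold for any $T \ge 1$, and \Cref{obs:static-baseline} converts expected loss into expected regret.

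No step here is a genuine obstacle. The only points needing care are that squared loss also sums to at most $1$, which follows from $x^2 \le x$ on $[0,1]$, and stating precisely that "worst case" means over a randomized adversary via Yao's principle, exactly as in the proof of \Cref{thm:static_lower}.
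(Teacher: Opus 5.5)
Your proof is correct. The upper bound is exactly the paper's argument: loss occurs only on the first query to each coordinate, and then $\sum_i (w_i^*)^2 \le \sum_i w_i^* = 1$.

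For the lower bound you use the same instance as the paper ($k=2$, $\queryvec_1=\evec_1$, $\wstar$ a vertex of the simplex), but a different adversary. The paper's adversary is adaptive: it sets $\wstar=(1,0)$ if $\predselect_1<1/2$ and $\wstar=(0,1)$ otherwise. This forces absolute loss at least $1/2$ and squared loss at least $1/4$ pointwise for every deterministic learner, with no averaging and no appeal to Yao's principle.

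You instead draw $\wstar$ uniformly and average. This gives the same constants, but only in expectation. In exchange, your adversary is oblivious. It therefore also covers randomized learners facing an adversary that commits to $\wstar$ in advance, which the paper's adaptive argument does not literally address. In fact the bound holds conditionally on the learner's internal randomness, so you do not even need the full Yao machinery.

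Your appeal to nonnegativity of later-round losses also replaces the paper's remark that the learner identifies $\wstar$ after round~1. It is slightly cleaner and works unchanged for $k>2$ with zero padding.
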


\begin{proof}
By Observation~\ref{obs:static-baseline}, in the static setting the regret of any algorithm equals its cumulative loss, so it suffices to bound the cumulative loss of the algorithm.

\smallskip
\noindent\emph{Upper bound.}
\coordmem\ incurs loss only at the first time a coordinate is queried, where it predicts $0$ while the true selectivity is $w^*_i$, incurring absolute loss $w^*_i$ and squared loss $(w^*_i)^2$. Every subsequent query to coordinate $i$ is predicted exactly and incurs zero loss. Summing over all coordinates that are ever queried, we obtain the following upper bounds on the cumulative absolute and squared loss:
\[
    \sum_{\t=1}^{\T} |\trueselect_\t - \predselect_\t| 
    \;\leq\; \sum_{i=1}^{\k} w^*_i 
    \;=\; 1,
    \qquad
    \sum_{\t=1}^{\T} (\trueselect_\t - \predselect_\t)^2 
    \;\leq\; \sum_{i=1}^{\k} (w^*_i)^2 
    \;\leq\; \sum_{i=1}^{\k} w^*_i 
    \;=\; 1,
\]
%where the last inequality uses $w^*_i \in [0,1]$.
Therefore, the regret upper bound is $1$ under both absolute and squared loss.

\smallskip
\noindent\emph{Lower bound.}
Take $\k = 2$, and the adversary queries $\queryvec_1 = (1,0)$ at round $1$. If the algorithm outputs a predicted selectivity $\predselect_1 < 1/2$, the adversary sets $\wstar = (1,0)$ and thus the true selectivity $\trueselect_1 = 1$; otherwise, the adversary sets $\wstar = (0,1)$ and thus $\trueselect_1 = 0$. In either case, the absolute loss is at least $1/2$ and the squared loss is at least $1/4$.
Since there are only 2 coordinates, the algorithm identifies the true database vector after round $1$ and will incur zero loss on all subsequent queries. Therefore, the regret, i.e., the cumulative loss for the static databases, is at least $1/2$ under absolute loss and at least $1/4$ under squared loss.
\end{proof}

%% file: sections_arxiv/appedix/app_static_squared.tex
\section{Details for Section~\ref{sec:static}}
\label{app:static_squared_loss}

\subsection{Deferred Material for \Cref{thm:static_upper}}\label{app:static_squared_upper_loss}
We state the standard information-theoretic lemmas used in the proof of \Cref{thm:static_upper}.

\begin{lemma}[Pinsker's Inequality~\cite{csiszar2011information}]
    \label{lemma:pinsker_inequality}
    For any two probability distributions $\mathbf{p}, \mathbf{q}$ over a finite domain, the $L_1$ norm of the difference satisfies:
    \begin{equation}
        \|\mathbf{p} - \mathbf{q}\|_1 \le \sqrt{2 D_{\mathrm{KL}}(\mathbf{p} \| \mathbf{q})},
    \end{equation}
    where $D_{\mathrm{KL}}(\mathbf{p} \| \mathbf{q}) = \sum_{i=1}^k p_i \ln(\frac{p_i}{q_i})$ is the KL-divergence between $\mathbf{p}$ and $\mathbf{q}$.
\end{lemma}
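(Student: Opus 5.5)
The statement is Pinsker's inequality, $\|\mathbf{p}-\mathbf{q}\|_1 \le \sqrt{2D_{\mathrm{KL}}(\mathbf{p}\|\mathbf{q})}$. The plan is to reduce it to the two-point (Bernoulli) case with the data-processing inequality, and then prove the binary case by elementary calculus.

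\textbf{Reduction.} Let $A = \{i : p_i \ge q_i\}$, and set $a = \mathbf{p}(A)$ and $b = \mathbf{q}(A)$. Since both vectors sum to one,
\[
\|\mathbf{p}-\mathbf{q}\|_1 = 2(a-b).
\]
Coarsening to the partition $\{A, A^c\}$ cannot increase KL divergence. This follows from the log-sum inequality applied separately on $A$ and on $A^c$, which is itself Jensen's inequality for the convex function $x\ln x$. Hence
\[
D_{\mathrm{KL}}(\mathbf{p}\|\mathbf{q}) \ge a\ln\frac{a}{b} + (1-a)\ln\frac{1-a}{1-b}.
\]
Therefore it suffices to show, for $0 \le b \le a \le 1$, that
\[
g(b) := a\ln\frac{a}{b} + (1-a)\ln\frac{1-a}{1-b} - 2(a-b)^2 \ge 0.
\]
Squaring both sides of the target inequality shows this is exactly what is needed, since $\|\mathbf{p}-\mathbf{q}\|_1^2 = 4(a-b)^2 \le 2D_{\mathrm{KL}}$ is equivalent to $2(a-b)^2 \le D_{\mathrm{KL}}$.

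\textbf{Binary case.} Fix $a$ and view $g$ as a function of $b \in (0, a]$. At $b = a$ we have $g(a) = 0$. Differentiating,
\[
g'(b) = -\frac{a}{b} + \frac{1-a}{1-b} + 4(a-b) = (b-a)\left(\frac{1}{b(1-b)} - 4\right).
\]
Because $b(1-b) \le 1/4$, the bracket is nonnegative. With $b \le a$ this gives $g'(b) \le 0$, so $g$ is nonincreasing on $(0,a]$ and therefore $g(b) \ge g(a) = 0$.

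\textbf{Edge cases.} These are the main bookkeeping obstacle rather than a conceptual one.
\begin{itemize}
\item If $b = 0$ and $a > 0$, then $q_i = 0$ for some $i$ with $p_i > 0$, so $D_{\mathrm{KL}} = \infty$ and the inequality is trivial.
\item The endpoints $a \in \{0, 1\}$ are handled by the convention $0\ln 0 = 0$.
\item Terms with $p_i = 0$ vanish, which justifies applying the log-sum inequality only over the support of $\mathbf{p}$.
\end{itemize}
Combining the reduction with the binary bound yields $\|\mathbf{p}-\mathbf{q}\|_1^2 \le 2D_{\mathrm{KL}}(\mathbf{p}\|\mathbf{q})$, which is the claim.
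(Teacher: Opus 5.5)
Your proof is correct. The paper does not prove this lemma: it states Pinsker's inequality in the appendix as a standard fact, attributes it to Csiszár and Körner, and uses it as a black box in Lemma~\ref{lemma:single_step_loss} (and again for the strong convexity of negative entropy in the FTRL analysis).

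What you give instead is the classical self-contained argument found in Cover and Thomas. You coarsen to the two-cell partition $\{A,A^c\}$ with $A=\{i : p_i\ge q_i\}$. This preserves $\|\mathbf{p}-\mathbf{q}\|_1 = 2(a-b)$ exactly, and by the log-sum inequality it does not increase the divergence. You then verify the binary inequality by monotonicity, and your computation $g'(b)=(b-a)\bigl(\frac{1}{b(1-b)}-4\bigr)\le 0$ is right.

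Two small bookkeeping remarks:
\begin{itemize}
\item When $a=1$, your formula for $g'$ is valid only on $(0,1)$. At $b=1=a$ it is a $0/0$ expression, so you close the argument by continuity of $g$ at $b=1$.
\item The case $a=0$ never arises. If $A^c\neq\emptyset$, summing $p_i<q_i$ over $A^c$ forces $a>b\ge 0$; if $A^c=\emptyset$, then $a=1$.
\end{itemize}

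Citing keeps the paper short. Your version makes the appendix self-contained, and it shows explicitly that the constant $2$ depends on using the natural logarithm, which matches the paper's definition of $D_{\mathrm{KL}}$ with $\ln$.
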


\begin{lemma}[Pythagorean Theorem for KL-Divergence~\cite{cover1999elements}] 
  \label{lemma:pythagorean_theorem}
  Let $\mathcal{C}$ be a closed convex set and $\mathbf{p}^* = \arg\min_{\mathbf{p} \in \mathcal{C}} D_{\mathrm{KL}}(\mathbf{p} \| \mathbf{q})$ for some distribution $\mathbf{q} \in \mathcal{C}$. Then, for any $\mathbf{p} \in \mathcal{C}$,
  \begin{equation}
      D_{\mathrm{KL}}(\mathbf{p} \| \mathbf{q}) \ge D_{\mathrm{KL}}(\mathbf{p} \| \mathbf{p}^*) + D_{\mathrm{KL}}(\mathbf{p}^* \| \mathbf{q}).
  \end{equation}
\end{lemma}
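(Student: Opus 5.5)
The plan is the standard first-order optimality argument for a convex minimization, applied along the segment from $\mathbf{p}^*$ to $\mathbf{p}$. If $D_{\mathrm{KL}}(\mathbf{p}\|\mathbf{q})=\infty$ there is nothing to prove, so I assume it is finite. Then $\mathrm{supp}(\mathbf{p})\subseteq\mathrm{supp}(\mathbf{q})$. Also $D_{\mathrm{KL}}(\mathbf{p}^*\|\mathbf{q})\le D_{\mathrm{KL}}(\mathbf{p}\|\mathbf{q})<\infty$, so $\mathrm{supp}(\mathbf{p}^*)\subseteq\mathrm{supp}(\mathbf{q})$ as well.

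For $\lambda\in[0,1]$ I set $\mathbf{p}_\lambda=(1-\lambda)\mathbf{p}^*+\lambda\mathbf{p}$. By convexity $\mathbf{p}_\lambda\in\mathcal{C}$, and $f(\lambda):=D_{\mathrm{KL}}(\mathbf{p}_\lambda\|\mathbf{q})$ is a convex function of $\lambda$. By the choice of $\mathbf{p}^*$ it is minimized at $\lambda=0$, so its right derivative at $0$ is nonnegative.

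\textbf{Main obstacle: supports.} Before differentiating I need $\mathrm{supp}(\mathbf{p})\subseteq\mathrm{supp}(\mathbf{p}^*)$. Suppose instead some $i$ has $p^*_i=0<p_i$. Then coordinate $i$ contributes $\lambda p_i\ln(\lambda p_i/q_i)$ to $f(\lambda)$, whose derivative at $0^+$ is $-\infty$. Every other coordinate contributes a term with finite right derivative at $0$; this is where finiteness of $D_{\mathrm{KL}}(\mathbf{p}\|\mathbf{q})$ and the support inclusions above are used. Hence $f(\lambda)<f(0)$ for small $\lambda>0$, contradicting the minimality of $\mathbf{p}^*$. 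So all supports are nested inside $\mathrm{supp}(\mathbf{p}^*)\subseteq\mathrm{supp}(\mathbf{q})$, and all the logarithms below are finite on the relevant coordinates.

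With supports controlled, I differentiate termwise:
\[ f'(0^+)=\sum_i (p_i-p^*_i)\Big(\ln\tfrac{p^*_i}{q_i}+1\Big)=\sum_i (p_i-p^*_i)\ln\tfrac{p^*_i}{q_i}\ge 0, \]
where the constant term vanishes because $\sum_i p_i=\sum_i p^*_i=1$. Coordinates outside $\mathrm{supp}(\mathbf{p}^*)$ have $p_i=p^*_i=0$ and contribute nothing.

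Finally I expand the three divergences directly:
\[ D_{\mathrm{KL}}(\mathbf{p}\|\mathbf{q})-D_{\mathrm{KL}}(\mathbf{p}\|\mathbf{p}^*)-D_{\mathrm{KL}}(\mathbf{p}^*\|\mathbf{q}) =\sum_i p_i\ln\tfrac{p^*_i}{q_i}-\sum_i p^*_i\ln\tfrac{p^*_i}{q_i}, \]
which is exactly $f'(0^+)\ge 0$. This gives the claimed inequality.
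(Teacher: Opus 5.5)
Your proof is correct and is essentially the standard first-variation argument from the reference the paper cites: perturb $\mathbf{p}^*$ toward $\mathbf{p}$, use $f'(0^+)\ge 0$, and identify $f'(0^+)$ with the Pythagorean gap. The paper gives no proof of its own, and your explicit support step, which shows $\mathrm{supp}(\mathbf{p})\subseteq\mathrm{supp}(\mathbf{p}^*)\subseteq\mathrm{supp}(\mathbf{q})$ via the $-\infty$ right derivative, fills in a point the textbook treatment glosses over. You also never use the stated hypothesis $\mathbf{q}\in\mathcal{C}$; read literally, it would force $\mathbf{p}^*=\mathbf{q}$ and make the lemma trivial. So you have actually proved the version with arbitrary $\mathbf{q}$, which is what the proof of \Cref{thm:static_upper} needs, since there $\predw_1\notin\mathcal{C}_t$ in general.
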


\begin{lemma}[Hölder's Inequality \cite{SteinShakarchiRealAnalysis}]
  \label{lemma:holder_inequality}
  Let $(\Omega, \mathcal{F}, \mu)$ be a measure space, and let 
$1 < p, q < \infty$ satisfy
\[
\frac{1}{p} + \frac{1}{q} = 1.
\]
If $f$ and $g$ are measurable functions on $\Omega$, then
\[
\int_{\Omega} |f g| \, d\mu
\le 
\left( \int_{\Omega} |f|^{p} \, d\mu \right)^{1/p}
\left( \int_{\Omega} |g|^{q} \, d\mu \right)^{1/q}.
\]
Moreover, if $f$ is bounded, $f \in L^{\infty}(\Omega)$ and $g \in L^{1}(\Omega)$, then
\[
\int_{\Omega} |f g| \, d\mu
\le 
\|f\|_{L^{\infty}(\Omega)} \,
\|g\|_{L^{1}(\Omega)}.
\]
\end{lemma}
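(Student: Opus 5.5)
The statement to prove is H\"older's inequality (\Cref{lemma:holder_inequality}), which is a standard result, and I will prove it in two parts. In our application only the finite, counting-measure case with $p=\infty$ is used. That case reads $\langle \queryvec, \mathbf{x}\rangle \le \|\queryvec\|_\infty \|\mathbf{x}\|_1$.

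\textbf{Conjugate case $1<p,q<\infty$.} I use Young's inequality $ab \le a^p/p + b^q/q$ for $a,b\ge 0$. It follows from concavity of $\ln$:
\[
\ln\!\left(\tfrac{1}{p}a^p+\tfrac{1}{q}b^q\right)\ \ge\ \tfrac{1}{p}\ln a^p+\tfrac{1}{q}\ln b^q=\ln(ab),
\]
with the case $ab=0$ trivial. Let $A=\|f\|_p$ and $B=\|g\|_q$.

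1. If $A=0$, then $f=0$ almost everywhere and both sides vanish; the same holds if $B=0$.
2. If $A=\infty$ or $B=\infty$ (with the other nonzero), the inequality is trivial.
3. Otherwise, apply Young pointwise to $a=|f(x)|/A$ and $b=|g(x)|/B$ and integrate:
\[
\frac{1}{AB}\int_\Omega |fg|\,d\mu \ \le\ \frac{1}{p}+\frac{1}{q}=1.
\]
This is the claim.

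\textbf{Endpoint case.} By definition of the essential supremum, $|f(x)|\le \|f\|_{L^\infty}$ for $\mu$-almost every $x$. Hence $|f g|\le \|f\|_{L^\infty}|g|$ almost everywhere. Integrating, using monotonicity of the integral and the fact that null sets do not affect it, gives
\[
\int_\Omega |fg|\,d\mu \le \|f\|_{L^\infty}\,\|g\|_{L^1}.
\]

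\textbf{Main obstacle.} The only delicate points are measure-theoretic: handling the degenerate norms ($0$ or $\infty$), and in the endpoint case the fact that the bound on $|f|$ holds only almost everywhere. Neither arises in the finite-dimensional use in \Cref{lemma:single_step_loss}. There, the inequality is immediate from
\[
\Big|\sum_i v_i x_i\Big| \le \sum_i |v_i||x_i| \le \max_i|v_i| \sum_i |x_i|.
\]
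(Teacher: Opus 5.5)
Your proof is correct. The paper gives no argument of its own and only cites Stein--Shakarchi, and your route (Young's inequality from concavity of $\ln$, normalization by $\|f\|_p\|g\|_q$ with the degenerate cases $0$ or $\infty$ handled separately, and the a.e.\ bound $|f|\le\|f\|_{L^\infty}$ for the endpoint case) is the standard textbook proof behind that citation, including your remark that \Cref{lemma:single_step_loss} also needs the step $\bigl|\sum_i v_i x_i\bigr|\le\sum_i|v_i||x_i|$ since the lemma bounds $\int|fg|$ rather than $|\int fg|$.
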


\subsection{Lower Bound for Static Databases under Squared Loss: Proof of \texorpdfstring{\Cref{thm:static_lower}}{Theorem~\ref{thm:static_lower}}}
\label{app:static_lower}
We provide the full proof of the following \Cref{thm:static_lower}:
\binarylowerbound*

\begin{proof}
By Yao's minimax principle~\cite{yao1977probabilistic}, to lower-bound the worst-case expected regret of any randomized algorithm, it suffices to construct an input distribution on which every deterministic algorithm incurs expected regret $\Omega(\log k)$.

\noindent\textbf{Adversary construction.}~
Let $r = \lfloor \log_2 \k \rfloor$ and $\k^* = 2^r$, i.e., $\k^*$ is the largest power of 2 less than or equal to $\k$. The adversary uses only the first $\k^*$ coordinates, and the remaining $\k - \k^*$ coordinates carry zero mass and are never queried. Before the game starts, the adversary samples an index $i^*$ from $[\k^*]$ uniformly at random, and sets the static database vector to $\wstar = \evec_{i^*}$. Thus all mass is on a single unknown index.

For each fixed value of $i^*$, the query sequence is deterministic and constructed via a sequence of candidate sets as follows.
For each round $\t \in [r]$, let $S_\t \subseteq [\k^*]$ be the candidate set of indices that could still be the true index $i^*$ given observations before $\t$. Initially, $S_1 = [\k^*]$. Let $S_t^\prime \subset S_t$ be the first half of $S_t$, i.e., the subset containing the $\frac{|S_t|}{2}$ smallest indices.
Then the next candidate set is:
\[
    S_{\t+1} =
    \begin{cases}
        S_t^\prime, & i^* \in S_t^\prime,\\
        S_t \setminus S_t^\prime, & \text{otherwise.}
    \end{cases}
\]
A simple induction shows that each $S_{\t}$ is a contiguous interval of size $2^{r-\t+1}$, and $i^* \mid S_{\t} \sim \mathrm{Uniform}(S_{\t})$: the base case $\t=1$ holds by construction; for the inductive step, assume the claim holds at round $\t$. Then $S_{\t+1}$ is one of the two equal-sized halves of $S_{\t}$ that contains $i^*$, and conditioning a uniform distribution on a subset preserves uniformity.

Then the adversary issues the range query $\queryvec_\t$ such that $v_{t,i} = 1$ if $i \in S_t^\prime$ and $0$ otherwise. The true selectivity is:
\[
    \trueselect_\t =
    \begin{cases}
        1, & i^* \in S_t^\prime,\\
        0, & \text{otherwise.}
    \end{cases}
\]
Since $i^* \mid \{i^* \in S_{\t}\} \sim \mathrm{Uniform}(S_{\t})$ and $S_t^\prime$ contains exactly half of the elements of $S_t$, we conclude $\trueselect_t \mid S_t \sim \mathrm{Bernoulli}(1/2)$.

\noindent\textbf{Per-round expected loss.}~
For any deterministic algorithm $\mathcal{A}$ and a round $\t \in [r]$, its prediction $\predselect_\t$ depends only on the history $\history_{\t} := (\queryvec_1, \trueselect_1, \ldots, \queryvec_{t-1}, \trueselect_{t-1}, \queryvec_t)$. For any $s < t$, each $\trueselect_s$ encodes which half of $S_s$ contains $i^*$; thus the past selectivities $\trueselect_1, \ldots, \trueselect_{t-1}$ uniquely determine $S_\t$. Combined with $i^* \mid S_{\t} \sim \mathrm{Uniform}(S_{\t})$ and $|S_t^\prime| = |S_t|/2$, we have $\trueselect_t \mid \history_{\t} \sim \mathrm{Bernoulli}(1/2)$.

Therefore, the expected loss in each round $\t$ over all possible choices of $i^*$ is:
\[
   \E\!\left[(\trueselect_t - \predselect_t)^2 \mid \history_{\t}\right]
   = \tfrac{1}{2}(1 - \predselect_t)^2 + \tfrac{1}{2}\predselect_t^2
   \;\ge\; \tfrac{1}{4},
\]
where the inequality is minimized at $\predselect_t = 1/2$.

\noindent\textbf{Summing over all rounds.}~
Taking total expectation and summing over $t = 1, \ldots, r$,
\[
   \E\!\left[\sum_{t=1}^{r} (\trueselect_t - \predselect_t)^2\right]
   \;\ge\; \frac{r}{4} \;=\; \frac{\lfloor \log_2 k \rfloor}{4}.
\]
After $r$ rounds, the candidate set $S_{r+1}$ is reduced to a single element, the algorithm has identified the true index $i^*$ and incurs no loss thereafter. Therefore, the expected cumulative squared loss is at least $\tfrac{1}{4}\lfloor \log_2 k \rfloor$, which equals the expected regret by \Cref{obs:static-baseline}. Hence for any deterministic algorithm $\mathcal{A}$,
\[
   \E\!\left[\regret_T(\mathcal{A})\right]
   \;\ge\; \tfrac{1}{4}\lfloor \log_2 k \rfloor = \Omega(\log k).
\]
By Yao's minimax principle, the same lower bound holds for every randomized algorithm against worst-case inputs.
\end{proof}

%% file: sections_arxiv/appedix/app_dynamic_squared.tex
\section{Upper Bound for Dynamic Databases under Squared Loss}\label{app:dynamic_squared_loss_upper_bound}
% \subsection{ Exponentially Weighted Online Optimization}\label{subsec:dynamic_squared_loss_upper_bound}
In this section, we show that the Exponentially Weighted Online Optimization (EWOO) algorithm~\cite{hazan2007logarithmic} achieves an $O(\k \log \T)$ regret bound for dynamic databases under squared loss, by exploiting the exp-concavity of the squared loss function, as shown in the following \Cref{lem:exp_concavity}. The algorithm is given in \Cref{alg:ewoo}, and the regret guarantee is stated in \Cref{cor:ewoo_log_regret}.

In each round $\t$, we view the selectivity prediction $\predselect_\t \in [0,1]$ as 
arising from a predicted database vector $\predw_\t \in \Delta_\k$ via 
$\predselect_\t = \langle \queryvec_\t, \predw_\t \rangle$. The round-$\t$ squared loss, 
as a function of $\predw$, is then
\begin{equation}\label{eq:squared_loss_function}
  \loss_\t(\predw) := \bigl(\trueselect_\t - \langle \queryvec_\t, \predw \rangle\bigr)^2.
\end{equation}
A convex function $f : \mathcal{K} \to \mathbb{R}$ is \emph{$\alpha$-exp-concave} 
over $\mathcal{K}$ if the function $g(\mathbf{x}) := \exp(-\alpha f(\mathbf{x}))$ is concave over 
$\mathcal{K}$~\cite{hazan2016introduction}. The squared loss in \eqref{eq:squared_loss_function} 
satisfies this property, as the following lemma shows.
\begin{lemma}\label{lem:exp_concavity}
  For any $\trueselect_\t \in [0,1]$ and $\queryvec_\t \in [0,1]^\k$, 
  the function $\loss_\t : \Delta_\k \to \mathbb{R}$ defined in 
  \eqref{eq:squared_loss_function} is $\alpha$-exp-concave over $\Delta_\k$ with $\alpha = \tfrac{1}{2}$.
\end{lemma}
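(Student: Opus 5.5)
We need to show that $g(\predw) = \exp\bigl(-\tfrac12(\trueselect_\t - \langle \queryvec_\t, \predw\rangle)^2\bigr)$ is concave on $\Delta_\k$. The plan is to reduce this to a one-dimensional statement. Write $z = \langle \queryvec_\t, \predw \rangle$, so that $g(\predw) = \phi(z)$ with $\phi(z) := \exp\bigl(-\tfrac12(\trueselect_\t - z)^2\bigr)$. Since $z$ is an affine (in fact linear) function of $\predw$, and the composition of a concave function with an affine map is concave, it suffices to show that $\phi$ is concave on the range of $z$ over $\Delta_\k$.

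That range lies in $[0,1]$. Indeed, for $\predw \in \Delta_\k$ and $\queryvec_\t \in [0,1]^\k$ we have $0 \le \langle \queryvec_\t, \predw\rangle \le \|\queryvec_\t\|_\infty \|\predw\|_1 \le 1$. Because $\trueselect_\t \in [0,1]$ as well, the difference $u := z - \trueselect_\t$ always lies in $[-1,1]$.

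Next, differentiate $\phi$ twice. A direct computation gives
\[
\phi''(z) = \bigl(\alpha^2 u^2 - \alpha\bigr) e^{-\alpha u^2/2}
\]
with $\alpha = \tfrac12$ (here the exponent is $-\alpha u^2$ with $\alpha=\tfrac12$, so the general formula is $\phi'' = (4\alpha^2u^2 - 2\alpha)e^{-\alpha u^2}$). Plugging in $\alpha = \tfrac12$ gives $\phi''(z) = (u^2 - 1)e^{-u^2/2}$. This is nonpositive exactly when $|u| \le 1$, which holds by the previous step. Hence $\phi$ is concave on the relevant interval, and therefore $g$ is concave on $\Delta_\k$.

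Finally, the definition of exp-concavity also requires $\loss_\t$ itself to be convex. This follows because $\loss_\t$ is the square of an affine function.

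The only real subtlety is getting the constant right. Concavity of $\phi$ needs $|u|^2 \le 1/(2\alpha)$, so the bound $|u| \le 1$ is what pins down $\alpha = \tfrac12$. The care therefore lies in the range argument, which uses both the simplex constraint and $\queryvec_\t \in [0,1]^\k$.
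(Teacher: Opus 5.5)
Your proof is correct and follows essentially the paper's route: both reduce, through the linear map $\mathbf{w}\mapsto\langle \mathbf{v}_t,\mathbf{w}\rangle$ with range in $[0,1]$, to the scalar condition $4\alpha u^2\le 2$ for $|u|\le 1$. The paper gets there via the Hessian criterion $\nabla^2 f\succeq \alpha\,\nabla f\nabla f^\top$, while you differentiate $e^{-\alpha u^2}$ directly. One cleanup: your first displayed formula $(\alpha^2u^2-\alpha)e^{-\alpha u^2/2}$ is wrong for the function you defined (it corresponds to the exponent $-\alpha u^2/2$), so drop it and keep only $(4\alpha^2u^2-2\alpha)e^{-\alpha u^2}$, which at $\alpha=\tfrac12$ gives the correct $(u^2-1)e^{-u^2/2}$.
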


\begin{proof}
Let $u(\predw) = \langle \queryvec_{\t}, \predw \rangle$. Because $\predw \in \Delta_\k$ and $\queryvec_{\t} \in [0,1]^\k$, we have $u(\predw) \in [0,1]$. Define the scalar function $g(u) = (\trueselect_{\t} - u)^2$ so that $\loss_{\t}(\predw) = g(u(\predw))$. For twice-differentiable functions, $\alpha$-exp-concavity is implied by the matrix inequality $\nabla^2 \loss_{\t}(\predw) \succeq \alpha (\nabla \loss_{\t}(\predw))(\nabla \loss_{\t}(\predw))^\top$.

By the chain rule, $\nabla \loss_{\t}(\predw) = g'(u) \queryvec_{\t}$ and $\nabla^2 \loss_{\t}(\predw) = g''(u) \queryvec_{\t}\queryvec_{\t}^\top$. The required matrix inequality reduces to the scalar condition $g''(u) \ge \alpha (g'(u))^2$ because both sides are multiples of $\queryvec_{\t}\queryvec_{\t}^\top$. We compute $g'(u) = -2(\trueselect_{\t} - u)$ and $g''(u) = 2$, so $g''(u) \ge \alpha (g'(u))^2$ becomes
\[
2 \ge 4\alpha (\trueselect_{\t} - u)^2.
\]
Since $\trueselect_{\t}, u \in [0,1]$, we have $(\trueselect_{\t} - u)^2 \le 1$, and the inequality holds whenever $\alpha \le \tfrac{1}{2}$. Choosing $\alpha = \tfrac{1}{2}$ satisfies the condition uniformly over $\Delta_\k$, establishing $\alpha$-exp-concavity of $\loss_{\t}$.
\end{proof}

By exploiting this exp-concavity, we achieve logarithmic regret via the Exponentially Weighted Online Optimization (EWOO) algorithm \cite{hazan2007logarithmic}; we follow the description given in \cite{hazan2016introduction}.

As shown in \Cref{alg:ewoo}, the decision space is the probability simplex, i.e., $\mathcal{K} = \Delta_\k$. The parameter $\alpha$ is the exp-concavity parameter of the loss sequence. In each round $\t$, EWOO: (i)~computes a density function $\pi_\t$ over $\Delta_\k$ in Line~\ref{line:compute_density}, which depends only on the cumulative loss up to round $\t-1$; (ii)~predicts the database vector $\predw_\t$ as the centroid (expected value) of $\pi_\t$; (iii)~outputs the selectivity prediction $\predselect_\t = \langle \queryvec_\t, \predw_\t \rangle$; (iv)~observes $\trueselect_\t$ and incurs $\loss_\t(\predw_\t)$. 
Then for the next round $\t+1$, each candidate $\w$ is reweighted by the factor $\exp(-\alpha\loss_\t(\w))$, so candidates with larger squared loss receive smaller weight in future rounds. %Thus the centroid prediction is biased toward database vectors that better explain the past observations.
We restate the general regret bound from \citet{hazan2016introduction} in \Cref{thm:hazan_general}.

\begin{algorithm}[!htbp]
  \caption{Exponentially Weighted Online Optimization (EWOO) for selectivity estimation}
  \label{alg:ewoo}
  \begin{algorithmic}[1]
    \Require Convex set $\mathcal{K} = \Delta_\k$, parameter $\alpha > 0$
    \For{$\t = 1, 2, \ldots, T$}
        \State \textbf{Compute the density function:} $\pi_\t(\w) = \exp \bigl(-\alpha \sum_{\tau=1}^{\t-1} \loss_\tau(\w)\bigr)$ 
               for all $\w \in \Delta_\k$ \label{line:compute_density}
        \State \textbf{Predict the database vector:} 
               $\predw_\t = \dfrac{\int_{\Delta_\k} \w \cdot \pi_\t(\w) \, d\w}
               {\int_{\Delta_\k} \pi_\t(\w) \, d\w}$
               \label{line:predict_weight}
        \State \textbf{Output the predicted selectivity:} 
               $\predselect_\t = \langle \queryvec_\t, \predw_\t \rangle$
               \label{line:output_selectivity}
        \State \textbf{Observe and incur loss:} observe the true selectivity $\trueselect_\t$ and incur 
               $\loss_\t(\predw_\t) = (\trueselect_\t - \langle \queryvec_\t, \predw_\t \rangle)^2$
    \EndFor
  \end{algorithmic}
\end{algorithm}

\begin{theorem}[General EWOO Regret Bound\cite{hazan2016introduction}]
  \label{thm:hazan_general}
  Let the loss functions $\loss_1, \dots, \loss_T$ be $\alpha$-exp-concave over a convex set $\mathcal{K} \subset \mathbb{R}^d$. The regret of the EWOO algorithm is bounded by:
  \begin{align*}
  \regret_T(\text{EWOO}) \le \frac{d}{\alpha} \log T + \frac{2}{\alpha}.
  \end{align*}
  \end{theorem}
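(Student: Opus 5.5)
The plan is to prove \Cref{thm:hazan_general} following the standard analysis of EWOO in \cite{hazan2016introduction}. It has three steps: a potential-based comparison with an exponentially weighted mixture, a volume argument for the best comparator, and a final combination.

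\textbf{Step 1: one-step inequality.} Write $\Phi_\t := \int_{\mathcal{K}} \exp\bigl(-\alpha\sum_{\tau<\t}\loss_\tau(\w)\bigr)\,d\w$. Since $g_\t(\w) := \exp(-\alpha\loss_\t(\w))$ is concave, Jensen's inequality applied to the probability density $\pi_\t/\Phi_\t$, whose mean is $\predw_\t$, gives
\[
\exp\bigl(-\alpha\loss_\t(\predw_\t)\bigr) \ge \frac{\int \pi_\t(\w)\, e^{-\alpha\loss_\t(\w)}\,d\w}{\Phi_\t} = \frac{\Phi_{\t+1}}{\Phi_\t}.
\]
Multiplying over $\t$ telescopes to
\[
\exp\Bigl(-\alpha\sum_\t \loss_\t(\predw_\t)\Bigr) \ge \frac{\Phi_{\T+1}}{\Phi_1}, \qquad \Phi_1 = \mathrm{vol}(\mathcal{K}).
\]

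\textbf{Step 2: lower bound on $\Phi_{\T+1}$.} This is the main obstacle. Let $\w^*$ be the best fixed point in hindsight. Consider the shrunken set
\[
S := \{(1-\eta)\w^* + \eta\mathbf{u} : \mathbf{u}\in\mathcal{K}\},
\]
which has volume $\eta^d\,\mathrm{vol}(\mathcal{K})$. I will bound the weight $\pi_{\T+1}$ on $S$ using concavity of each $g_\t$ rather than just convexity of $\loss_\t$. For $\w = (1-\eta)\w^*+\eta\mathbf{u}$ we have $g_\t(\w) \ge (1-\eta)g_\t(\w^*)$. This gives a per-round bound, but the factor $(1-\eta)$ would then be paid once per round, giving $(1-\eta)^\T$. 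The fix is to apply concavity to the product of the $g_\t$ instead: the function $\prod_\t g_\t = \exp(-\alpha \sum_\t \loss_\t)$ is itself $\alpha$-exp-concave, because a sum of $\alpha$-exp-concave functions is again $\alpha$-exp-concave. This follows from the Hessian criterion $\nabla^2 f\succeq\alpha\nabla f\nabla f^\top$ via Cauchy--Schwarz, or from the fact that a product of positive concave functions is log-concave, though log-concavity alone is weaker. Granting this, we get
\[
\exp\Bigl(-\alpha\sum_\t\loss_\t(\w)\Bigr) \ge (1-\eta)\exp\Bigl(-\alpha\sum_\t\loss_\t(\w^*)\Bigr) \quad\text{on } S.
\]
Hence
\[
\Phi_{\T+1} \ge \eta^d(1-\eta)\,\mathrm{vol}(\mathcal{K})\, e^{-\alpha\sum_\t\loss_\t(\w^*)}.
\]

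\textbf{Step 3: combine.} Taking logarithms of Steps 1 and 2 yields
\[
\regret_\T \le \frac{d\ln(1/\eta)+\ln\bigl(1/(1-\eta)\bigr)}{\alpha}.
\]
Choosing $\eta = 1/\T$ gives
\[
\regret_\T \le \frac{d}{\alpha}\log\T + \frac{1}{\alpha}\ln\frac{\T}{\T-1} \le \frac{d}{\alpha}\log\T+\frac{2}{\alpha}
\]
for $\T\ge 2$. The case $\T=1$ is handled trivially.

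If the closure property used in Step 2 proves delicate, the fallback is the textbook route. There one uses only per-round concavity together with the fact that each $\loss_\t$ is bounded on $\mathcal{K}$. One accepts the factor $(1-\eta)^{\T}$, which equals $e^{O(1)}$ when $\eta = 1/\T$. This still produces the additive $2/\alpha$ term.
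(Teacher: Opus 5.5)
The closure property your main Step~2 rests on is false: a sum of $\alpha$-exp-concave functions is not $\alpha$-exp-concave in general.

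\begin{itemize}
\item In the Hessian test, Cauchy--Schwarz goes the wrong way. The matrix $(\nabla f_1+\nabla f_2)(\nabla f_1+\nabla f_2)^\top$ is only bounded by $2(\nabla f_1\nabla f_1^\top+\nabla f_2\nabla f_2^\top)$, and the factor $2$ is attained when $\nabla f_1=\nabla f_2$.
\item A concrete counterexample: $f(x)=-\frac{1}{\alpha}\ln x$ on $(0,\infty)$ has $e^{-\alpha f}=x$, which is concave, yet $e^{-2\alpha f}=x^2$ is convex.
\item It fails even for this paper's own losses. The map $u\mapsto(1-u)^2$ is $\tfrac12$-exp-concave on $[0,1]$, but the two-round product $e^{-(1-u)^2}$ is strictly convex for $u<1-1/\sqrt{2}$.
\end{itemize}

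So $\prod_t g_t$ is merely log-concave, and the inequality $\exp\bigl(-\alpha\sum_t\loss_t(\w)\bigr)\ge(1-\eta)\exp\bigl(-\alpha\sum_t\loss_t(\w^*)\bigr)$ on $S$ is unproved. What is true is that the average $\frac{1}{\T}\sum_t\loss_t$ is $\alpha$-exp-concave, because a geometric mean of nonnegative concave functions is concave. Plugging that into your Step~2 and raising to the $\T$-th power returns exactly the factor $(1-\eta)^{\T}$. So this route cannot improve on the textbook one.

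Your fallback is the correct proof. It is the standard argument from the textbook the paper cites; the paper itself only quotes the bound. For $\w=(1-\eta)\w^*+\eta\mathbf{u}\in S$, concavity and positivity of each $g_t$ give $g_t(\w)\ge(1-\eta)g_t(\w^*)+\eta g_t(\mathbf{u})\ge(1-\eta)g_t(\w^*)$. Hence $\Phi_{\T+1}\ge\eta^d(1-\eta)^{\T}\mathrm{vol}(\mathcal{K})\,e^{-\alpha\sum_t\loss_t(\w^*)}$. With $\eta=1/\T$ and $(1-1/\T)^{\T}\ge\tfrac14$ for $\T\ge2$, you get $\regret_{\T}\le\frac{1}{\alpha}(d\ln\T+\ln 4)\le\frac{d}{\alpha}\ln\T+\frac{2}{\alpha}$. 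Make this your main argument. Boundedness of $\loss_t$ plays no role; only $g_t>0$ is used.

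Finally, $\T=1$ is not trivial, and the stated bound can actually fail there. Let $\mathcal{K}$ be a $d$-simplex and $e^{-\alpha\loss_1}=\epsilon+\lambda_0$, where $\lambda_0$ is the barycentric coordinate of one vertex. The centroid prediction then has regret $\frac{1}{\alpha}\ln\frac{1+\epsilon}{1/(d+1)+\epsilon}\to\frac{1}{\alpha}\ln(d+1)$, which exceeds $\frac{2}{\alpha}$ for $d\ge7$. So the theorem should be read for $\T\ge2$. This is harmless for \Cref{cor:ewoo_log_regret}, where each squared loss is at most $1$.
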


  Since the dimension of our problem is $\k$ (the size of the support), and $\alpha = \tfrac{1}{2}$ by \Cref{lem:exp_concavity}, we obtain the following corollary:

\begin{corollary}
\label{cor:ewoo_log_regret}
For dynamic databases under the squared loss function, the EWOO algorithm (\Cref{alg:ewoo}) achieves the following regret bound for the general linear queries:
\begin{align*}
  \label{eq:ewoo_regret_bound}
  \regret_{\T}(\mathrm{EWOO}) \leq 2\k \log T + 4 = O(\k \log T).
\end{align*}
\end{corollary}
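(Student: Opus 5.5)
The plan is to invoke \Cref{thm:hazan_general} directly, after checking its two hypotheses: exp-concavity of the losses and convexity of the decision set, with the right dimension parameter. Both are essentially available already.

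\textbf{Step 1: convexity of the decision set.} The EWOO decision set is $\mathcal{K}=\Delta_\k$, a convex compact subset of $\R^\k$, so $d=\k$.

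\textbf{Step 2: exp-concavity.} By \Cref{lem:exp_concavity}, each round-$\t$ loss $\loss_\t(\predw)=(\trueselect_\t-\langle\queryvec_\t,\predw\rangle)^2$ is $\tfrac12$-exp-concave on $\Delta_\k$, for every $\trueselect_\t\in[0,1]$ and $\queryvec_\t\in[0,1]^\k$. This holds no matter how the database $\w_\t$ changes, since only $\trueselect_\t\in[0,1]$ enters the loss. Hence the loss sequence is an arbitrary, adaptively chosen sequence of $\tfrac12$-exp-concave functions, which is exactly the setting of \Cref{thm:hazan_general}. That theorem holds against adaptive adversaries because it is a deterministic worst-case bound.

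\textbf{Step 3: matching the regret notions.} Next I will note that the algorithm's loss $\loss_\t(\predw_\t)$ equals $\loss(\trueselect_\t,\predselect_\t)$, since $\predselect_\t=\langle\queryvec_\t,\predw_\t\rangle$. I also need $\predselect_\t\in[0,1]$, which holds because $\predw_\t$, as the centroid of a density on $\Delta_\k$, lies in $\Delta_\k$. The comparator term $\min_{\w\in\Delta_\k}\sum_\t\loss_\t(\w)$ is exactly the benchmark in \eqref{eq:regret}. So the selectivity regret coincides with the OCO regret of EWOO.

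\textbf{Step 4: substitution.} Plugging $d=\k$ and $\alpha=\tfrac12$ into \Cref{thm:hazan_general} gives
\[
\regret_\T\le 2\k\log\T+4 .
\]

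\textbf{Main obstacle.} The only subtle point is the dimension. $\Delta_\k$ actually has affine dimension $\k-1$, and EWOO integrates with respect to Lebesgue measure on that affine hull. The volume-ratio argument behind \Cref{thm:hazan_general} therefore applies with $d=\k-1\le\k$, which only improves the bound. I would remark on this and otherwise cite the theorem as is.
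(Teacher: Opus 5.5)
Your proposal is correct and matches the paper's argument: the paper likewise obtains the corollary by substituting $d=\k$ and $\alpha=\tfrac12$ (from \Cref{lem:exp_concavity}) into \Cref{thm:hazan_general}, which gives $2\k\log\T+4$. Your additional checks are sound points the paper leaves implicit: that $\predw_\t\in\Delta_\k$, so the selectivity regret coincides with the OCO regret of EWOO, and that using the affine dimension $\k-1$ would only tighten the bound.
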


Together with the $\Omega(\log T)$ lower bound from \Cref{thm:dynamic_lower}, this establishes the logarithmic regret rate as the optimal regret rate with respect to the time horizon $T$.%, for dynamic databases under the squared loss function. 
%A polynomial gap in $\k$ between the $O(\k \log T)$ upper bound and the $\Omega(\log T)$ lower bound remains open.

\subsubsection*{Polynomial-time algorithm with weaker regret bound: Online Newton Step algorithm}
While EWOO achieves the optimal regret rate, its exact implementation requires integrating over the simplex, which is computationally expensive. Another classical algorithm for exp-concave losses is the Online Newton Step (ONS) algorithm~\cite{hazan2016introduction}, which maintains a single point estimate and updates it using a Hessian approximation. ONS runs in strictly polynomial time, at the cost of a slightly weaker dependence on $\k$ in the regret bound. We restate its general guarantee from \citet{hazan2016introduction} below.

\begin{theorem}[General ONS Regret Bound~\cite{hazan2016introduction}]
  \label{thm:ons_general}
  Let $\loss_1, \dots, \loss_\T$ be $\alpha$-exp-concave loss functions over a 
  convex set $\mathcal{K} \subset \mathbb{R}^d$ with Euclidean diameter $D$, and 
  suppose $\|\nabla \loss_\t(\w)\|_2 \le G$ for all $\w \in \mathcal{K}$ and 
  $\t \in [\T]$. Then for $\T \ge 4$, with parameters 
  $\gamma = \tfrac{1}{2}\min\{\tfrac{1}{GD}, \alpha\}$ and 
  $\varepsilon = \tfrac{1}{\gamma^2 D^2}$, the Online Newton Step algorithm satisfies
  \begin{align*}
  \regret_\T(\mathrm{ONS}) \le 2\left(\frac{1}{\alpha} + GD\right) d \log \T.
  \end{align*}
\end{theorem}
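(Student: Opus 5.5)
The plan is to follow the standard Online Newton Step analysis (as in \citet{hazan2016introduction}), in three stages. First, a curvature lemma that turns exp-concavity into a quadratic lower bound. Second, a one-step inequality for the Newton-type update with a generalized projection. Third, a telescoping sum plus a log-determinant (elliptical potential) bound.

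Recall the ONS iteration. Set $A_0 = \varepsilon I$ and write $\nabla_\t = \nabla \loss_\t(\w_\t)$. Then
\[
A_\t = A_{\t-1} + \nabla_\t \nabla_\t^\top, \qquad
\mathbf{y}_{\t+1} = \w_\t - \tfrac{1}{\gamma} A_\t^{-1} \nabla_\t,
\]
and $\w_{\t+1}$ is the projection of $\mathbf{y}_{\t+1}$ onto $\mathcal{K}$ in the norm $\|\cdot\|_{A_\t}$.

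\textbf{Step 1 (curvature lemma).} I would show that for $\gamma \le \tfrac12\min\{\tfrac{1}{GD},\alpha\}$ and all $\mathbf{x},\mathbf{y}\in\mathcal{K}$,
\[
\loss_\t(\mathbf{x}) \ge \loss_\t(\mathbf{y}) + \nabla\loss_\t(\mathbf{y})^\top(\mathbf{x}-\mathbf{y}) + \tfrac{\gamma}{2}\bigl(\nabla\loss_\t(\mathbf{y})^\top(\mathbf{x}-\mathbf{y})\bigr)^2 .
\]
Since $2\gamma \le \alpha$, the function $\exp(-2\gamma \loss_\t)$ is concave (exp-concavity is monotone in the parameter). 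Applying the first-order concavity inequality to it and taking logs gives
\[
\loss_\t(\mathbf{x}) \ge \loss_\t(\mathbf{y}) - \tfrac{1}{2\gamma}\ln\bigl(1 - 2\gamma z\bigr), \qquad z = \nabla\loss_\t(\mathbf{y})^\top(\mathbf{x}-\mathbf{y}).
\]
We have $|2\gamma z| \le 2\gamma G D \le 1/2$. Then the elementary bound $-\ln(1-u) \ge u + u^2/4$ for $|u|\le 1/2$, applied with $u = 2\gamma z$, yields the claim. Getting these constants right is the first delicate point.

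\textbf{Step 2 (one-step inequality).} Fix the best fixed point $\w^*$. By the generalized Pythagorean property of the $A_\t$-norm projection,
\[
\|\w_{\t+1}-\w^*\|_{A_\t}^2 \le \|\mathbf{y}_{\t+1}-\w^*\|_{A_\t}^2 .
\]
Expanding the right-hand side with the update rule and rearranging gives
\[
\nabla_\t^\top(\w_\t-\w^*) \le \tfrac{1}{2\gamma}\nabla_\t^\top A_\t^{-1}\nabla_\t + \tfrac{\gamma}{2}\bigl(\|\w_\t-\w^*\|_{A_\t}^2 - \|\w_{\t+1}-\w^*\|_{A_\t}^2\bigr).
\]

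\textbf{Step 3 (telescoping).} Sum Step 2 over $\t$. Re-indexing the norms uses $A_\t - A_{\t-1} = \nabla_\t\nabla_\t^\top$, so the sum produces the extra term
\[
\tfrac{\gamma}{2}\sum_\t \bigl(\nabla_\t^\top(\w_\t-\w^*)\bigr)^2 + \tfrac{\gamma}{2}\|\w_1-\w^*\|_{A_0}^2 .
\]
The first part is exactly cancelled by the quadratic term from Step 1 with $\mathbf{x}=\w^*$, $\mathbf{y}=\w_\t$. The second part is at most $\tfrac{\gamma}{2}\varepsilon D^2 = \tfrac{1}{2\gamma}$. Hence
\[
\regret_\T \le \tfrac{1}{2\gamma}\sum_\t \nabla_\t^\top A_\t^{-1}\nabla_\t + \tfrac{1}{2\gamma}.
\]

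\textbf{Step 4 (log-determinant bound).} I would use $\nabla_\t^\top A_\t^{-1}\nabla_\t \le \ln\frac{\det A_\t}{\det A_{\t-1}}$, which follows from concavity of $\ln\det$. Telescoping and applying AM–GM to the eigenvalues gives
\[
\sum_\t \nabla_\t^\top A_\t^{-1}\nabla_\t \le \ln\frac{\det A_\T}{\det A_0} \le d\ln\bigl(1 + \T G^2/\varepsilon\bigr).
\]
With $\varepsilon = 1/(\gamma^2D^2)$ and $\gamma \le 1/(2GD)$, we get $G^2/\varepsilon = \gamma^2G^2D^2 \le 1/4$. So the bound is at most $d\ln(1+\T/4) \le d\log \T$, and the additive $\tfrac{1}{2\gamma}$ is absorbed into it once $\T\ge4$.

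Finally, $1/\gamma = 2\max\{GD, 1/\alpha\} \le 2(1/\alpha + GD)$. Combining this with Steps 3 and 4 yields $\regret_\T \le 2(1/\alpha+GD)\, d\log\T$.

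I expect the main obstacle to be the constant bookkeeping: the elementary log inequality in Step 1, and checking that $\tfrac{1}{2\gamma}(d\log\T + 1)$ really fits under $\tfrac{1}{\gamma}d\log\T$ with the stated choice of $\varepsilon$ and $\T\ge4$. The structural argument is routine.
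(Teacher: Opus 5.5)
Your proposal is correct and is the standard Online Newton Step analysis from \citet{hazan2016introduction}, which the paper cites for this bound without reproducing a proof. Your constant bookkeeping in Steps 3--4 also checks out: $d\ln(1+\T/4)+1\le 2d\ln\T$ for $\T\ge 4$, which gives exactly $\frac{1}{\gamma}d\log\T\le 2(\frac1\alpha+GD)\,d\log\T$.

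One small slip in Step 1. The choice $\gamma\le\frac{1}{2GD}$ only gives $|2\gamma z|\le 2\gamma GD\le 1$, not $\le 1/2$. So you need $-\ln(1-u)\ge u+u^2/4$ on $[-1,1)$, not just on $|u|\le 1/2$. The inequality does hold on that whole range, because the derivative of the difference is $\frac{u(1+u)}{2(1-u)}$, which has the sign of $u$. Also, $u<1$ is automatic: the concavity inequality forces $1-2\gamma z>0$.
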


To instantiate this bound for our setting, we note that the simplex $\Delta_\k$ has Euclidean diameter $D = \sqrt{2}$, and the squared-loss gradients satisfy $\|\nabla \loss_\t(\w)\|_2 \le G = 2\sqrt{\k}$ for all $\w \in \Delta_\k$. Plugging $d = \k$, $\alpha = \tfrac{1}{2}$, $D = \sqrt{2}$, and $G = 2\sqrt{\k}$ into \Cref{thm:ons_general} yields
\begin{equation}
  \regret_\T(\mathrm{ONS})
  \;\le\;
  2\bigl(\tfrac{1}{\alpha} + GD\bigr)\,\k \log \T
  = 2\bigl(2 + 2\sqrt{2\k}\bigr)\,\k \log \T
  = O(\k^{3/2}\log \T).
\end{equation}
Thus ONS also achieves logarithmic regret in $\T$, but with a slightly weaker dependence on $\k$ than EWOO. We use EWOO as our main tool because it yields the sharper $O(\k\log \T)$ regret bound.

%% file: sections_arxiv/appedix/app_dynamic_absolute.tex
\section{Dynamic Databases under Absolute Loss}
\label{app:dynamic_absolute_loss}
This section proves the regret bounds under absolute loss stated in \Cref{sec:dynamic_squared_loss}. The lower bounds are established in \Cref{subsec:dynamic_absolute_loss_lower_bound}: \Cref{thm:dynamic_lower_absolute} (\Cref{subsubsec:point_query_lower}) gives an $\Omega(\sqrt{T})$ bound for point queries, and \Cref{thm:dynamic_lower_k_bins} (\Cref{subsubsec:subset_query_lower}) gives an $\Omega(\sqrt{T \log k})$ bound for subset queries. The upper bounds appear in \Cref{subsec:dynamic_absolute_loss_upper_bound}: \Cref{cor:point-query-upper} (\Cref{subsubsec:point_query_upper}) and \Cref{cor:dynamic-abs-upper} (\Cref{subsubsec:linear_query_upper}) instantiate the Follow-the-Regularized-Leader (FTRL) algorithm~\cite{shalev2012online} with squared-$\ell_2$ regularization for point queries and negative-entropy regularization for the general linear queries, respectively. Therefore, the regret for point queries is tight at $\Theta(\sqrt{T})$, independent of the support size $k$.
Since subset queries are a special case of general linear queries, the lower bound for subset queries and the upper bound for linear queries together yield a tight $\Theta(\sqrt{T \log k})$ characterization for both classes.

\subsection{Lower Bound}\label{subsec:dynamic_absolute_loss_lower_bound}
\subsubsection{Point Queries}
\label{subsubsec:point_query_lower}

\begin{theorem}\label{thm:dynamic_lower_absolute}
For dynamic databases under the absolute loss function, for point queries, any (randomized) online algorithm incurs an (expected) regret of $\Omega(\sqrt{\T})$ in the worst case.
\end{theorem}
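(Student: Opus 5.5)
We prove the bound via Yao's minimax principle: it suffices to exhibit a random input distribution on which every deterministic algorithm has expected regret $\Omega(\sqrt{\T})$. We take support size $\k = 2$ and fix the point query $\queryvec_\t = (1,0)$ in every round. In each round the adversary independently sets $\w_\t = (1,0)$ or $(0,1)$ with probability $1/2$ each, so the selectivities $\trueselect_\t \in \{0,1\}$ are i.i.d.\ $\mathrm{Bernoulli}(1/2)$. This is the construction sketched in the overview of techniques.

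\textbf{Algorithm's loss.} Since $\trueselect_\t$ is independent of $\history_\t$, and $\predselect_\t$ is a deterministic function of $\history_\t$, we get
\[
\E\bigl[|\trueselect_\t - \predselect_\t| \mid \history_\t\bigr] = \tfrac12 \predselect_\t + \tfrac12 (1 - \predselect_\t) = \tfrac12 .
\]
Hence the algorithm's expected cumulative loss is exactly $\T/2$.

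\textbf{Comparator's loss.} As in the proof of \Cref{thm:dynamic_lower}, a fixed $\mathbf{u} \in \Delta_2$ predicts the constant $u_1 \in [0,1]$ every round, so the benchmark is $\min_{x \in [0,1]} \sum_\t |\trueselect_\t - x|$. Let $H = \sum_\t \trueselect_\t$. The objective $H(1-x) + (\T - H)x$ is linear in $x$, so its minimum over $[0,1]$ is
\[
\min(H, \T - H) = \tfrac{\T}{2} - \bigl|H - \tfrac{\T}{2}\bigr| .
\]

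\textbf{Regret.} Subtracting, the expected regret equals $\E\bigl|H - \T/2\bigr|$ with $H \sim \mathrm{Binomial}(\T, 1/2)$. The main quantitative step is to lower bound this mean absolute deviation by $\Omega(\sqrt{\T})$. Two routes are available:
\begin{itemize}
\item Use the closed form for the binomial mean absolute deviation, which is $\Theta(\sqrt{\T})$ by Stirling's formula.
\item Avoid exact formulas. Write $Z = H - \T/2$, so $\E[Z^2] = \T/4$ and $\E[Z^4] \le 3\T^2/16$. By Hölder (Paley–Zygmund style), $\E|Z| \ge (\E Z^2)^{3/2} / (\E Z^4)^{1/2} \ge c\sqrt{\T}$ for an absolute constant $c > 0$.
\end{itemize}
I would use the moment argument because it is self-contained; this is the step I expect to need the most care, specifically in getting the fourth-moment bound for a sum of independent $\pm\tfrac12$ variables right.

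Combining the three steps gives expected regret $\Omega(\sqrt{\T})$ for every deterministic algorithm on this distribution. Yao's principle then transfers the bound to randomized algorithms against a worst-case adversary. Because point queries on $\k = 2$ embed into any larger support by putting zero weight on the extra coordinates, the bound holds for every $\k \ge 2$.
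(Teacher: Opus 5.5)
Your proof is correct and follows the paper's argument step for step. It uses the same $k=2$ construction with fixed query $(1,0)$ and i.i.d.\ uniform vertices, the algorithm's expected loss is exactly $T/2$, the comparator's loss is $\min(H, T-H)$, and the regret reduces to $\E|H - T/2|$. The only difference is the final quantitative step: the paper cites the known binomial bound $\E|H-T/2| \ge \sqrt{\mathrm{Var}(H)/2}$, whereas your H\"older bound $\E|Z| \ge (\E Z^2)^{3/2}/(\E Z^4)^{1/2}$, with $\E Z^4 = (3T^2-2T)/16$, is a valid self-contained replacement that gives the constant $1/(2\sqrt{3})$.
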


\begin{proof}
It suffices to consider $k=2$. The adversary fixes the point query $\queryvec_t=(1,0)$ for all $t \in [T]$. Independently in each round, it sets $\w_t=(1,0)$ or $\w_t=(0,1)$ with equal probability. Hence the true selectivity $\trueselect_t=\langle \queryvec_t,\w_t\rangle \sim \text{Bernoulli}(1/2)$, taking values $1$ or $0$ with equal probability.

\sloppy
The algorithm's prediction $\predselect_\t \in [0,1]$ depends only on the history $\history_t = \{\queryvec_1, \trueselect_1, \queryvec_2, \trueselect_2, \ldots, \queryvec_{\t-1}, \trueselect_{\t-1}, \queryvec_{\t}\}$, and is independent of $\w_\t$. Since $\queryvec_\t$ is fixed, this implies that $\predselect_\t$ is independent of $\trueselect_\t = \langle \queryvec_\t, \w_\t\rangle$. Hence the per-round expected loss is
\begin{align*}
\E[\loss_\t] 
= \E\bigl[|\trueselect_\t - \predselect_\t|\bigr] 
= \tfrac{1}{2}(1 - \predselect_\t) + \tfrac{1}{2}\predselect_\t 
= \tfrac{1}{2},
\end{align*}
and by linearity, the expected cumulative loss over $T$ rounds is $\E\bigl[\sum_{\t=1}^\T \loss_\t\bigr] = \tfrac{T}{2}$.

Then we derive an upper bound of the expected total loss of the best fixed predictor. Since the query is fixed, any $\w \in \Delta_2$ yields a constant prediction $c = w_1 \in [0,1]$. Let $H = \sum_{\t=1}^{\T} \trueselect_{\t}$ count the rounds where $\trueselect_{\t} = 1$ (so $\T-H$ counts the rounds where $\trueselect_{\t} = 0$), then the total loss can be written as: 
\begin{align*}
    &\min_{\w \in \Delta_{\k}} \sum_{\t=1}^{\T} \loss_{\t}(\w) = \min_{c \in [0,1]} \sum_{\t=1}^{\T} |\trueselect_{\t} - c|\\
    =& H (1-c) + (\T-H)c = H + (\frac{\T}{2}-H) \cdot 2c
\end{align*}

  % \begin{align*}
  %   \sum_{\t=1}^{\T} |\trueselect_{\t} - c| =
  % \end{align*}
  Choosing $c = 0$ when $H \le T/2$ and $c = 1$ otherwise gives:
  % The majority vote strategy to derive the upper bound of the total loss for the best fixed predictor: if $H \leq T/2$, set $c=0$; if $H > T/2$, set $c=1$. Therfore by Eq. \ref{eq:min_fixed_predictor} and Eq. \ref{eq:majority_vote_loss}:
  \begin{align*}
  \min_{\w \in \Delta_{\k}} \sum_{\t=1}^{\T} \loss_{\t}(\w) \leq \min(H, T-H) = \frac{T}{2} - \frac{|2H-T|}{2}
  \end{align*}
  The last step uses the identity: $\min(a,b) = \frac{a+b}{2} - \frac{|a-b|}{2}, \forall a,b \in \mathbb{R}$.

Therefore, fixing an algorithm, the expected regret is:
\begin{align*}
    \E\left[\sum_{\t=1}^\T \loss_\t - \min_{\w \in \Delta_2} \sum_{\t=1}^\T \loss_\t(\w)\right] \ge \E\left[\frac{T}{2} - \frac{T}{2} + \frac{|2H-T|}{2}\right] = \mathbb{E}\left[|H-\tfrac{T}{2}|\right]
\end{align*}

Since $\trueselect_t \sim \text{Bernoulli}(1/2)$, $H$ follows a binomial distribution $B(T, 1/2)$. By standard properties of the mean absolute deviation of the binomial distribution, we have $\E\left[|H - T/2|\right] \ge \sqrt{\frac{Var(H)}{2}} = \frac{\sqrt{T}}{2\sqrt{2}}$ where $\T \geq 2$~\cite{berend2013sharp}. Therefore, the expected regret of any algorithm against this randomized adversary is $\Omega(\sqrt{T})$.
% \begin{align*}
%     \E\left[\sum_{\t=1}^\T \loss_\t - \min_{\w \in \Delta_2} \sum_{\t=1}^\T \loss_\t(\w)\right] \ge \E\left[\frac{T}{2} - \frac{T}{2} + \frac{|2H-T|}{2}\right] = \mathbb{E}\left[|H-\tfrac{T}{2}|\right] = \Omega(\sqrt{T}).
% \end{align*}
  
% Combining the algorithm's expected loss and the comparator's expected loss, and noting that the supremum over all adversary strategies is at least the expected value under our randomized adversary, we have:
%   \begin{align*}
%     \inf_{\mathcal{A}} \sup \regret_{T}(\mathcal{A}) 
%     &\ge \inf_{\mathcal{A}} \mathbb{E} \left[ \sum_{t=1}^{T} \loss_{t}(\predw_{t}) - \min_{\w \in \Delta_{\k}} \sum_{t=1}^{T} \loss_{t}(\w) \right] \nonumber \\
%     &\ge \frac{T}{2} - \left(\frac{T}{2} - \frac{\sqrt{T}}{2\sqrt{2}}\right) = \Omega(\sqrt{T}).
%   \end{align*}
Moreover, for any $k >2$, the lower bound can be extended by embedding the same construction in the first two coordinates, with zero mass assigned to all remaining coordinates. Since the worst-case regret over adversary strategies is at least the expected regret under any specific adversary distribution, the lower bound holds.
%This concludes the proof.
\end{proof}

\subsubsection{Subset Queries}
\label{subsubsec:subset_query_lower}
% \blue{We now generalize to subset queries over $k \geq 2$ bins. While the point query adversary above uses a fixed query vector, the subset query adversary randomizes both the query and the true selectivity independently, exploiting the richer query structure to force a stronger $\Omega(\sqrt{T\log k})$ lower bound.}
\cut{We now turn to subset queries, where the additional combinatorial
structure of $\queryvec_t \in \{0,1\}^k$ lets the adversary force a
$\log k$ factor that point queries cannot.}
\begin{theorem}\label{thm:dynamic_lower_k_bins}
  For dynamic databases under the absolute loss function, for subset queries, any (possibly randomized) online algorithm incurs an expected regret of $\Omega(\sqrt{T\log k})$ in the worst case.
\end{theorem}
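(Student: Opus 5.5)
We follow the proof of \Cref{thm:dynamic_lower_absolute}: build a randomized adversary under which every algorithm has expected loss exactly $T/2$, while the best fixed database does noticeably better. The difference from the point-query case is that the comparator is now the best of many independent \emph{experts} rather than one constant predictor. This gains the classical $\sqrt{\log(\cdot)}$ factor of the experts lower bound. By Yao's minimax principle it suffices to handle deterministic algorithms.

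\textbf{Construction.} Assume $k \ge 4$ (for smaller $k$, \Cref{thm:dynamic_lower_absolute} already gives $\Omega(\sqrt{T})$) and set $d = k-2$. Coordinates $1,\dots,d$ act as experts, and coordinates $d+1$ and $d+2$ act as ``outcome'' coordinates. In each round $t$, independently of everything else:
\begin{itemize}
\item[-] the adversary draws bits $v_{t,1},\dots,v_{t,d}$ i.i.d.\ uniform on $\{0,1\}$, and sets $v_{t,d+1}=1$, $v_{t,d+2}=0$;
\item[-] it draws $\trueselect_t \sim \mathrm{Bernoulli}(1/2)$, and sets $\w_t = \evec_{d+1}$ if $\trueselect_t = 1$ and $\w_t = \evec_{d+2}$ otherwise.
\end{itemize}
Then $\queryvec_t \in \{0,1\}^k$ is a valid subset query and $\langle \queryvec_t, \w_t\rangle = \trueselect_t$. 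Moreover, $\trueselect_t$ is a fair coin independent of $\history_t$, which includes $\queryvec_t$.

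\textbf{Algorithm's loss.} The prediction $\predselect_t$ is a function of $\history_t$ only, so exactly as in \Cref{thm:dynamic_lower_absolute},
\[
\E[|\trueselect_t - \predselect_t|] = \tfrac12 .
\]
Summing over rounds, the algorithm's expected cumulative loss is exactly $T/2$.

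\textbf{Comparator's loss.} We upper bound the benchmark by restricting to the vertices $\evec_i$ with $i \le d$. For such a vertex the round-$t$ loss is $|\trueselect_t - v_{t,i}|$, which is a fair coin. Write $L_i := \sum_t |\trueselect_t - v_{t,i}|$. Because the bits $v_{t,i}$ are independent across $i$ and $t$ and independent of $\trueselect_t$, the variables $L_1,\dots,L_d$ are i.i.d.\ $\mathrm{Binomial}(T,1/2)$. The expected regret is therefore at least
\[
T/2 - \E\bigl[\min_{i\le d} L_i\bigr] = \E\bigl[\max_{i \le d}(T/2 - L_i)\bigr],
\]
so it remains to show $\E[\max_{i \le d}(T/2 - L_i)] = \Omega(\sqrt{T \log d})$.

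\textbf{Main obstacle: binomial anti-concentration.} This last bound is the hard step, and we plan to reduce it to a lower-tail estimate for the binomial. We would first invoke a known anti-concentration inequality such as Slud's inequality, or a direct Stirling estimate of the binomial point masses. It gives $\Pr(T/2 - L_i \ge x) \ge c_1 \exp(-c_2 x^2/T)$ for $0 \le x \le T/4$. We then choose $x = c_3\sqrt{T\ln d}$ with $c_3$ small enough that this probability is at least $d^{-1/2}$; this requires $T \ge C\ln d$. By independence, the probability that no $i$ reaches this level is at most $(1-d^{-1/2})^d \le e^{-\sqrt d}$. On that failure event we only use $\max_i (T/2 - L_i) \ge -T/2$, and the factor $e^{-\sqrt d}$ keeps its contribution lower order. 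Otherwise we could simply pair each expert coordinate with its complement so that the maximum is symmetric, and thus nonnegative in expectation. Either way we get
\[
\E\bigl[\max_{i\le d}(T/2 - L_i)\bigr] \ge (1-e^{-\sqrt d})\,c_3\sqrt{T\ln d} - O(T e^{-\sqrt d}) = \Omega(\sqrt{T\log k}).
\]

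\textbf{The regime $T < C\ln d$.} Here we instead run the construction with only $d' = 2^{\lfloor T/C\rfloor}$ experts, placing zero mass on the unused coordinates. This gives $\Omega(\sqrt{T \log d'}) = \Omega(T)$, which matches $\sqrt{T\log k}$ up to constants in this range, since $T \gtrsim 1$. Combining the two regimes completes the proof.
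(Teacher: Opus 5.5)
Your proof is sound in structure but takes a different route from the paper at the key probabilistic step. The paper keeps $k-1$ independent expert coordinates and closes each query with $v_{t,k}=1-v_{t,1}$. It computes the benchmark exactly: all terms $w_i(\trueselect_t-v_{t,i})$ share a sign, so the comparator's loss is linear in $\w$ and minimized at a vertex. It then rewrites the regret as $\frac12\E[\max_i\sum_t z_{t,i}]$ with i.i.d.\ Rademacher $z_{t,i}$, and cites two limit theorems from Cesa-Bianchi and Lugosi: a CLT for the normalized maximum, and $\E[\max_{i\le N}G_i]\sim\sqrt{2\ln N}$. You instead use two dedicated outcome coordinates, only upper bound the benchmark by the expert vertices, and replace the limits with a non-asymptotic binomial lower-tail bound plus independence across experts. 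The paper's route is shorter because it cites textbook lemmas. However, combining its two limits yields the bound only for $T$ beyond an unspecified threshold depending on $k$. Yours gives explicit constants and a bound valid for every $T\ge C\ln k$.

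Two points need repair. First, bounding the maximum by $-T/2$ on the failure event does not work. The subtracted term $\frac T2 e^{-\sqrt d}$ exceeds $c_3\sqrt{T\ln d}$ once $T$ is larger than a constant times $e^{2\sqrt d}\ln d$, e.g.\ for fixed $k$ and $T\to\infty$, which is the main regime of the theorem. You need the maximum to be pointwise nonnegative, and your own construction already gives this. Keep $\evec_{d+1}$ and $\evec_{d+2}$ in the comparator: their cumulative losses are $T-H$ and $H$ with $H=\sum_t\trueselect_t$, so $\max_{i\le d+2}(T/2-L_i)\ge|H-T/2|\ge 0$. Hence $\E[\max_{i\le d+2}(T/2-L_i)]\ge(1-e^{-\sqrt d})\,c_3\sqrt{T\ln d}$ with nothing subtracted; your complement pairing does the same job, so it is required, not optional. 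Relatedly, calibrating $c_1e^{-c_2c_3^2\ln d}\ge d^{-1/2}$ needs $d$ above a constant depending on $c_1$, not merely $k\ge 4$; bounded $k$ is covered by \Cref{thm:dynamic_lower_absolute}.

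Second, for $T<C\ln d$ your $\Omega(T)$ does not match $\sqrt{T\log k}$ up to constants when $T\ll\log k$. No argument could: the regret never exceeds $T$, since every absolute loss lies in $[0,1]$. The statement is therefore only true for $T$ at least of order $\log k$, which is the paper's ``sufficiently large $T$''. What your two regimes actually establish is the sharp form $\Omega(\min\{T,\sqrt{T\log k}\})$, and you should state it that way.
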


\paratitle{Overview of Techniques.}
In each round $\t \in [T]$, the adversary draws the true selectivity $\trueselect_\t$ and the subset query $\queryvec_\t$ from independent distributions: $\trueselect_\t \sim \mathrm{Bernoulli}(1/2)$, and $\queryvec_\t \in \{0,1\}^\k$ has its first $\k-1$ coordinates drawn i.i.d.\ from $\mathrm{Bernoulli}(1/2)$. The last coordinate is set to $v_{\t,\k} = 1 - v_{\t,1}$ to ensure that $\queryvec_\t$ is never the all-zero or all-one vector. Therefore, there always exists a valid database vector $\w_\t$ that places all its mass on a single coordinate $j \in [\k]$ satisfying $v_{\t,j} = \trueselect_\t$.

Under this distribution, the regret reduces to the expected maximum of $\k-1$ Rademacher random walks of length $\T$. Since $\trueselect_\t$ is independent of the algorithm's prediction, the algorithm incurs expected per-round loss $1/2$, giving total expected loss $T/2$. The best fixed comparator, however, concentrates on one of the $\k$ vertices of $\Delta_\k$, achieving total loss: $\min_{i \in [\k]} \sum_\t |\trueselect_\t - v_{\t,i}|$. Defining $z_{\t,i} := 1 - 2|\trueselect_\t - v_{\t,i}|$, the expected regret becomes $\tfrac{1}{2}\,\E\bigl[\max_{i \in [\k]} \sum_\t z_{\t,i}\bigr]$. By construction, the variables $\{z_{\t,i}\}_{\t \in [T], \, i \in [\k-1]}$ are i.i.d.\ Rademacher random variables taking values in $\{-1, +1\}$ with probability $1/2$; the expected maximum of $\k-1$ such sums of length $T$ is $\Omega(\sqrt{T \log k})$ by standard results in ~\cite{cesa2006prediction} (see \Cref{lem:rademacher_max}).
By Yao's minimax principle, the lower bound on the regret of any randomized algorithm against the worst-case input is $\Omega(\sqrt{T \log k})$.

\cut{
\red{----OLD---}
The adversary draws the true selectivity $\trueselect_t \sim \mathrm{Bernoulli}(1/2)$, independently of the query vector $\queryvec_t$, so after observing the query the learner still has no information about the true selectivity. Hence the expected total loss of any algorithm is still $T/2$, as the argument in the point-query proof. 

The new ingredient is that the adversary draws each subset query $\queryvec_t$ independently across rounds $t \in [T]$, thus the best fixed predictor now chooses among $\k$ vertices of $\Delta_k$ that minimizes the cumulative mismatch: $S_i := \sum_{t=1}^T |\trueselect_t - v_{t,i}|$. Intuitively, the per-round loss $|\trueselect_t - v_{t,i}|$ is an indicator of whether coordinate $i$'s query bit disagrees with $\trueselect_t$, and the best fixed predictor selects whichever coordinate $i$ accumulated the fewest such disagreements over $T$ rounds. Define $z_{t,i} := 1-2|\trueselect_t - v_{t,i}|$, thus
Thus the expected total loss of the best fixed predictor is: $\mathbb{E}[\min_{i \in [k]} S_i] = \frac{T}{2} - \frac{1}{2}\mathbb{E}[\sum_{t=1}^T z_{t,i}]$. 
Therefore, the expected regret of any algorithm can be reduced to: $\E[\regret_T(\mathcal{A})] = \tfrac{1}{2}\,\E\!\left[\max_{i \in [k]} \sum_{t=1}^T z_{t,i}\right]$.

To ensure there always exists a valid database state for any realization of $\trueselect_t$ and $\queryvec_t$, the adversary is designed so that the previous $k-1$ coordinates of $\queryvec_t$ are independently drawn from $\mathrm{Bernoulli}(1/2)$, and the last coordinate depends on the first coordinate, i.e., $v_{t,k} = 1-v_{t,1}$. Therefore, $z_{t,i}$ are independent across rounds $t \in [T]$ and coordinates $i \in [k-1]$, and $z_{t,i} \in \{-1, +1\}$ each with probability $1/2$, i.e., a independent Rademacher random variable.

Since the maximum over all $k$ coordinates is at least the maximum over the first $k-1$ coordinates, i.e., $\E\left[\max_{i \in [k]} \sum_{t=1}^T z_{t,i}\right] \ge \E\left[\max_{i \in [k-1]} \sum_{t=1}^T z_{t,i}\right]$, and $\{z_{t,i}\}_{t\geq 1,i\in [k-1]}$ are i.i.d. Rademacher random variables, we can apply the standard result on the expected maximum of the sum of i.i.d.\ Rademacher random variables (\Cref{lem:rademacher_max})~\cite{cesa2006prediction} to complete the proof: $\E\left[\max_{i \in [k-1]} \sum_{t=1}^T z_{t,i}\right] = \Omega(\sqrt{T\log k})$.

The remainder of the proof formalizes this intuition in four steps.
}

\begin{proof} %[Proof of \Cref{thm:dynamic_lower_k_bins}]
% The proof proceeds in four steps. First, we construct a randomized adversary that draws random subset queries, and random true selectivities that is independent of the query at each round. Second, the independence forces any algorithm to incur an expected total loss of $T/2$. Third, we show that the best fixed predictor concentrates all weight on whichever coordinate $i \in [k]$ happened to minimize $\sum_{t=1}^T |\trueselect_t - v_{t,i}|$. Intuitively, the per-round loss $|\trueselect_t - v_{t,i}|$ is an indicator of whether coordinate $i$'s query bit disagrees with the true selectivity $\trueselect_t$. The best fixed predictor selects whichever coordinate $i$ accumulated the fewest such disagreements over $T$ rounds. Last, we relate the minimum over these $k$ sums to the maximum of $k$ negatively correlated Rademacher sums. \Cref{lem:neg_corr_walks} shows that the expected maximum deviates above its mean by $\Omega(\sqrt{T\log k})$, which implies that the comparator's expected total loss is at most $T/2 - \Omega(\sqrt{T\log k})$. The regret lower bound follows by taking the difference.

\noindent
\textbf{Step 1: Adversary construction.~} At each round $t \in [T]$, the adversary first draws $k-1$ independent random variables from $\mathrm{Bernoulli}(1/2)$: $b_{t,1}, \ldots, b_{t,k-1}$. Then it defines the subset query $\queryvec_t\in\{0,1\}^k$ by:
\[
  v_{t,i} = b_{t,i} \quad\text{for } i=1,\ldots,k-1,
  \qquad
  v_{t,k} = 1-b_{t,1}.
\]
Since the first and last coordinates always sum to 1, $\queryvec_t$ is never equal to $\zeros$ or $\ones$. Then the adversary draws a true selectivity independent of $\queryvec_t$ and all past variables:
\[
  \trueselect_t \sim \mathrm{Bernoulli}(1/2).
\]
Then it sets the database state $\w_t = \evec_j$, where $j$ is any coordinate satisfying $v_{t,j} = \trueselect_t$. Since $\queryvec_t$ contains at least one coordinate equal to $1$ and at least one coordinate equal to $0$, such a $j$ always exists for either value of $\trueselect_t$. This ensures $\langle \queryvec_t, \w_t \rangle = v_{t,j} = \trueselect_t$ and $\w_t$ is always a valid database state.

\noindent
\textbf{Step 2: Algorithm's expected loss.~}The learner observes $\queryvec_t$ and then predicts $\predselect_t = \langle \queryvec_t, \predw_t \rangle \in [0, 1]$. Since $\trueselect_t$ is drawn independent of $\predselect_t$, the expected loss per-round is:
\begin{align*}
  \E[\loss_t(\predw_t)] = \E\left[|\trueselect_t - \predselect_t| \right] = \tfrac{1}{2}\predselect_t + \tfrac{1}{2}(1 - \predselect_t) = \tfrac{1}{2}.
\end{align*}
By linearity of expectation, the expected total loss for any algorithm is:
\begin{equation}\label{eq:expected_loss_algorithm}
  \E\big[\sum_{t=1}^T \loss_t(\predw_t)\big] = \frac{T}{2}.
\end{equation}
  
\noindent
\textbf{Step 3: Best fixed predictor's expected loss.~}Consider any fixed weight vector $\mathbf{w}^* \in \Delta_k$. Its absolute loss in round $t$ is $\loss_t(\mathbf{w}^*) = |\trueselect_t - \langle \mathbf{v}_t, \mathbf{w}^* \rangle|$. Since coordinates of $\wstar$ sum to 1, we can rewrite the scalar $\trueselect_t$ as: 
\[
  \trueselect_t = \sum_{i=1}^k w^*_i \trueselect_t.
\]

Substituting this and the inner product $\langle \mathbf{v}_t, \mathbf{w}^* \rangle = \sum_{i=1}^k w^*_i v_{t,i}$ into the absolute loss yields:
\begin{align*}
  \loss_t(\mathbf{w}^*) &= \left| \sum_{i=1}^k w^*_i \trueselect_t - \sum_{i=1}^k w^*_i v_{t,i} \right| = \left| \sum_{i=1}^k w^*_i (\trueselect_t - v_{t,i}) \right|.
\end{align*}
Since $\trueselect_t, v_{t,i} \in \{0,1\}$, the differences $(\trueselect_t - v_{t,i})$ are non-negative for all $i \in [k]$ when $\trueselect_t = 1$, or non-positive when $\trueselect_t = 0$. Because $w^*_i \ge 0$, all terms $w^*_i(\trueselect_t - v_{t,i})$ share the same sign across all $i \in [k]$, so the absolute value of the sum equals the sum of the absolute values. Rewriting the loss at round $t$ as:
\[
  \loss_t(\wstar) = \sum_{i=1}^k w^*_i |\trueselect_t - v_{t,i}|.
% \left| \sum_{i=1}^k w^*_i (\trueselect_t - v_{t,i}) \right| = 
\]
  
Consequently, the total loss of a fixed $\wstar$ can be written as follows:
\[
  \sum_{t=1}^T \loss_t(\mathbf{w}^*) = \sum_{i=1}^k w^*_i \left( \sum_{t=1}^T |\trueselect_t - v_{t,i}| \right).
\]

    Define $S_i := \sum_{t=1}^T |\trueselect_t - v_{t,i}|$ for each $i \in [k]$. 
The total loss is the linear function $\sum_{i=1}^k w^*_i S_i$ over 
$\mathbf{w}^* \in \Delta_k$. A linear function on the simplex attains its 
minimum at a vertex, so:
\begin{equation}\label{eq:sum_loss_best_fixed_predictor_query}
  \min_{\mathbf{w}^* \in \Delta_k} \sum_{t=1}^T \loss_t(\mathbf{w}^*) 
  = \min_{i \in [k]} S_i 
  = \min_{i \in [k]} \sum_{t=1}^T |\trueselect_t - v_{t,i}|.
\end{equation}

Then we rewrite the total loss by defining $z_{t,i} := 1 - 2|\trueselect_t - v_{t,i}|$. Substituting $|\trueselect_t - v_{t,i}| = \frac{1}{2}(1 - z_{t,i})$ into \eqref{eq:sum_loss_best_fixed_predictor_query} and take the expectation. The expected total loss of the best fixed predictor is:

\begin{equation}\label{eq:expected_loss_best_fixed_predictor}
  \begin{aligned}
    &\mathbb{E}\left[\min_{\mathbf{w}^* \in \Delta_k} \sum_{t=1}^T \loss_t(\mathbf{w}^*)\right] 
    = \mathbb{E}\left[\min_{i \in [k]} \sum_{t=1}^T |\trueselect_t - v_{t,i}|\right]\\ 
    =& \mathbb{E}\left[\min_{i \in [k]} \sum_{t=1}^T \frac{1 - z_{t,i}}{2}\right] 
    = \frac{T}{2} - \frac{1}{2}\mathbb{E}\left[\max_{i \in [k]} \sum_{t=1}^T z_{t,i}\right].
  \end{aligned}
  \end{equation}

\noindent
\textbf{Step 4: Reducing the expected regret to independent Rademacher sums.~}
Combining \eqref{eq:expected_loss_best_fixed_predictor} with \eqref{eq:expected_loss_algorithm} and since the maximum over all $k$ coordinates is at least the maximum over the first $k-1$ coordinates, the expected regret for any algorithm is:
\begin{align*}
  \mathbb{E} \left[ \sum_{t=1}^T \loss_t(\predw_t) \right] - \mathbb{E} \left[ \min_{\mathbf{w}^* \in \Delta_k} \sum_{t=1}^T \loss_t(\mathbf{w}^*) \right] = \frac{1}{2}\,\mathbb{E}\left[\max_{i \in [k]} \sum_{t=1}^T z_{t,i}\right]  \ge \frac{1}{2} \mathbb{E} \left[ \max_{i \in [k-1]} \sum_{t=1}^T z_{t,i} \right].
\end{align*}

% \begin{equation}\label{eq:expected_regret_to_independent_rademacher_sums}
%   \mathbb{E} \left[ \regret_T(\mathcal{A}) \right] = \frac{1}{2}\,\mathbb{E}\left[\max_{i \in [k]} \sum_{t=1}^T z_{t,i}\right].
% \end{equation}
Recall that for each coordinate $i\in [k-1]$ and each round $t \in [T]$, both $\trueselect_t$ and $v_{t,i}$ are independent random variables drawn from $\mathrm{Bernoulli}(1/2)$. Therefore, $z_{t,i} = 1 - 2|\trueselect_t - v_{t,i}|$ is independent across $i \in [k-1]$ and $t \in [T]$, and $z_{t,i} \in \{-1, +1\}$ each with probability $1/2$, i.e., a Rademacher random variable. By the standard result as shown in the following \Cref{lem:rademacher_max}~\cite{cesa2006prediction}, the expected maximum of the sum of i.i.d. Rademacher random variables is $\Omega(\sqrt{T \log k})$.
Therefore, for every deterministic algorithm, the expected regret under this randomized adversary is $\Omega(\sqrt{T \log k})$ for sufficiently large $T$ and $k$. By Yao's minimax principle, this implies the same lower bound on the worst-case expected regret of every randomized algorithm.
\end{proof}

The proof relies on two standard results from~\cite{cesa2006prediction}, restated below.

\begin{lemma}[{\cite[Lemma~A.11]{cesa2006prediction}}]
\label{lem:cbl_rademacher_to_gaussian}
Let $\{Z_{i,t}\}_{i \in [N],\, t \ge 1}$ be i.i.d.\ Rademacher
random variables (i.e., $\Prob(Z_{i,t} = -1) = \Prob(Z_{i,t} = 1) = 1/2$),
and let $G_1, \ldots, G_N$ be independent standard normal random
variables. Then
\[
  \lim_{n \to \infty}
  \E\!\left[\max_{i \in [N]} \frac{1}{\sqrt{n}}\sum_{t=1}^n Z_{i,t}\right]
  = \E\!\left[\max_{i \in [N]} G_i\right].
\]
\end{lemma}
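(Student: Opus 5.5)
The plan is to prove this lemma by weak convergence plus uniform integrability. Fix $N$ and write $X^{(n)} := \bigl(X^{(n)}_1,\ldots,X^{(n)}_N\bigr)$ with $X^{(n)}_i := \frac{1}{\sqrt{n}}\sum_{t=1}^n Z_{i,t}$. The random vectors $Y_t := (Z_{1,t},\ldots,Z_{N,t})$, $t \ge 1$, are i.i.d. in $\R^N$ with mean zero and covariance equal to the identity, since the coordinates are independent Rademacher variables with variance $1$. Since $X^{(n)} = \frac{1}{\sqrt{n}}\sum_{t=1}^n Y_t$, the multivariate central limit theorem gives $X^{(n)} \to (G_1,\ldots,G_N)$ in distribution. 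Equivalently, each coordinate converges to a standard normal by the one-dimensional CLT, and the coordinates are mutually independent for every $n$, so the joint law converges to the product law.

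Next, I would apply the continuous mapping theorem. The map $\phi(x) = \max_{i\in[N]} x_i$ is continuous (indeed $1$-Lipschitz in $\|\cdot\|_\infty$) on $\R^N$, so $M_n := \max_i X^{(n)}_i$ converges in distribution to $M := \max_i G_i$.

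Convergence in distribution does not by itself give convergence of expectations, so the main step is uniform integrability of $\{M_n\}_{n\ge 1}$. I would use the pointwise bound $|M_n| \le \sum_{i=1}^N |X^{(n)}_i|$. Together with $(a_1+\cdots+a_N)^2 \le N\sum_i a_i^2$ and $\E\bigl[(X^{(n)}_i)^2\bigr] = \frac{1}{n}\sum_{t=1}^n \E[Z_{i,t}^2] = 1$ (cross terms vanish by independence and zero mean), this yields
\[
\E\bigl[M_n^2\bigr] \le N \sum_{i=1}^N \E\bigl[(X^{(n)}_i)^2\bigr] = N^2
\]
uniformly in $n$. A uniform bound on second moments implies uniform integrability, because $\E\bigl[|M_n|\,\mathbf{1}\{|M_n|>K\}\bigr] \le \E[M_n^2]/K \le N^2/K \to 0$ as $K\to\infty$. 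The standard fact that convergence in distribution plus uniform integrability implies convergence of means (for example via Skorokhod representation and Vitali's theorem) then gives $\E[M_n] \to \E[M]$, which is the claim. Note that $\E|M| < \infty$ holds since $|M| \le \sum_i |G_i|$.

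The only delicate point is this passage from distributional convergence to convergence of expectations. If one prefers to avoid the abstract uniform integrability theorem, a backup is to write $\E[M_n] = \int_0^\infty \Prob(M_n>s)\,ds - \int_0^\infty \Prob(M_n<-s)\,ds$. The integrands converge pointwise at continuity points of the limiting distribution function, which is every point because $M$ has a continuous law. They are dominated uniformly in $n$ by $N\cdot 2e^{-s^2/2}$, by Hoeffding's inequality and a union bound. Dominated convergence then finishes the argument.
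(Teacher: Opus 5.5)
Your proof is correct. The multivariate CLT and the continuous mapping theorem give $\max_i X^{(n)}_i \to \max_i G_i$ in distribution, and the uniform bound $\E[M_n^2]\le N^2$ (indeed even $\le N$) supplies the uniform integrability needed to pass to expectations; your Hoeffding-plus-dominated-convergence backup is also valid. The paper itself gives no proof of this lemma and simply imports it from Cesa-Bianchi and Lugosi. Your weak-convergence-plus-uniform-integrability argument is the standard way such a limit is established, so there is nothing in the paper to reconcile it with.
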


\begin{lemma}[{\cite[Lemma~A.12]{cesa2006prediction}}]
\label{lem:gaussian_extremal_asymptotic}
Let $G_1, \ldots, G_N$ be independent standard normal random
variables. Then
\[
  \lim_{N \to \infty}
  \frac{\E[\max_{i \in [N]} G_i]}{\sqrt{2 \ln N}} = 1.
\]
\end{lemma}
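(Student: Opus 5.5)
We prove the two directions separately, writing $M_N := \max_{i \in [N]} G_i$ and $\Phi(x) := \Prob(G_1 > x)$ for the standard normal tail. The plan is a Chernoff-type upper bound $\E[M_N] \le \sqrt{2\ln N}$ for every $N$. This is matched by a lower bound $\liminf_{N\to\infty} \E[M_N]/\sqrt{2\ln N} \ge 1-\varepsilon$ for every fixed $\varepsilon \in (0,1)$. Together they give the limit $1$.

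\emph{Upper bound.} Fix $\lambda > 0$.
\begin{itemize}
\item By Jensen's inequality, $\exp(\lambda \E[M_N]) \le \E[\exp(\lambda M_N)]$.
\item Since the exponential of the maximum is at most the sum of the exponentials, this is at most $\sum_{i=1}^N \E[e^{\lambda G_i}] = N e^{\lambda^2/2}$.
\item Taking logarithms gives $\E[M_N] \le \frac{\ln N}{\lambda} + \frac{\lambda}{2}$.
\item Choosing $\lambda = \sqrt{2\ln N}$ yields $\E[M_N] \le \sqrt{2\ln N}$.
\end{itemize}
Hence $\limsup_{N\to\infty} \E[M_N]/\sqrt{2\ln N} \le 1$.

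\emph{Lower bound.} Fix $\varepsilon \in (0,1)$ and set $a = a_N := (1-\varepsilon)\sqrt{2\ln N}$.

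First we lower bound $M_N$ pointwise. Since $M_N \ge G_1 \ge \min(G_1,0)$, we have
\[
M_N \;\ge\; a\,\mathbf{1}\{M_N \ge a\} + \min(G_1,0)
\]
on both events $\{M_N \ge a\}$ and $\{M_N < a\}$. Taking expectations and using $\E|G_1| = \sqrt{2/\pi}$ gives
\[
\E[M_N] \;\ge\; a\bigl(1 - \Prob(M_N < a)\bigr) - \sqrt{2/\pi}.
\]

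Next we show that $\Prob(M_N < a) \to 0$.
\begin{itemize}
\item By independence and $1-x \le e^{-x}$, $\Prob(M_N < a) = (1 - \Phi(a))^N \le \exp(-N\Phi(a))$.
\item The standard Mills-ratio lower bound gives $\Phi(a) \ge \frac{a}{1+a^2}\cdot\frac{1}{\sqrt{2\pi}}e^{-a^2/2}$.
\item Here $e^{-a^2/2} = N^{-(1-\varepsilon)^2}$, so $N\Phi(a) \ge c\,\frac{N^{1-(1-\varepsilon)^2}}{\sqrt{\ln N}}$ for a constant $c > 0$ and all large $N$.
\item Since $1-(1-\varepsilon)^2 > 0$, the right-hand side tends to $\infty$, and therefore $\Prob(M_N < a) \to 0$.
\end{itemize}

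Dividing the displayed bound by $\sqrt{2\ln N}$ and letting $N \to \infty$ gives $\liminf_{N\to\infty} \E[M_N]/\sqrt{2\ln N} \ge 1-\varepsilon$. Since $\varepsilon$ is arbitrary, combining with the upper bound proves the claim.

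\emph{Main obstacle.} The only delicate point is the Gaussian tail lower bound $\Phi(a) \ge \frac{a}{1+a^2}\varphi(a)$, where $\varphi$ is the standard normal density. I would prove it directly by differentiating $\frac{a}{1+a^2}\varphi(a) - \Phi(a)$: this difference tends to $0$ as $a\to\infty$ and is nondecreasing, hence nonpositive. The rest is routine bookkeeping.
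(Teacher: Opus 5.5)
Your proof is correct. The paper gives no proof of this lemma and simply cites Lemma~A.12 of Cesa-Bianchi and Lugosi, so your argument supplies a self-contained proof where the paper has none. The route is the standard one.
\begin{itemize}
\item The upper bound comes from the exponential-moment (log-sum-exp) argument: $\mathbb{E}[M_N] \le \ln N/\lambda + \lambda/2 = \sqrt{2\ln N}$ for every $N \ge 2$.
\item The lower bound comes from the pointwise truncation $M_N \ge a\,\mathbf{1}\{M_N \ge a\} + \min(G_1,0)$, together with $\Prob(M_N < a) \le \exp(-N\Phi(a))$ and the Mills-ratio inequality.
\item Your derivative check for the Mills-ratio inequality works out: $\frac{d}{da}\bigl[\frac{a}{1+a^2}\varphi(a) - \Phi(a)\bigr] = \frac{2\varphi(a)}{(1+a^2)^2} > 0$, and the difference vanishes at infinity.
\end{itemize}

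Compared with the bare citation, your version gives explicit non-asymptotic control. The upper bound holds for all $N \ge 2$. The lower bound becomes $\mathbb{E}[M_N] \ge (1-\varepsilon)\sqrt{2\ln N}\,\bigl(1 - \exp(-c\,N^{1-(1-\varepsilon)^2}/\sqrt{\ln N})\bigr) - \sqrt{2/\pi}$. This is the kind of quantitative input one would want in order to make the ``for sufficiently large $T$ and $k$'' step of the subsequent Rademacher-sum lemma precise. The limit statement alone does not provide it.

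A few cosmetic points:
\begin{itemize}
\item The choice $\lambda = \sqrt{2\ln N}$ needs $N \ge 2$.
\item $\mathbb{E}[\min(G_1,0)] = -1/\sqrt{2\pi}$ exactly, so your $-\sqrt{2/\pi}$ is valid but loose.
\item The constant $c$ (which depends on $\varepsilon$) comes from $\frac{a}{1+a^2} \ge \frac{1}{2a}$ for $a \ge 1$; this is worth stating explicitly.
\end{itemize}
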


\begin{lemma}[Expected maximum of Rademacher sums~\cite{cesa2006prediction}]
\label{lem:rademacher_max}
Let $\{z_{\t,i}\}_{\t \in [T], i \in [k-1]}$ be i.i.d.\ Rademacher random variables
(i.e., $z_{\t,i} \in \{-1, +1\}$ each with probability $1/2$). Then for
sufficiently large $T$ and $k$,
\begin{align*} %\label{eq:rademacher_max}
\E\Bigl[\max_{i \in [k-1]} \sum_{\t=1}^T z_{\t,i}\Bigr] = \Omega(\sqrt{T \log k}).
\end{align*}
\end{lemma}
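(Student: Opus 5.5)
We derive \Cref{lem:rademacher_max} by chaining \Cref{lem:cbl_rademacher_to_gaussian} and \Cref{lem:gaussian_extremal_asymptotic}. Set $N = k-1$ and, for each $i \in [N]$, let $S_{T,i} := \sum_{\t=1}^T z_{\t,i}$. The target is
\[
\E\bigl[\max_{i\in[N]} S_{T,i}\bigr] \ge c\sqrt{T\log k}
\]
for an absolute constant $c>0$, whenever $T$ and $k$ are large enough.

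\textbf{Step 1: fix $N$ large.} By \Cref{lem:gaussian_extremal_asymptotic}, there is an $N_0$ such that for all $N \ge N_0$,
\[
\E\bigl[\max_{i\in[N]} G_i\bigr] \ge \tfrac{3}{4}\sqrt{2\ln N}.
\]
We take $k$ large enough that $N = k-1 \ge N_0$. We also take $N \ge 2$, so that $\ln N \ge \tfrac14 \ln k$ (for instance $\ln(k-1) \ge \tfrac12\ln k$ once $k\ge 4$).

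\textbf{Step 2: pass from Rademacher sums to Gaussians at fixed $N$.} With $N$ fixed, \Cref{lem:cbl_rademacher_to_gaussian} gives
\[
\E\bigl[\max_{i\in[N]} S_{T,i}/\sqrt{T}\bigr] \to \E\bigl[\max_i G_i\bigr] \quad \text{as } T\to\infty.
\]
Hence there is a $T_0(N)$ such that for $T \ge T_0(N)$,
\[
\E\bigl[\max_i S_{T,i}\bigr] \ge \tfrac12\sqrt{2\ln N}\,\sqrt{T} \ge \tfrac{1}{2\sqrt2}\sqrt{T\ln k} = \Omega(\sqrt{T\log k}).
\]

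\textbf{Step 3: interpret the result.} The main obstacle is that \Cref{lem:cbl_rademacher_to_gaussian} is only an asymptotic statement in $T$ for each fixed $N$. The threshold $T_0$ therefore depends on $k$, so "sufficiently large $T$ and $k$" has to be read as "$k \ge k_0$ and $T \ge T_0(k)$". This is exactly the order of quantifiers in the statement, so the two-step limit argument suffices and no uniform rate is needed.

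If one wanted uniformity in $T$, the route I would try first is a direct one. For $T \gtrsim \log N$, use the binomial anti-concentration bound $\Prob(S_{T,i} \ge a\sqrt{T\ln N}) \ge N^{-1/2}$ for a small constant $a$, which follows from standard lower bounds on binomial tails. Independence across $i$ then gives
\[
\Prob\bigl(\max_i S_{T,i} < a\sqrt{T\ln N}\bigr) \le (1-N^{-1/2})^N \le e^{-\sqrt N}.
\]
The negative part is controlled by $\E[\max_i S_{T,i}] \ge \E[S_{T,1}\wedge 0]$, together with the fact that $\max_i S_{T,i} \ge S_{T,1}$ and $\E|S_{T,1}| \le \sqrt T$. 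Combining these yields $\E[\max_i S_{T,i}] \ge a\sqrt{T\ln N}(1-e^{-\sqrt N}) - \sqrt{T}\,e^{-\sqrt N/2}$, using Cauchy--Schwarz on the failure event. This is $\Omega(\sqrt{T\log k})$ for large $k$.

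For the paper's purposes, however, the limit argument of Steps 1--2 is the intended proof.
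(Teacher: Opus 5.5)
Your proof is correct and follows the paper's own argument. The paper likewise chains the limit of \Cref{lem:cbl_rademacher_to_gaussian} (fixed number of walks, $T\to\infty$) with the Gaussian extremal asymptotic of \Cref{lem:gaussian_extremal_asymptotic} ($k\to\infty$) and multiplies by $\sqrt{T}$; your version adds the explicit quantifier order ($k\ge k_0$, then $T\ge T_0(k)$) and a uniform constant, which the paper's ``for sufficiently large $T$ and $k$'' leaves implicit, and your anti-concentration aside is an optional strengthening not needed here.
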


\begin{proof}
First, by the central limit theorem (\Cref{lem:cbl_rademacher_to_gaussian}), $\sum_{\t=1}^T z_{\t,i} / \sqrt{T}$ converges in distribution to a standard normal $G_i$ as $T \to \infty$:
\begin{align*}
  \lim_{T \to \infty} \E\Bigl[\max_{i \in [k-1]} \frac{1}{\sqrt{T}}\sum_{\t=1}^T z_{\t,i}\Bigr]
  = \E\Bigl[\max_{i \in [k-1]} G_i\Bigr],
\end{align*}
where $G_1,\ldots,G_{k-1}$ are i.i.d.\ standard normal random variables. Second, the expected maximum of $k-1$ i.i.d.\ standard normal random variables satisfies (\Cref{lem:gaussian_extremal_asymptotic}):
\begin{align*}
\lim_{k \to \infty} \frac{\E\bigl[\max_{i \in [k-1]} G_i\bigr]}{\sqrt{2\ln(k-1)}} = 1.
\end{align*}
Combining them and multiplying by $\sqrt{T}$, we have that for sufficiently large $T$ and $k$:
\begin{align*}
\E\bigl[\max_{i \in [k-1]} \sum_{t=1}^T z_{t,i}\bigr] = \Omega(\sqrt{T \log k}).
\end{align*}
\end{proof}

\cut{
By the standard results that lower bounds the expected maximum of the sum of i.i.d. Rademacher random variables~\cite{cesa2006prediction} (See \Cref{lem:cbl_rademacher_to_gaussian} and \Cref{lem:gaussian_extremal_asymptotic} in Appendix~\ref{app:dynamic_absolute_loss_lower}), we have the following convergence results:
\begin{align*}
  \lim_{T \to \infty} \E\bigl[\max_{i \in [k-1]} \frac{\sum_t z_{t,i}}{\sqrt{T}}\bigr] = \E\bigl[\max_{i \in [k-1]} G_i\bigr].
\end{align*}
where $G_1,\ldots,G_{k-1}$ are i.i.d.\ standard normal random variables. Moreover,
\begin{align*}
  \lim_{k \to \infty} \frac{\E\bigl[\max_{i \in [k-1]} G_i\bigr]}{\sqrt{2\ln(k-1)}} = 1.
\end{align*}

Combining them and multiplying by $\sqrt{T}$, we have that for sufficiently large $T$ and $k$:
\begin{align*}
  \E\bigl[\max_{i \in [k-1]} \sum_{t=1}^T z_{t,i}\bigr] = \Omega(\sqrt{T \log k}).
\end{align*}
Substituting back into \Cref{eq:expected_regret_to_independent_rademacher_sums}, we obtain the lower bound on the expected regret:
\begin{align*}
  \mathbb{E} \left[ \regret_T(\mathcal{A}) \right] = \Omega(\sqrt{T \log k}).
\end{align*}
}

\subsection{Upper Bound: Follow-the-Regularized-Leader (FTRL) Algorithm}\label{subsec:dynamic_absolute_loss_upper_bound}
%We obtain the upper bounds for both point and linear queries through a single algorithmic template: Follow-the-Regularized-Leader (FTRL)~\cite{shalev2012online}, with two different regularization functions. 
The choice of regularization function is dictated by the geometry of the query vectors. Point queries are standard basis vectors and satisfy the strong condition $\|\queryvec_t\|_2 = 1$, which lets us use a squared $\ell_2$ regularization function and obtain $O(\sqrt T)$ regret independent of $k$. Linear queries only satisfy the weaker $\|\queryvec_t\|_\infty \le 1$ (with $\|\queryvec_t\|_2$ as large as $\sqrt k$), for which the matching choice is the negative entropy regularization function, yielding $O(\sqrt{T\log k})$ regret.

Both algorithms follow the generic FTRL procedure as detailed in \Cref{alg:ftrl-abs}: they maintain an estimated database vector $\predw_\t$ in each round, by solving an optimization problem over the cumulative loss and a regularization term. The regularization function $R(\w)$ prevents $\predw_\t$ from changing too drastically between rounds and ensures stability. Then the algorithm outputs the estimated selectivity $\predselect_t =\langle\queryvec_t,\predw_t\rangle$ based on the estimated database vector. 
%\red{add the regularization function for each query class here?}

\begin{algorithm}[!htbp]
  \caption{FTRL for Dynamic Databases under Absolute Loss}
  \label{alg:ftrl-abs}
  \begin{algorithmic}[1]
  \Require Learning rate $\eta > 0$, regularization function $R(\w)$: $\frac{1}{\eta} \sum_{i=1}^k w_i \log w_i$ for linear queries or $\frac{1}{\eta} \sum_{i=1}^k w_i^2$ for point queries
  \State Initialize the estimated database vector: $\widehat{\w}_1 \leftarrow (1/k,\ldots,1/k)$ (uniform distribution)
  \For{$t=1,2,\ldots,T$}
      \State \textbf{Predict selectivity:} receive query $\queryvec_t$
             and output: $\predselect_t \leftarrow \langle\queryvec_t,\widehat{\w}_t\rangle$
      \State \textbf{Compute loss:} receive true selectivity $\trueselect_t$ and incur
             $\loss_\t=|\trueselect_t-\predselect_t|$
      \State \textbf{Update the estimated database vector:} $\widehat{\w}_{t+1}\leftarrow
             \arg\min_{\w\in\Delta_k}\Bigl[\sum_{\tau=1}^{t}\loss_\tau(\w)+R(\w)\Bigr]$, where $\loss_\tau(\w) = |\trueselect_\tau - \langle \queryvec_\tau, \w \rangle|$ denotes the absolute loss incurred by $\w$ on round $\tau$.
  \EndFor
  \end{algorithmic}
  \end{algorithm}

% We now present an algorithm that achieves $O(\sqrt{T \log k})$ regret, nearly matching the lower bound up to a logarithmic factor in $k$. We use the standard Follow-the-Regularized-Leader (FTRL) algorithm~\cite{shalev2012online} in online setting with negative entropy regularization. The standard results~\cite{shalev2012online} of FTRL algorithm gives a upper bound of the regret relative to an arbitrary predictor $\mathbf{u}$, presented in the following \Cref{thm:general_ftrl}.

The standard result of FTRL~\cite{shalev2012online} gives an upper bound on the regret relative to an arbitrary comparator $\mathbf{u}$, restated below:

\begin{theorem}[General FTRL Regret Bound \cite{shalev2012online}]
\label{thm:general_ftrl}
Consider a sequence of convex loss functions $\ell_1, \ldots, \ell_T$, such that each $\ell_t$ is $L_t$-Lipschitz with respect to some norm $\|\cdot\|$. Let $L$ satisfy $\tfrac{1}{T}\sum_{t=1}^{T} L_t^{2}\le L^{2}$. Let the regularization function $R$ be a $\sigma$-strongly convex function with respect to the same norm $\|\cdot\|$ over a convex set $\mathcal{K}$. $\regret_T(\mathbf{u}) = \sum_{t=1}^T \loss_t(\predw_t) - \sum_{t=1}^T \loss_t(\mathbf{u})$ is the regret relative to the comparator $\mathbf{u}$. %of the FTRL algorithm 
Then, for all $\mathbf{u} \in \mathcal{K}$,
\begin{align*}
    \regret_T(\mathbf{u}) = \sum_{t=1}^T \loss_t(\predw_t) - \sum_{t=1}^T \loss_t(\mathbf{u}) \le R(\mathbf{u}) - \min_{\mathbf{v} \in \mathcal{K}} R(\mathbf{v}) + \frac{\T L^2}{\sigma},
\end{align*}
\end{theorem}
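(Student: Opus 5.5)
This is the standard FTRL bound (Shalev-Shwartz, Theorem 2.11 / Corollary). I would prove it through the usual two steps: the ``be-the-leader'' lemma, followed by a stability bound from strong convexity.

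\textbf{Step 1: reduce to the linearized problem.} For each $t$, pick a subgradient $\mathbf{z}_t \in \partial \loss_t(\predw_t)$. Convexity gives $\loss_t(\predw_t) - \loss_t(\mathbf{u}) \le \inner{\mathbf{z}_t}{\predw_t - \mathbf{u}}$. The $L_t$-Lipschitz property gives $\|\mathbf{z}_t\|_* \le L_t$ in the dual norm.

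\textbf{Step 2: Follow-the-Leader/Be-the-Leader lemma.} Treat $R$ as a round-$0$ loss. By induction on $T$, show that
\[
\sum_{t=1}^T \bigl(\loss_t(\predw_t) - \loss_t(\mathbf{u})\bigr) \le R(\mathbf{u}) - R(\predw_1) + \sum_{t=1}^T \bigl(\loss_t(\predw_t) - \loss_t(\predw_{t+1})\bigr).
\]
The induction uses only that $\predw_{t+1}$ minimizes $\sum_{\tau\le t}\loss_\tau + R$ over $\mathcal{K}$. Since $\predw_1$ minimizes $R$, we have $R(\predw_1) = \min_{\mathbf{v}\in\mathcal{K}} R(\mathbf{v})$.

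\textbf{Step 3: stability.} This is the main obstacle. We need $\loss_t(\predw_t) - \loss_t(\predw_{t+1}) \le L_t\|\predw_t - \predw_{t+1}\|$ together with a bound on the movement $\|\predw_t - \predw_{t+1}\|$. Let $F_t = R + \sum_{\tau<t}\loss_\tau$. Then $F_t$ is $\sigma$-strongly convex, with minimizer $\predw_t$. The function $F_{t+1} = F_t + \loss_t$ is also $\sigma$-strongly convex, with minimizer $\predw_{t+1}$. Strong convexity at the constrained minimizers gives
\[
F_t(\predw_{t+1}) - F_t(\predw_t) \ge \tfrac{\sigma}{2}\|\predw_{t+1}-\predw_t\|^2
\]
and the symmetric inequality for $F_{t+1}$. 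Adding the two yields
\[
\loss_t(\predw_t) - \loss_t(\predw_{t+1}) \ge \sigma\|\predw_t-\predw_{t+1}\|^2.
\]
Combining this with the Lipschitz bound gives $\|\predw_t - \predw_{t+1}\| \le L_t/\sigma$, and hence $\loss_t(\predw_t) - \loss_t(\predw_{t+1}) \le L_t^2/\sigma$.

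Summing the Step 3 bound over $t$ and using $\sum_t L_t^2 \le T L^2$ in Step 2 gives the stated bound $R(\mathbf{u}) - \min_{\mathbf{v}} R(\mathbf{v}) + TL^2/\sigma$.

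One caveat concerns Algorithm~\ref{alg:ftrl-abs}, which initializes $\predw_1$ as the uniform vector. For both the negative-entropy and the squared-$\ell_2$ regularizers, the uniform vector is exactly $\arg\min_{\Delta_k} R$, so this matches the theorem. The minimizers are guaranteed to exist because $\mathcal{K}$ is compact and each $F_t$ is convex and continuous.
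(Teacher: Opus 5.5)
Your proof is correct and follows the standard argument from the cited reference, which the paper invokes without reproducing. That argument has two parts: the FTL--BTL lemma with $R$ treated as a round-$0$ loss, and the stability bound $\ell_t(\predw_t)-\ell_t(\predw_{t+1})\le L_t^2/\sigma$, obtained by adding the quadratic-growth inequalities of the $\sigma$-strongly convex $F_t$ and $F_{t+1}$ at their constrained minimizers and then using Lipschitzness. Your Step~1 linearization is never used afterwards and can be dropped, and your check that the uniform initialization equals $\arg\min_{\Delta_k} R$ for both regularizers is what is needed to match Algorithm~\ref{alg:ftrl-abs}.
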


To derive the specific regret bound for each query class, we establish three key properties separately in \Cref{subsubsec:point_query_upper} for point queries and \Cref{subsubsec:linear_query_upper} for linear queries: (1) the strong convexity of the regularization function, (2) the Lipschitz continuity of the loss function, and (3) the diameter of the regularization function over the simplex.

\subsubsection{Point Queries}\label{subsubsec:point_query_upper}
For point queries we instantiate \Cref{alg:ftrl-abs} with the squared
$\ell_2$ regularizer
\[
   R(\w) \;=\; \tfrac{1}{\eta}\,\|\w\|_2^{2}
        \;=\; \tfrac{1}{\eta}\sum_{i=1}^{k} w_i^{2},
\]
where $\eta > 0$ is the learning rate.
The following three lemmas instantiate \Cref{thm:general_ftrl}, all
measured in the $\ell_2$ norm.

\begin{lemma}\label{lem:l2-strong-convex}
  $R(\w)=\tfrac{1}{\eta}\|\w\|_2^{2}$ is $\tfrac{2}{\eta}$-strongly
  convex with respect to the $\ell_2$ norm on $\Delta_k$.
\end{lemma}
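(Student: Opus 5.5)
The plan is to verify the strong-convexity inequality directly from the definition. Recall that a function $R$ is $\sigma$-strongly convex with respect to a norm $\|\cdot\|$ on a convex set $\mathcal{K}$ if, for all $\w,\mathbf{u}\in\mathcal{K}$ and every subgradient $\nabla R(\w)$,
\[
R(\mathbf{u}) \ge R(\w) + \langle \nabla R(\w), \mathbf{u}-\w\rangle + \tfrac{\sigma}{2}\|\mathbf{u}-\w\|^2 .
\]
Since $R(\w)=\tfrac{1}{\eta}\|\w\|_2^2$ is differentiable on all of $\R^\k$, I will compute $\nabla R(\w)=\tfrac{2}{\eta}\w$ and expand both sides. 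This uses the identity
\[
\|\mathbf{u}\|_2^2 - \|\w\|_2^2 - \langle 2\w, \mathbf{u}-\w\rangle = \|\mathbf{u}-\w\|_2^2,
\]
which follows by expanding $\|\mathbf{u}-\w\|_2^2 = \|\mathbf{u}\|_2^2 - 2\langle \w,\mathbf{u}\rangle + \|\w\|_2^2$.

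Multiplying this identity by $\tfrac{1}{\eta}$ gives
\[
R(\mathbf{u}) - R(\w) - \langle \nabla R(\w), \mathbf{u}-\w\rangle = \tfrac{1}{\eta}\|\mathbf{u}-\w\|_2^2 = \tfrac{1}{2}\cdot\tfrac{2}{\eta}\|\mathbf{u}-\w\|_2^2 .
\]
So the inequality holds with equality for $\sigma = 2/\eta$, for all $\w,\mathbf{u}\in\R^\k$. In particular it holds on the convex subset $\Delta_\k$.

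As a cross-check, the Hessian is $\nabla^2 R = \tfrac{2}{\eta} I$. This gives $\mathbf{x}^\top \nabla^2 R\, \mathbf{x} = \tfrac{2}{\eta}\|\mathbf{x}\|_2^2$ for every direction $\mathbf{x}$, which is the standard second-order characterization of $\tfrac{2}{\eta}$-strong convexity with respect to $\ell_2$.

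There is no real obstacle here. The only point requiring care is matching the convention used in \Cref{thm:general_ftrl}, namely the factor $\tfrac{\sigma}{2}$ in the definition of strong convexity. Under that convention the constant is exactly $2/\eta$, not $1/\eta$.
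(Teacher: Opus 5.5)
Your proof is correct and follows essentially the same route as the paper. The paper computes the Bregman divergence of $f(\w)=\|\w\|_2^2$, finds it equals $\|\mathbf{p}-\mathbf{q}\|_2^2$ (so $f$ is $2$-strongly convex), and then scales by $\tfrac{1}{\eta}$; you simply fold the $\tfrac{1}{\eta}$ factor in from the start.
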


\begin{proof}
  For a differentiable convex $f$, $f$ is $\sigma$-strongly convex w.r.t.\ a norm $\|\cdot\|_2$ iff its Bregman divergence satisfies, for all $\mathbf{p},\mathbf{q}$,
  \[
    D_f(\mathbf{p},\mathbf{q}) := f(\mathbf{p}) - f(\mathbf{q}) - \langle \nabla f(\mathbf{q}), \mathbf{p}-\mathbf{q} \rangle \;\ge\; \frac{\sigma}{2}\,\|\mathbf{p}-\mathbf{q}\|_2^{2}.
  \]
  For $f(\w) = \|\w\|_2^2 = \sum_{i=1}^\k w_i^2$, we have $\nabla f(\w) = 2\w$. Thus for any $\mathbf{p},\mathbf{q}\in\Delta_\k$:
  \begin{align*}
    D_{f}(\mathbf{p},\mathbf{q})
    &= \|\mathbf{p}\|_2^2 - \|\mathbf{q}\|_2^2 - \langle 2\mathbf{q},\, \mathbf{p}-\mathbf{q} \rangle \\
    &= \sum_i p_i^2 + \sum_i q_i^2 - 2\sum_i p_i q_i \\
    &= \|\mathbf{p}-\mathbf{q}\|_2^2 \;\ge\; \frac{2}{2}\,\|\mathbf{p}-\mathbf{q}\|_2^2,
  \end{align*}
  which is exactly the $2$-strong convexity of $f$ w.r.t.\ $\|\cdot\|_2$.

  Finally, strong convexity scales linearly with positive constants. For $g = \alpha \cdot f$ with $\alpha > 0$, if $f$ is $\sigma$-strongly convex, then:
  \begin{align*}
    D_{g}(\mathbf{p},\mathbf{q})
    &= \alpha f(\mathbf{p}) - \alpha f(\mathbf{q}) - \langle \alpha\nabla f(\mathbf{q}), \mathbf{p}-\mathbf{q} \rangle \\
    &\geq \alpha \frac{\sigma}{2}\,\|\mathbf{p}-\mathbf{q}\|^2,
  \end{align*}
  Thus $g$ is $\alpha\sigma$-strongly convex. Applying this to our case, we get that $R(\w) = \tfrac{1}{\eta} \cdot \|\w\|_2^2$ is $\tfrac{2}{\eta}$-strongly convex w.r.t.\ $\|\cdot\|_2$.
\end{proof}

\begin{lemma}\label{lem:l2-lipschitz}
For any $\trueselect_\t \in [0,1]$ and any point query $\queryvec_\t = \evec_{i_\t}$, the loss function $\loss_\t(\w) = |\trueselect_\t - \langle \queryvec_\t, \w \rangle|$ is $1$-Lipschitz with respect to the $\ell_2$ norm on $\Delta_\k$.
\end{lemma}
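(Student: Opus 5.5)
The plan is to bound the difference of losses directly with the reverse triangle inequality and then reduce to the geometry of the point query. Fix any two vectors $\w, \w' \in \Delta_\k$. Since $\queryvec_\t = \evec_{i_\t}$, we have $\langle \queryvec_\t, \w \rangle = w_{i_\t}$, so $\loss_\t(\w) = |\trueselect_\t - w_{i_\t}|$. The reverse triangle inequality $\bigl||a|-|b|\bigr| \le |a-b|$ will be applied with $a = \trueselect_\t - w_{i_\t}$ and $b = \trueselect_\t - w'_{i_\t}$. This gives
\[
|\loss_\t(\w) - \loss_\t(\w')| \le |w_{i_\t} - w'_{i_\t}| = |\langle \queryvec_\t, \w - \w' \rangle|.
\]

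Next I would bound the right-hand side by the $\ell_2$ norm. I would use Cauchy--Schwarz together with $\|\queryvec_\t\|_2 = \|\evec_{i_\t}\|_2 = 1$. Alternatively, one can simply note that a single coordinate satisfies $|w_{i_\t} - w'_{i_\t}| \le \bigl(\sum_j (w_j - w'_j)^2\bigr)^{1/2} = \|\w - \w'\|_2$. Either way we get $|\loss_\t(\w) - \loss_\t(\w')| \le \|\w - \w'\|_2$, which is exactly $1$-Lipschitzness with respect to $\|\cdot\|_2$ on $\Delta_\k$.

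No step here is really an obstacle. The only point to keep straight is that the Lipschitz constant comes from $\|\queryvec_\t\|_2 = 1$. This is exactly where point queries differ from general linear queries, for which $\|\queryvec_\t\|_2$ can be as large as $\sqrt{\k}$. It is also why the $\ell_2$ regularizer of \Cref{lem:l2-strong-convex} gives a $\k$-independent bound once the result is plugged into \Cref{thm:general_ftrl}.

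Should a gradient-based form be needed for \Cref{thm:general_ftrl}, I would add a remark. The loss is convex and nonsmooth, and every subgradient has the form $s\,\evec_{i_\t}$ with $s \in [-1,1]$. Such a subgradient has dual ($\ell_2$) norm at most $1$, which is the equivalent statement.
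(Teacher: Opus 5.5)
Your proof is correct and matches the paper's argument. The paper also uses the (reverse) triangle inequality to reduce to $|\langle \mathbf{v}_t, \mathbf{w}-\mathbf{w}'\rangle|$, then Cauchy--Schwarz with $\|\mathbf{v}_t\|_2 = \|\mathbf{e}_{i_t}\|_2 = 1$.
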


\begin{proof}
For any $\w, \w' \in \Delta_\k$:
\begin{align*}
|\loss_\t(\w) - \loss_\t(\w')| &= \left||\trueselect_\t - \langle \queryvec_\t, \w \rangle| - |\trueselect_\t - \langle \queryvec_\t, \w' \rangle|\right|\\
&\leq |\langle \queryvec_\t, \w - \w' \rangle| \quad \text{(triangle inequality)}\\
&\leq \|\queryvec_\t\|_2 \cdot \|\w - \w'\|_2 \quad \text{(Cauchy--Schwarz inequality)}.
\end{align*}
For point queries, $\queryvec_\t = \evec_{i_\t}$, so $\|\queryvec_\t\|_2 = 1$. Therefore:
\begin{align*}
|\loss_\t(\w) - \loss_\t(\w')| \leq \|\w - \w'\|_2.
\end{align*}
This establishes that the loss is $1$-Lipschitz with respect to the $\ell_2$ norm.
\end{proof}

\begin{lemma}\label{lem:l2-diameter}
  For $R(\w)=\tfrac{1}{\eta}\|\w\|_2^{2}$ on $\Delta_k$ and any
  $\mathbf{u}\in\Delta_k$,
  $R(\mathbf{u})-\min_{\w\in\Delta_k} R(\w)\le \tfrac{1}{\eta}$.
\end{lemma}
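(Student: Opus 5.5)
The plan is to bound the two terms $R(\mathbf{u})$ and $\min_{\w\in\Delta_k} R(\w)$ separately, using only that every point of the simplex has nonnegative coordinates summing to one.

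\emph{Upper bound on $R(\mathbf{u})$.} For any $\mathbf{u}\in\Delta_k$ we have $u_i\in[0,1]$, hence $u_i^2\le u_i$. Summing over $i$ gives
\[
\|\mathbf{u}\|_2^2=\sum_{i=1}^k u_i^2\le\sum_{i=1}^k u_i=1,
\]
so $R(\mathbf{u})\le \tfrac{1}{\eta}$. This is the same inequality already used in the proof of \Cref{prop:static-point}.

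\emph{Lower bound on the minimum.} Since $R(\w)=\tfrac1\eta\sum_i w_i^2\ge 0$ for every $\w$, we get $\min_{\w\in\Delta_k}R(\w)\ge 0$. Subtracting then gives $R(\mathbf{u})-\min_{\w}R(\w)\le \tfrac1\eta - 0=\tfrac1\eta$, which is the claimed bound.

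A sharper constant is available if needed. By Cauchy--Schwarz, $1=\left(\sum_i w_i\right)^2\le k\sum_i w_i^2$, so the minimum equals $\tfrac{1}{\eta k}$ and is attained at the uniform vector. This would give the bound $\tfrac1\eta(1-\tfrac1k)$, but the crude bound $\min R\ge 0$ already suffices for the statement.

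There is no real obstacle here; the only point requiring care is that $u_i^2\le u_i$ relies on $0\le u_i\le 1$, which holds on the simplex.
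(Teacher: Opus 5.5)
Your proof is correct and matches the paper's argument. The paper likewise bounds $\|\mathbf{u}\|_2^2 \le \max_i u_i \sum_i u_i \le 1$. The only difference is that it computes the minimum exactly as $\tfrac{1}{\eta k}$ via Cauchy--Schwarz, which is your optional remark, giving $\tfrac{1}{\eta}(1-\tfrac{1}{k})$; your main line instead uses the cruder $\min R \ge 0$, which already suffices for the stated bound.
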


\begin{proof}
We first determine the range of $\|\w\|_2^2$ over $\Delta_\k$.
For the upper bound, since $w_i \in [0,1]$ and $\sum_i w_i = 1$ for any $\w \in \Delta_\k$:
\begin{align*}
\|\w\|_2^2 = \sum_{i=1}^{\k} w_i^2 \leq \max_{i} w_i \cdot \sum_{i=1}^{\k} w_i \leq 1.
\end{align*}
The maximum $\|\w\|_2^2 = 1$ is achieved at the vertices $\evec_i$ of the simplex.

For the lower bound, by the Cauchy--Schwarz inequality, we have
$\langle\mathbf{1}, \mathbf{w}\rangle^2 \leq \|\mathbf{1}\|_2^2 \cdot \|\mathbf{w}\|_2^2$.

The LHS equals to $1$, and the RHS equals to $\k \|\w\|_2^2$, so we have
$1 \leq \k \|\w\|_2^2$. Therefore:
$$\|\w\|_2^2 \geq 1/\k.$$ The minimum $\|\w\|_2^2 = 1/\k$ is achieved at the uniform distribution $\w = (1/\k, \ldots, 1/\k)$.

Therefore:
\begin{align*}
R(\mathbf{u}) - \min_{\w \in \Delta_\k} R(\w) = \frac{1}{\eta}\left(\|\mathbf{u}\|_2^2 - \min_{\w \in \Delta_\k} \|\w\|_2^2\right) =
\frac{1}{\eta}\left(\|\mathbf{u}\|_2^2 - \frac{1}{\k}\right) \leq \frac{1}{\eta}\left(1 - \frac{1}{\k}\right) \leq \frac{1}{\eta}.
\end{align*}
\end{proof}

\begin{corollary}\label{cor:point-query-upper}
For dynamic databases under absolute loss, using FTRL (\Cref{alg:ftrl-abs}) with squared $\ell_2$ regularization function $R(\w)=\tfrac{1}{\eta}\|\w\|_2^{2}$ and learning rate $\eta=\sqrt{2/T}$, the regret for point queries achieves:
\begin{equation}
\regret_{\T} = \sum_{\t=1}^{\T} \loss_\t(\predw_\t) - \min_{\w \in \Delta_\k} \sum_{\t=1}^{\T} \loss_\t(\w) \leq \sqrt{2\T} = O(\sqrt{\T}).
\end{equation}
\end{corollary}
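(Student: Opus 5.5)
The plan is to instantiate \Cref{thm:general_ftrl} with the three lemmas just established and then tune $\eta$. First, the per-round loss $\loss_\t(\w) = |\trueselect_\t - \langle \queryvec_\t, \w\rangle|$ is convex in $\w$: it is the absolute value of an affine function. By \Cref{lem:l2-lipschitz} it is $1$-Lipschitz with respect to $\|\cdot\|_2$ on $\Delta_\k$. We may therefore take $L_\t = 1$ for every $\t$, and $L = 1$ satisfies $\tfrac{1}{\T}\sum_\t L_\t^2 \le L^2$.

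Second, by \Cref{lem:l2-strong-convex} the regularizer $R(\w) = \tfrac{1}{\eta}\|\w\|_2^2$ is $\sigma$-strongly convex with respect to the same norm on the convex set $\mathcal{K} = \Delta_\k$, with $\sigma = 2/\eta$. By \Cref{lem:l2-diameter}, $R(\mathbf{u}) - \min_{\w \in \Delta_\k} R(\w) \le 1/\eta$ for every comparator $\mathbf{u} \in \Delta_\k$. Plugging these into \Cref{thm:general_ftrl} gives, for all $\mathbf{u} \in \Delta_\k$,
\[
\sum_{\t=1}^{\T} \loss_\t(\predw_\t) - \sum_{\t=1}^{\T} \loss_\t(\mathbf{u}) \le \frac{1}{\eta} + \frac{\T \eta}{2}.
\]
Taking $\mathbf{u}$ to be the minimizer of $\sum_\t \loss_\t(\w)$ over the compact set $\Delta_\k$ yields the same bound for $\regret_\T$.

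Finally, I choose $\eta = \sqrt{2/\T}$, which balances the two terms. Then $1/\eta = \sqrt{\T/2}$ and $\T\eta/2 = \sqrt{\T/2}$, so the total is $2\sqrt{\T/2} = \sqrt{2\T}$, as claimed. One small point to check is that the algorithm's first iterate is uniform, which equals $\arg\min_{\w \in \Delta_\k} R(\w)$, consistent with FTRL with an empty loss sum.

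There is no real obstacle here. The only care needed is that all three ingredients are stated with respect to the same ($\ell_2$) norm, and that the constant in \Cref{thm:general_ftrl} is used as stated ($\T L^2/\sigma$) rather than with an extra factor of $1/2$.
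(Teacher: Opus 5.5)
Your proposal is correct and follows the paper's proof: substitute $\sigma = 2/\eta$, $L = 1$, and the $1/\eta$ diameter bound into the general FTRL theorem to get $1/\eta + \eta T/2$, then set $\eta = \sqrt{2/T}$ to obtain $\sqrt{2T}$. Your additional checks are sound details that the paper leaves implicit: convexity of the absolute loss, attainment of the minimum over $\Delta_\k$, and that the uniform initialization is the FTRL iterate with an empty loss sum.
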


\begin{proof}
  Substituting $\sigma=2/\eta$ (\Cref{lem:l2-strong-convex}), $L=1$
  (\Cref{lem:l2-lipschitz}), and the diameter bound
  (\Cref{lem:l2-diameter}) at most $\tfrac{1}{\eta}$ into \Cref{thm:general_ftrl}: $\regret_T\le \tfrac{1}{\eta}+\tfrac{\eta T}{2}$. Optimizing over
  $\eta$ yields $\eta=\sqrt{2/T}$ and $\regret_T\le\sqrt{2T}$.
\end{proof}

\subsubsection{Linear Queries}
\label{subsubsec:linear_query_upper}
For linear queries we instantiate \Cref{alg:ftrl-abs} with the negative
entropy regularization function: $R(\w) = \tfrac{1}{\eta}\sum_{i=1}^{k} w_i\log w_i$.
The following three lemmas instantiate \Cref{thm:general_ftrl}, all
measured with respect to the $\ell_1$ norm.

\begin{lemma}\label{lem:entropy-strong-convex}
    The negative entropy regularizer $R(\w) = \frac{1}{\eta} \sum_{i=1}^k w_i \log w_i$ is $\frac{1}{\eta}$-strongly convex with respect to the $\ell_1$ norm on $\Delta_k$.
\end{lemma}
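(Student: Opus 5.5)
To prove \Cref{lem:entropy-strong-convex}, we will use the same Bregman-divergence characterization of strong convexity as in the proof of \Cref{lem:l2-strong-convex}. It suffices to show that the unscaled negative entropy $f(\w)=\sum_{i=1}^k w_i\log w_i$ is $1$-strongly convex with respect to $\|\cdot\|_1$ on $\Delta_k$. The scaling argument already given in that proof then transfers this to $R=\tfrac{1}{\eta}f$ with constant $\tfrac{1}{\eta}$.

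\textbf{Step 1: compute the Bregman divergence.} Since $\nabla f(\mathbf{q})_i=\log q_i+1$, for $\mathbf{p},\mathbf{q}\in\Delta_k$ (with $\mathbf{q}$ in the relative interior) we get
\begin{align*}
D_f(\mathbf{p},\mathbf{q})=\sum_i p_i\log p_i-\sum_i q_i\log q_i-\sum_i(\log q_i+1)(p_i-q_i)=\sum_i p_i\log\frac{p_i}{q_i}-\sum_i(p_i-q_i).
\end{align*}
Because $\sum_i p_i=\sum_i q_i=1$, the last sum vanishes. Hence $D_f(\mathbf{p},\mathbf{q})=D_{\mathrm{KL}}(\mathbf{p}\|\mathbf{q})$.

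\textbf{Step 2: apply Pinsker.} Pinsker's inequality (\Cref{lemma:pinsker_inequality}) gives $\|\mathbf{p}-\mathbf{q}\|_1^2\le 2D_{\mathrm{KL}}(\mathbf{p}\|\mathbf{q})$. Therefore
\[
D_f(\mathbf{p},\mathbf{q})\ge\tfrac{1}{2}\|\mathbf{p}-\mathbf{q}\|_1^2,
\]
which is exactly $1$-strong convexity with respect to $\ell_1$.

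\textbf{Step 3: scale.} Multiplying by $\tfrac{1}{\eta}$ gives $D_R(\mathbf{p},\mathbf{q})\ge\tfrac{1}{2\eta}\|\mathbf{p}-\mathbf{q}\|_1^2$, so $R$ is $\tfrac{1}{\eta}$-strongly convex.

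\textbf{Main obstacle.} The only delicate point is the boundary of the simplex, where $\nabla f$ is undefined ($\log 0$). We would handle it in one of two ways. The first is to prove the inequality for $\mathbf{q}$ in the relative interior and extend by continuity. The cleaner alternative is the gradient-free definition $f(\lambda\mathbf{p}+(1-\lambda)\mathbf{q})\le\lambda f(\mathbf{p})+(1-\lambda)f(\mathbf{q})-\tfrac{\sigma}{2}\lambda(1-\lambda)\|\mathbf{p}-\mathbf{q}\|_1^2$, which follows from the interior case by continuity of $f$ on $\Delta_k$ under the convention $0\log 0=0$. Since the FTRL iterates with an entropy regularizer lie in the interior anyway, this boundary issue does not affect how the lemma is used in \Cref{thm:general_ftrl}.
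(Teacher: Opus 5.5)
Your proof is correct and follows essentially the same route as the paper: compute the Bregman divergence of $f(\w)=\sum_i w_i\log w_i$ on $\Delta_k$ to get $D_{\mathrm{KL}}(\mathbf{p}\|\mathbf{q})$, apply Pinsker's inequality to obtain $1$-strong convexity in $\ell_1$, and scale by $\tfrac{1}{\eta}$. Your remark on the boundary of the simplex, where $\nabla f$ is undefined, addresses a point the paper does not mention, and your continuity argument settles it.
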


\begin{proof}
    For a differentiable convex $f$, $f$ is $\sigma$-strongly convex w.r.t. a norm $\|\cdot\|$ iff its Bregman divergence satisfies, for all $\mathbf{p},\mathbf{q}$,
    \[
      D_f(\mathbf{p},\mathbf{q}) := f(\mathbf{p}) - f(\mathbf{q}) - \langle \nabla f(\mathbf{q}), \mathbf{p}-\mathbf{q} \rangle \;\ge\; \frac{\sigma}{2}\,\|\mathbf{p}-\mathbf{q}\|^{2}.
    \]
    For the negative entropy function $f(\mathbf{w}) = \sum_{i=1}^k w_i \log w_i$, we have $\nabla f(\mathbf{w}) = (1+\log w_i)_{i=1}^{k}$. Thus for any $\mathbf{p},\mathbf{q}\in\Delta_k$, using $\sum_i p_i = \sum_i q_i = 1$:
    \begin{align*}
      D_{f}(\mathbf{p},\mathbf{q})
      &= \sum_i p_i\log p_i - \sum_i q_i\log q_i - \sum_i (1+\log q_i)(p_i-q_i) \\
      &= \sum_i p_i\log p_i - \sum_i p_i\log q_i = \sum_i p_i\log\frac{p_i}{q_i}\\
      &= \mathrm{KL}(\mathbf{p}\,\|\,\mathbf{q}).
    \end{align*}
    By Pinsker's inequality, for all distributions $\mathbf{p},\mathbf{q}\in\Delta_k$,
    \[
      \mathrm{KL}(\mathbf{p}\,\|\,\mathbf{q}) \;\ge\; \tfrac{1}{2}\,\|\mathbf{p}-\mathbf{q}\|_1^{2}.
    \]
    Therefore $D_{f}(\mathbf{p},\mathbf{q}) \ge \tfrac{1}{2}\,\|\mathbf{p}-\mathbf{q}\|_1^{2}$, which is exactly the $1$-strong convexity of $f$ w.r.t. $\|\cdot\|_1$.

    Finally, strong convexity scales linearly with positive constants. For the function $g = \alpha \cdot f$ for any $\alpha > 0$, if $f$ is $\sigma$-strongly convex, then:
    \begin{align*}
      D_{g}(\mathbf{p},\mathbf{q})
      &= \alpha f(\mathbf{p}) - \alpha f(\mathbf{q}) - \langle \alpha\nabla f(\mathbf{q}), \mathbf{p}-\mathbf{q} \rangle \\
      &\geq \alpha \frac{\sigma}{2}\,\|\mathbf{p}-\mathbf{q}\|^2.
    \end{align*}
    Thus $g$ is $\alpha\sigma$-strongly convex. Applying this to our case, we get that $R_\eta = \tfrac{1}{\eta} \cdot \sum_{i=1}^k w_i \log w_i$ is $\tfrac{1}{\eta}$-strongly convex w.r.t. $\|\cdot\|_1$.
\end{proof}

\begin{lemma}\label{lem:abs-lipschitz}
For any $\trueselect_t \in [0,1]$ and $\queryvec_t \in [0,1]^k$, the loss function $\loss_t(\mathbf{w}) = |\trueselect_t - \langle \queryvec_t, \mathbf{w} \rangle|$ is $1$-Lipschitz with respect to the $\ell_1$ norm on $\Delta_k$.
\end{lemma}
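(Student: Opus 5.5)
The plan mirrors the proof of \Cref{lem:l2-lipschitz}, replacing Cauchy--Schwarz by H\"older's inequality in its $\ell_\infty$--$\ell_1$ form (\Cref{lemma:holder_inequality}).

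Fix $\trueselect_t \in [0,1]$ and $\queryvec_t \in [0,1]^k$, and take arbitrary $\w, \w' \in \Delta_k$. First, apply the reverse triangle inequality $\bigl||a| - |b|\bigr| \le |a - b|$ with $a = \trueselect_t - \langle \queryvec_t, \w\rangle$ and $b = \trueselect_t - \langle \queryvec_t, \w'\rangle$. Since $\trueselect_t$ cancels, this gives
\[
|\loss_t(\w) - \loss_t(\w')| \le |\langle \queryvec_t, \w - \w'\rangle|.
\]

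Second, bound the inner product by H\"older's inequality on the counting measure over $[k]$:
\[
|\langle \queryvec_t, \w - \w'\rangle| \le \|\queryvec_t\|_\infty \, \|\w - \w'\|_1 .
\]
Because every coordinate of $\queryvec_t$ lies in $[0,1]$, we have $\|\queryvec_t\|_\infty \le 1$. Hence $|\loss_t(\w) - \loss_t(\w')| \le \|\w - \w'\|_1$, which is $1$-Lipschitzness with respect to $\ell_1$.

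There is no real obstacle here. The only point worth noting is that the bound holds for all $\w, \w'$, not only those in the simplex. Its usefulness is that the Lipschitz constant does not depend on $k$, which is what \Cref{thm:general_ftrl} needs together with \Cref{lem:entropy-strong-convex}. That pairing is also why the $\ell_1$ norm is the right choice for linear queries, where $\|\queryvec_t\|_2$ can be as large as $\sqrt{k}$.
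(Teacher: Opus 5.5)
Your proof is correct and follows the paper's argument step for step: the reverse triangle inequality cancels $\trueselect_t$, and H\"older's inequality with $\|\queryvec_t\|_\infty \le 1$ finishes it. Your side remark is also accurate: the bound holds for all $\w, \w'$, not only those in $\Delta_k$.
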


\begin{proof}
    For any $\w, \w' \in \Delta_k$:
    \begin{align*}
    |\loss_t(\w) - \loss_t(\w')| &= \left||\trueselect_t - \langle \queryvec_t, \w \rangle| - |\trueselect_t - \langle \queryvec_t, \w' \rangle|\right|\\
    &\leq |\langle \queryvec_t, \w \rangle - \langle \queryvec_t, \w' \rangle| \quad \text{(triangle inequality)}\\
    &= |\langle \queryvec_t, \w - \w' \rangle|\\
    &\leq \|\queryvec_t\|_\infty \cdot \|\w - \w'\|_1 \quad \text{(Hölder's inequality)}\\
    &\leq \|\w - \w'\|_1,
    \end{align*}
    where the last inequality uses $\|\queryvec_t\|_\infty \leq 1$ since $\queryvec_t \in [0,1]^k$.
\end{proof}

\begin{lemma}\label{lem:entropy-diameter}
For the negative entropy regularizer $R(\w) = \frac{1}{\eta} \sum_{i=1}^k w_i \log w_i$ on $\Delta_k$, and any $\mathbf{u} \in \Delta_k$:
$
R(\mathbf{u}) - \min_{\mathbf{w} \in \Delta_k} R(\mathbf{w}) \leq \frac{\log k}{\eta}.
$
\end{lemma}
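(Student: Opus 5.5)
The plan is to split the difference $R(\mathbf{u})-\min_{\w\in\Delta_k}R(\w)$ into its two pieces and bound each one separately. Both pieces follow from elementary facts about the negative entropy $f(\w)=\sum_{i=1}^k w_i\log w_i$ on the simplex, with the convention $0\log 0=0$. Since $R=\tfrac{1}{\eta}f$ and $\eta>0$, it suffices to show $f(\mathbf{u})-\min_{\w\in\Delta_k} f(\w)\le \log k$.

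\textbf{Upper bound on $f(\mathbf{u})$.} For any $\mathbf{u}\in\Delta_k$, each coordinate satisfies $u_i\in[0,1]$. Hence $\log u_i\le 0$ whenever $u_i>0$, so every term $u_i\log u_i\le 0$, and therefore $f(\mathbf{u})\le 0$. This is the same observation used in \eqref{eq:KL0}.

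\textbf{Lower bound on $\min_{\w} f(\w)$.} I will show $f(\w)\ge -\log k$ for every $\w\in\Delta_k$. The cleanest route reuses \eqref{eq:KL0}, which gives
\[
D_{\mathrm{KL}}(\w\,\|\,\tfrac1k\ones)=f(\w)+\log k .
\]
Combined with the nonnegativity of KL divergence, this yields $f(\w)\ge-\log k$, with equality at the uniform vector. An alternative route is Jensen's inequality applied to the concave function $\log$:
\[
-f(\w)=\sum_{i:w_i>0} w_i\log(1/w_i)\le \log\Bigl(\sum_{i:w_i>0} 1\Bigr)\le\log k .
\]
Either way, $\min_{\w\in\Delta_k} f(\w)=-\log k$.

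Combining the two bounds gives $f(\mathbf{u})-\min_{\w} f(\w)\le 0+\log k$. Dividing by $\eta$ yields the claim. The only point that needs care is the convention $0\log 0=0$ at the boundary of the simplex. It is handled by restricting every sum to the support of $\w$, as in the Jensen step. There is no substantive obstacle here. Note also that the logarithm base is whatever base $\log$ denotes in $R$, and it matches the $\log k$ in the statement.
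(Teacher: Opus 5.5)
Your proposal is correct and takes the same route as the paper: show $\sum_i u_i\log u_i\le 0$ (entropy is nonnegative) and $\min_{\w\in\Delta_k}\sum_i w_i\log w_i=-\log k$ (entropy is maximized by the uniform vector), then divide by $\eta$. The only difference is that you prove the maximum-entropy fact, via nonnegativity of the KL divergence in \eqref{eq:KL0} or via Jensen, whereas the paper simply asserts it as standard.
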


\begin{proof}
The negative entropy $-\sum_{i=1}^k w_i \log w_i$ is maximized on $\Delta_k$ by
the uniform distribution $\w = (1/k, \ldots, 1/k)$, which gives entropy
$\log k$. The minimum is 0, achieved when the distribution is concentrated on a
single element. Therefore:
\begin{equation}
R(\mathbf{u}) - \min_{\mathbf{w} \in \Delta_k} R(\mathbf{w}) = \frac{1}{\eta}\left[\sum_{i=1}^k u_i \log u_i - \min_{\mathbf{w} \in \Delta_k} \sum_{i=1}^k w_i \log w_i\right] \leq \frac{1}{\eta}(0 - (-\log k)) = \frac{\log k}{\eta}.
\end{equation}
\end{proof}

With these lemmas in place, substituting $\sigma = 1/\eta$, $L = 1$ and applying the diameter bound from Lemma~\ref{lem:entropy-diameter} into \Cref{thm:general_ftrl}, we derive the main upper bound:
\begin{corollary}\label{cor:dynamic-abs-upper}
For dynamic databases under absolute loss, using FTRL (\Cref{alg:ftrl-abs}) with negative entropy regularization function $R(\w) = \tfrac{1}{\eta}\sum_{i=1}^{k} w_i\log w_i$ and learning rate $\eta = \sqrt{\frac{\log k}{T}}$, the regret for linear queries achieves:
\begin{equation}
\regret_{\T} = \sum_{t=1}^T \loss_t(\widehat{\w}_t) - \min_{\w \in \Delta_k} \sum_{t=1}^T \loss_t(\w) \leq 2\sqrt{T \log k} = O(\sqrt{T \log k}) 
\end{equation}
\end{corollary}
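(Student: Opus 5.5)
The bound follows by plugging the three preceding lemmas into \Cref{thm:general_ftrl} and then choosing the learning rate $\eta$. \Cref{alg:ftrl-abs} with $R(\w)=\tfrac{1}{\eta}\sum_i w_i\log w_i$ is exactly FTRL on $\mathcal{K}=\Delta_\k$. Each loss $\loss_t(\w)=|\trueselect_t-\langle\queryvec_t,\w\rangle|$ is convex, being the absolute value of an affine function of $\w$.

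\textbf{Step 1: the three constants.} By \Cref{lem:abs-lipschitz}, each $\loss_t$ is $1$-Lipschitz with respect to $\|\cdot\|_1$, so $L_t=1$ and we may take $L=1$. By \Cref{lem:entropy-strong-convex}, $R$ is $\sigma=\tfrac1\eta$-strongly convex with respect to the same $\ell_1$ norm on $\Delta_\k$, so the norms match as \Cref{thm:general_ftrl} requires. By \Cref{lem:entropy-diameter}, $R(\mathbf{u})-\min_{\w\in\Delta_\k}R(\w)\le \tfrac{\log\k}{\eta}$ for every $\mathbf{u}\in\Delta_\k$.

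\textbf{Step 2: substitute.} For every comparator $\mathbf{u}\in\Delta_\k$,
\[
\sum_{t=1}^T\loss_t(\predw_t)-\sum_{t=1}^T\loss_t(\mathbf{u})\;\le\;\frac{\log\k}{\eta}+\frac{T\cdot 1^2}{1/\eta}\;=\;\frac{\log\k}{\eta}+\eta T .
\]
The right-hand side does not depend on $\mathbf{u}$, so the bound holds in particular for the minimizer $\mathbf{u}^*\in\arg\min_{\w\in\Delta_\k}\sum_t\loss_t(\w)$. This minimizer exists because the objective is continuous and $\Delta_\k$ is compact. Taking $\mathbf{u}=\mathbf{u}^*$ turns the left-hand side into $\regret_\T$ as defined in \eqref{eq:regret}.

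\textbf{Step 3: choose $\eta$.} Setting $\eta=\sqrt{\log\k/T}$ balances the two terms, each becoming $\sqrt{T\log\k}$. This gives $\regret_\T\le 2\sqrt{T\log\k}$.

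\textbf{Main obstacle.} The argument has no real difficulty; the only care needed is in applying \Cref{thm:general_ftrl} correctly. The loss is non-differentiable where $\trueselect_t=\langle\queryvec_t,\w\rangle$, but the theorem needs only convexity and Lipschitzness, and subgradients suffice. FTRL's first iterate $\predw_1$ is the uniform vector, which coincides with the minimizer of $R$, so \Cref{alg:ftrl-abs} matches the standard template. Finally, the $\log$ in the regularizer and the $\log\k$ from \Cref{lem:entropy-diameter} must be the same natural logarithm.
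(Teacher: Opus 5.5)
Your proposal is correct and takes the same route as the paper. You plug $L=1$ (\Cref{lem:abs-lipschitz}), $\sigma=1/\eta$ (\Cref{lem:entropy-strong-convex}) and the $\tfrac{\log\k}{\eta}$ range bound (\Cref{lem:entropy-diameter}) into \Cref{thm:general_ftrl} to get $\tfrac{\log\k}{\eta}+\eta T$, then set $\eta=\sqrt{\log\k/T}$ to obtain $2\sqrt{T\log\k}$; your extra remarks (existence of the comparator minimizer, $\predw_1$ being the minimizer of $R$, using the natural logarithm throughout) are harmless clarifications that the paper leaves implicit.
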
